\documentclass[11pt,a4paper,oneside]{article}
\usepackage[top=3cm, bottom=3cm, left=2cm, right=2cm]{geometry}
\usepackage[english]{babel}
\usepackage[utf8]{inputenc}
\usepackage[T1]{fontenc}

\usepackage{graphicx} 
\usepackage{xcolor}
\usepackage{lmodern}
\usepackage{bm}
\usepackage{microtype}
\usepackage{amsthm,amsmath,amssymb,mathrsfs,dsfont,mathtools, amsfonts}
\usepackage{algorithm}
\usepackage{algpseudocode} 
\usepackage{enumitem}
\usepackage[authoryear, round]{natbib}
\definecolor{dukeblue}{rgb}{0.0, 0.0, 0.61}
\usepackage[colorlinks=true, allcolors=dukeblue]{hyperref}
\usepackage{authblk}
\date{}

\title{Bayesian inference on beta diversity via feature allocation models with imperfect detection}

\author[1]{Federica Stolf}
\author[2]{Tommaso Rigon}
\author[1]{David B. Dunson}
\affil[1]{Department of Statistical Science, Duke University, Durham, NC, USA}
\affil[2]{Department of Economics, Management, and Statistics, University of Milano–Bicocca, 20126 Milano, Italy}

\providecommand{\keywords}[1]{
  \small	
  \textbf{\textit{Keywords:}} #1
  \normalsize
}

\newtheorem{theorem}{Theorem}

\newtheorem{lemma}{Lemma}
\newtheorem{proposition}{Proposition}
\theoremstyle{definition}

\newtheorem{definition}{Definition}
\newtheorem{remark}{Remark}

\newcommand{\pr}{\mathrm{pr}}

\newcommand{\V}{\mathrm{var}}

\begin{document}

\maketitle
\begin{abstract}
Beta diversity quantifies variation in species composition across ecological communities and is fundamental for understanding biodiversity patterns across space and environmental gradients. Statistical inference on beta diversity is challenging: species occurrence data are high dimensional, many species remain unobserved despite extensive sampling, and surveys are subject to imperfect detection. Existing approaches are typically based on empirical dissimilarity indices with limited uncertainty quantification or on models that rely on unrealistic exchangeability and perfect detection assumptions. We introduce a new class of Bayesian feature allocation models for partially exchangeable species occurrence data with imperfect detection. The framework combines latent feature allocation models with occupancy-based detection mechanisms, allowing heterogeneous species compositions across sites, explicitly accounting for false negatives, and accommodating the discovery of previously unobserved species. We develop coherent probabilistic inference for species sharing and beta diversity, deriving explicit posterior and predictive distributions for between-community heterogeneity, including the number of shared species across sites and the number expected under future sampling. These analytical results yield interpretable posterior summaries of compositional heterogeneity and facilitate scalable inference in high-dimensional biodiversity studies. Simulation studies and an application to global fungal biodiversity data demonstrate improved inference on species sharing and between-community diversity.

\end{abstract}

\keywords{Ecology, Indian buffet process,  occupancy model, partial exchangeability, shared species.}

\section{Introduction}

Biodiversity studies increasingly seek to characterize how species compositions vary across ecological communities and geographic regions. Quantifying this between-community heterogeneity, commonly referred to as $\beta$-diversity, is fundamental for understanding ecological processes, monitoring environmental change, and prioritizing conservation efforts \citep{tuomisto2010diversity}. A central scientific objective is to infer the extent of species sharing across locations and to predict how many previously unobserved species would be discovered through additional sampling. Such questions arise naturally in large-scale biodiversity monitoring projects, where species occurrence data are collected across multiple habitats, ecosystems, or geographic regions. 

Statistical inference on $\beta$-diversity is challenging for several reasons. Species occurrence data are typically high dimensional and sparse, with many species remaining unobserved even after extensive sampling. Ecological communities often exhibit substantial heterogeneity across locations due to differences in environmental conditions and habitat characteristics, making exchangeability assumptions unrealistic. In addition, biodiversity surveys are affected by imperfect detection: species that are present may fail to be observed because of limitations in sampling technologies and measurement protocols. Ignoring detection error can substantially bias estimates of species richness, species sharing, and between-community heterogeneity. These challenges are particularly pronounced in modern DNA meta-barcoding studies. We are motivated by the Global Spore Sampling Project (GSSP) \citep{ovaskainen2024global, abrego2024airborne}, which collected airborne fungal spore samples at locations around the world using cyclone samplers. Ecologists typically analyze these data in terms of operational taxonomic units (OTUs), obtained by assigning taxonomic labels to DNA sequences \citep{somervuo2016unbiased}. Because the proportion of sequences mapped to each OTU is not a reliable indicator of abundance, meta-barcoding studies commonly reduce the observations to binary occurrence data indicating whether a species is detected in each sample. These data are characterized by large numbers of rare species, substantial heterogeneity across locations, and non-negligible false negative rates arising from low DNA concentrations, PCR biases, and sequencing errors \citep{doi2019evaluation, schlegel2024case}.

Despite the importance of inference on $\beta$-diversity, existing approaches remain largely based on empirical dissimilarity indices with limited probabilistic interpretation and inadequate uncertainty quantification \citep{rigon2025biodiversity}. Occupancy models explicitly account for imperfect detection by separating ecological and observational processes \citep{mackenzie2002estimating,  Royle2005, Wenger2008, tobler2019joint}, and have become standard tools in ecology. However, these models are not designed to characterize latent species sharing structure across heterogeneous communities or to accommodate the continual discovery of previously unobserved species in highly diverse ecosystems.

Bayesian nonparametric feature allocation models provide a natural probabilistic framework for species occurrence data. Indian buffet processes \citep{Teh09, griffiths2011indian, Broderick2013two} and related models \citep{Jam17, Bat18, Mas22, Cam24, ghilFeature} characterize binary latent feature matrices in which rows correspond to samples and columns to a potentially unbounded collection of features. In biodiversity applications the features are species, so that new species emerge as sampling proceeds and the latent feature matrix encodes precisely the pattern of species sharing that $\beta$-diversity seeks to quantify. Existing models, however, are ill-suited to this task on two counts. First, they assume exchangeability, that is, invariance of the occurrence matrix under permutations of the samples, which precludes the systematic variation in species composition across communities that $\beta$-diversity is about. Second, they assume perfect detection, and thus attribute every absence to genuine absence rather than to a failure to observe. Partially exchangeable constructions have recently received growing attention for species sampling models \citep[e.g.,][]{franzolini2025multivariate, colombi2025many}, but substantially fewer results are available for feature allocations: \citet{shen2024double} proposed a bivariate beta process for genetic variants across two heterogeneous populations, though the approach is restricted to two groups and does not admit a tractable analytical characterization, while \citet{ghilotti2025bayesian} introduced a broad class of priors for partially exchangeable trait allocations, without addressing imperfect detection or developing inferential results for biodiversity functionals such as species sharing and $\beta$-diversity.

Motivated by these challenges, we propose the class of Multi-site Occupancy-aware Species Allocation with Imperfect deteCtion (\texttt{MOSAIC}) models, which addresses both issues at once. Partial exchangeability allows species compositions to vary systematically across sites, while an occupancy-based detection layer separates the probability that a species occupies a site from the probability that it is recorded there, so that an absence in the data need not be an absence in the field. The number of species is finite but random, a choice that preserves substantial modeling flexibility while enabling analytical tractability and scalable posterior computation; by contrast, natural extensions of infinite-activity beta process constructions to heterogeneous populations may violate fundamental finiteness conditions or require computationally prohibitive inference procedures \citep{shen2024double}.

The resulting framework supports a coherent probabilistic treatment of diversity at every level. While $\alpha$-diversity and $\gamma$-diversity, measuring within-community and global diversity, respectively, have received extensive attention in the Bayesian nonparametric literature \citep{Lijoi2007a, DeBlasi2015, ghilFeature}, this is not the case for multi-site incidence data, for which even the distribution of the number of distinct species at each site has not been characterized; and substantially less work has addressed principled inference on $\beta$-diversity, despite its central ecological importance \citep{rigon2025biodiversity}. We therefore develop the $\alpha$- and $\gamma$-diversity theory first, obtaining prior and posterior distributions for the site-specific and pooled species counts, along with predictive distributions for the species that would be discovered under further sampling. Beyond their own interest, these results underpin our treatment of $\beta$-diversity, since the same machinery delivers the distribution of the number of species shared between any pair of sites, the main ingredient of our proposed index. The predictive quantities are moreover directly relevant for biodiversity monitoring and adaptive sampling design, where decisions must be made regarding whether and where further sampling effort should be allocated. The resulting summaries of compositional heterogeneity are interpretable and account simultaneously for latent species richness, partial exchangeability, and imperfect detection. We further specify a hierarchical prior in which site-specific covariates drive the occurrence probabilities, inducing borrowing of information across sites.

The remainder of the paper is organized as follows. Section \ref{sec:pBBImodel} introduces the proposed class of partially exchangeable feature allocation models with imperfect detection,  while Section \ref{sec:theory_beta} develops their theoretical properties and studies inference on biodiversity functionals, including $\beta$-diversity. Section \ref{sec:sim_study} presents simulation studies evaluating finite-sample performance, while Section \ref{sec:application} analyzes global fungal biodiversity data from the GSSP study.

\section{The MOSAIC model} \label{sec:pBBImodel}

\subsection{Notation and model}

In this section, we define the notation and introduce our proposed class of feature allocation models for partially exchangeable data with imperfect detection. The data are sampled from $Q$ distinct sites, indexed by $q \in \{1,\dots,Q\}$; we denote by $n_q$ the number of samples from site $q$ and by $\bm{n}=(n_1,\dots,n_Q)$ the corresponding vector of sample sizes. We assume a common species reference set across sites, and we denote by $(\tilde{X}_j)_{j \ge 1}$ the sequence of all possible species labels, which are unknown a priori. Let $\mathcal{X}_q$ be the set of distinct species actually observed at site $q$ and let $K_{q,n_q}=|\mathcal{X}_q|=k_q$ be its cardinality. The total number of distinct species observed across all sites is $K_{\bm{n}}=|\bigcup_{q=1}^Q \mathcal{X}_q|=k$, and we write $X_1,\dots,X_k$ for the labels of the species actually observed, which form a subset of the complete list $(\tilde{X}_j)_{j \ge 1}$. This notation explicitly accounts for the fact that some species may remain unseen. Finally, let $C_{n_q,n_r}=|\mathcal{X}_q \cap \mathcal{X}_r|=c_{q,r}$ denote the number of species shared between the samples from sites $q$ and $r$ with $q \neq r$. Because sites may share species, in general $K_{\bm{n}} \le \sum_{q=1}^Q K_{q,n_q}$; for instance, with $Q=2$ we have $K_{\bm{n}} = K_{1,n_1}+K_{2,n_2}-C_{n_1,n_2}$. Throughout, we further assume that the species pool is finite, that is, the complete list reduces to $\tilde{X}_1,\dots,\tilde{X}_N$, so that the number of distinct species observed in the full sample satisfies $K_{\bm{n}} \le N$. The total number of species $N$ is itself unknown and is inferred from the data. Accordingly, we adopt the beta--Bernoulli (BB) class of feature allocation models \citep{griffiths2011indian, ghilFeature}, under which the distribution of $K_{\bm{n}}$ has bounded support given $N$. This choice is motivated by analytical tractability: known infinite-species models lead to considerably less tractable expressions \citep{shen2024double}.

The data can be encoded as binary occurrence indicators, one for each combination of sample, species, and site, and we distinguish between a version involving the full species list and its observed counterpart. Let $\tilde{\bm{Y}}=(\tilde{Y}_{ijq})$ with $i=1,\dots,n_q;\, j=1,\dots,N;\, q=1,\dots,Q$, where $\tilde{Y}_{ijq}=1$ if species $\tilde{X}_j$ is recorded in sample $i$ of site $q$, and $\tilde{Y}_{ijq}=0$ otherwise. A zero entry is ambiguous: species $\tilde{X}_j$ may be absent from site $q$ at the time of sampling, or present but not detected. We collect the \emph{observed} occurrences in $\bm{Y}=(Y_{ijq})$ with $i=1,\dots,n_q;\, j=1,\dots,k;\, q=1,\dots,Q$, obtained from $\tilde{\bm{Y}}$ by discarding the species that are never recorded, namely those $j$ for which $\tilde{Y}_{ijq}=0$ for every $i$ and $q$; equivalently, each species retained in $\bm{Y}$ has at least one non-zero entry. The distinction between $\tilde{\bm{Y}}$ and $\bm{Y}$ would be inconsequential if $N$ were known, since the discarded species could then be reinstated, or if all species were discovered, in which case $N=k$ and the two arrays coincide. These are not realistic assumptions in ecological applications, where neither the labels nor the number of undiscovered species is known a priori, and hence $\tilde{\bm{Y}}$ carries strictly more information than $\bm{Y}$. In what follows we specify a model for $\tilde{\bm{Y}}$, which induces a model for the observed indicators $\bm{Y}$.

Let $\bm{\eta}_q = (\eta_{1q},\dots,\eta_{Nq})$ and $\bm{\gamma}_q = (\gamma_{1q},\dots,\gamma_{Nq})$, for $q=1,\dots,Q$, collect independent random probabilities representing the species- and site-specific occupancy and detectability components, respectively. In our \texttt{MOSAIC} model, given these parameters, the entries of $\tilde{\bm{Y}}$ are conditionally independent Bernoulli random variables,
\begin{equation} \label{eq:bernoulli_prod}
    \tilde{Y}_{ijq} \mid \eta_{jq}, \gamma_{jq} \overset{\textup{ind}}{\sim} \mathrm{Bernoulli}(\eta_{jq} \gamma_{jq}),
\end{equation}
for $i=1,\dots,n_q$, $j=1,\dots,N$ and $q=1,\dots,Q$. The factorization in \eqref{eq:bernoulli_prod} separates the occupancy and detection components of the model: $\eta_{jq}$ is the probability that species $\tilde{X}_j$ occupies site $q$, and $\gamma_{jq}$ is the probability that it is detected, given presence. The two differ because a species present at a site may go unrecorded, so that the observed occurrences are a thinned version of the underlying presence/absence pattern. This is made precise by a data augmentation, in which each $\tilde{Y}_{ijq}$ is written as the product of two independent Bernoulli random variables representing the occupancy and detection events separately. Let $\tilde{\bm{W}}=(\tilde{W}_{ijq})$ and $\tilde{\bm{T}}=(\tilde{T}_{ijq})$, with $i=1,\dots,n_q$, $j=1,\dots,N$ and $q=1,\dots,Q$, be augmented binary arrays encoding respectively the true presence of each species and its potential detection, so that $\tilde{W}_{ijq} \mid \eta_{jq} \overset{\textup{ind}}{\sim} \mathrm{Bernoulli}(\eta_{jq})$ and $\tilde{T}_{ijq} \mid \gamma_{jq} \overset{\textup{ind}}{\sim} \mathrm{Bernoulli}(\gamma_{jq})$. Then $\tilde{Y}_{ijq} = \tilde{W}_{ijq}\tilde{T}_{ijq}$, so that a species is recorded if and only if it is both present and detected. Analogously, let $\mathcal{W}_q$ be the set of distinct species present in at least one of the $n_q$ samples from site $q$ according to $\tilde{\bm{W}}$, and let $L_{q,n_q}=|\mathcal{W}_q|=\ell_q$ be its cardinality. We write $L_{\bm{n}}=|\bigcup_{q=1}^Q \mathcal{W}_q|=\ell$ for the number of species present in at least one sample across sites, and $U_{n_q,n_r}=|\mathcal{W}_q \cap \mathcal{W}_r|=u_{q,r}$, with $q \neq r$, for the number of species present at both sites $q$ and $r$. We then let $\bm{W}=(W_{ijq})$, with $i=1,\dots,n_q$, $j=1,\dots,\ell$ and $q=1,\dots,Q$, be the array obtained from $\tilde{\bm{W}}$ by discarding the species that are never present, so that each retained species has at least one non-zero entry. Unlike their observed counterparts, none of these quantities is available in practice, and they satisfy $K_{q,n_q} \le L_{q,n_q}$, $K_{\bm{n}} \le L_{\bm{n}} \le N$ and $C_{n_q,n_r} \le U_{n_q,n_r}$. 

We now specify priors for our \texttt{MOSAIC} model, with two inferential goals in mind: (i) estimating the site and species specific parameters $\eta_{jq}$ and $\gamma_{jq}$;  (ii)  estimating the number of unseen species $N-k$. To achieve the latter goal, we need to estimate $N$, the total number of species.
 To this end, we treat $N$ as random and assume $N \sim \mathrm{Poisson}(\lambda)$, with $\lambda>0$. 
 Moreover, we assume that the parameters $\gamma_{jq}$ and $\eta_{jq}$ are conditionally  independent for $j=1,\dots,N$ and $q=1,\dots,Q$ given a set of parameters $(\alpha_q, \theta_q, \sigma_q)$ and follow a beta distribution. We will further denote the corresponding vectors of parameters as  
$ \bm{\alpha} = (\alpha_1,\dots, \alpha_Q)$, $\bm{\theta} = (\theta_1, \dots, \theta_Q)$ and $\bm{\sigma} = (\sigma_1, \dots, \sigma_Q)$.
This implies the following specification for the \texttt{MOSAIC} model with parameters $(\lambda, \bm{\alpha}, \bm{\theta}, \bm{\sigma})$ 
\begin{equation} \label{eq:mu_BBimperfect}
\begin{aligned}
     N&\sim \mathrm{Poisson}(\lambda),\\
    \eta_{jq} &\overset{\textup{ind}}{\sim} \mathrm{Beta}\{-\alpha_q, \sigma_{q}(\alpha_q+\theta_q)\}, \quad (j=1,\dots,N, q=1,\dots,Q),\\
    \gamma_{jq} &\overset{\textup{ind}}{\sim} \mathrm{Beta}\{-\alpha_q + \sigma_{q}(\alpha_q+\theta_q), (1-\sigma_q)(\alpha_q+\theta_q)\}, \quad (j=1,\dots,N, q=1,\dots,Q),
\end{aligned}
\end{equation}
where $\sigma_q \in (0,1)$, $\alpha_q<0$, $\theta_q>-\alpha_q$ for $q=1,\dots,Q$. By a result of \citet{james1972products} on products of independent beta random variables,
\begin{equation}\label{eq:marginal_beta}
\eta_{jq}\gamma_{jq} \overset{\textup{ind}}{\sim} \mathrm{Beta}(-\alpha_q,\,  \alpha_q + \theta_q), \quad (j=1,\dots,N,\ q=1,\dots,Q),
\end{equation}
and in particular $E(\eta_{jq}\gamma_{jq}) = -\alpha_q/\theta_q$. Crucially, \eqref{eq:marginal_beta} holds for every value of $\sigma_q\in(0,1)$: the parameter $\sigma_q$ redistributes mass between occupancy and detectability. Its role is made explicit by the expectations
\begin{equation*}
    E(\eta_{jq}) = \frac{-\alpha_q}{\sigma_q(\alpha_q+\theta_q)-\alpha_q}, \qquad E(\gamma_{jq}) = \frac{-\alpha_q+\sigma_q(\alpha_q+\theta_q)}{\theta_q},
\end{equation*}
which are respectively decreasing and increasing in $\sigma_q$. Values of $\sigma_q$ close to $1$ thus correspond to near-perfect detectability, whereas values close to $0$ describe the least favorable regime, in which every species occupies every site and the observed sparsity is ascribed entirely to detection failure. In the limit $\sigma_q\to1$ the law of $\gamma_{jq}$ degenerates at one, giving $E(\eta_{jq}) = -\alpha_q/\theta_q$, which recovers the expected occurrence probability of a partially exchangeable BB model with perfect detection.

This framework borrows information across sites in two ways. First, all sites share a common total number of species $N$.  Second, it introduces dependence among species occurrence probabilities through site-specific covariates via a hierarchical formulation, detailed in Section \ref{sec:prior_specify}.

\begin{remark}\label{rem:identifiability}
The occupancy indicators $\tilde{W}_{ijq}$ vary across samples $i$ within a site, so that occupancy is understood as presence at the time of collection of sample $i$ rather than as a fixed site attribute. This departs from standard occupancy formulations \citep[e.g.,][]{mackenzie2002estimating, Wenger2008}, in which a single indicator is shared by all samples from a site, but it yields substantially simpler expressions for several quantities of interest, as discussed in Section~\ref{sec:distr_theory} and Section~\ref{sec:theory}. The price to pay is that the observed occurrences depend on $\eta_{jq}$ and $\gamma_{jq}$ only through their product, so that occupancy and detectability cannot be separated on the basis of $\tilde{\bm{Y}}$ alone. The parameter $\sigma_q$ trades one against the other while leaving the law of the observed occurrences unchanged; the split between the two components is therefore driven entirely by the choice of $\sigma_q$, which cannot be learned from the data. Accordingly, we do not estimate $\sigma_q$ and treat it instead as a sensitivity parameter, exploring different detectability scenarios ranging from $\sigma_q \to 1$ (perfect detection) to values implying substantial under-detection.
\end{remark}

\subsection{Relationship with the beta--Bernoulli model and distribution theory}\label{sec:distr_theory}

In this section, we relate \texttt{MOSAIC} to the partially exchangeable BB model of \citet{ghilotti2025bayesian}, which did not consider undetected species or develop inferences for beta diversity. We show that the two induce the same law for the observed occurrences and therefore differ only by data augmentation. Our detectability layer provides an interpretation of the BB parameters that has not been exploited previously. Let $\pi_{jq}$, for $j=1,\dots,N$ and $q=1,\dots,Q$, be independent random probabilities, which we collect into the vectors $\bm{\pi}_q = (\pi_{1q},\dots,\pi_{Nq})$ for each site $q$. The partially exchangeable BB model is 
\begin{equation} \label{eq:muBB_marginal}
\begin{aligned}
      &\tilde{Y}_{ijq} \mid \pi_{jq} \overset{\textup{ind}}{\sim} \mathrm{Bernoulli}(\pi_{jq}), \quad \quad (i=1,\dots,n_q,j=1,\dots,N, q=1,\dots,Q),\\
       &N\sim \mathrm{Poisson}(\lambda),\\
     &\pi_{jq} \overset{\textup{ind}}{\sim} \mathrm{Beta}(-\alpha_q, \alpha_q + \theta_q), \quad (j=1,\dots,N, q=1,\dots,Q), 
\end{aligned}
\end{equation}
with $\alpha_q<0$ and $\theta_q>-\alpha_q$. As can be recognized by combining \eqref{eq:mu_BBimperfect} with \eqref{eq:marginal_beta}, our \texttt{MOSAIC} model is equivalent to the partially exchangeable BB model of \citet{ghilotti2025bayesian}, having set
$\pi_{jq} = \eta_{jq}\gamma_{jq}$, in the sense that it leads to the same marginal distribution for $\bm{Y}$, for any fixed value of the parameters $(\lambda, \bm{\alpha}, \bm{\theta})$. Due to the crucial relevance of this result, we state it in the following theorem, although it is essentially a consequence of \citet{james1972products}.

\begin{theorem} \label{th:ProdBeta}
Let $\bm{Y}$ follow the \texttt{MOSAIC} model with parameters
$(\lambda, \bm{\alpha}, \bm{\theta}, \bm{\sigma})$ and set $\pi_{jq} = \eta_{jq}\gamma_{jq}$. Then, for every
$\bm{\sigma} = (\sigma_1,\dots,\sigma_Q)$, the marginal distribution of $(\bm{Y}, K_{\bm{n}})$ coincides with that of the partially exchangeable BB model \eqref{eq:muBB_marginal}
with parameters $(\lambda, \bm{\alpha}, \bm{\theta})$.
\end{theorem}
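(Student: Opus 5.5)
The plan is to first show that, for fixed $N$, the conditional law of the full array $\tilde{\bm{Y}}$ under \texttt{MOSAIC} coincides with its conditional law under the partially exchangeable BB model \eqref{eq:muBB_marginal}. Both claims about $(\bm{Y}, K_{\bm{n}})$ will then follow by integrating out $N\sim\mathrm{Poisson}(\lambda)$, which has the same prior in both models, and by applying the same measurable map to $\tilde{\bm{Y}}$.

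\emph{Step 1 (product of betas).} Fix $j$ and $q$. By \eqref{eq:mu_BBimperfect}, $\eta_{jq}$ and $\gamma_{jq}$ are independent with laws $\mathrm{Beta}(a,b)$ and $\mathrm{Beta}(a+b,c)$, where
\[
a=-\alpha_q,\qquad b=\sigma_q(\alpha_q+\theta_q),\qquad c=(1-\sigma_q)(\alpha_q+\theta_q).
\]
The result of \citet{james1972products} states that the product of independent $\mathrm{Beta}(a,b)$ and $\mathrm{Beta}(a+b,c)$ variables is $\mathrm{Beta}(a,b+c)$. This gives \eqref{eq:marginal_beta}, that is, $\pi_{jq}\sim\mathrm{Beta}(-\alpha_q,\alpha_q+\theta_q)$, for every $\sigma_q\in(0,1)$. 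The constraints $\alpha_q<0$ and $\theta_q>-\alpha_q$ make all three parameters positive.

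If a self-contained check is wanted, the identity can be verified by moments. We have
\[
E\{(\eta\gamma)^m\}=\frac{(a)_m}{(a+b)_m}\cdot\frac{(a+b)_m}{(a+b+c)_m}=\frac{(a)_m}{(a+b+c)_m},
\]
which are the moments of $\mathrm{Beta}(a,b+c)$. Since the distribution lives on $[0,1]$, its moments determine it.

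Independence across $(j,q)$ is preserved, because the pairs $(\eta_{jq},\gamma_{jq})$ are independent across $(j,q)$. Hence, given $N$, the vector $(\pi_{jq})$ has exactly the law assumed in \eqref{eq:muBB_marginal}.

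\emph{Step 2 (conditional law of $\tilde{\bm{Y}}$).} Under \eqref{eq:bernoulli_prod}, given $(\bm{\eta},\bm{\gamma})$ and $N$, the entries of $\tilde{\bm{Y}}$ are independent $\mathrm{Bernoulli}(\eta_{jq}\gamma_{jq})$. The likelihood therefore depends on the parameters only through $\pi_{jq}=\eta_{jq}\gamma_{jq}$:
\[
\prod_{j,q}\pi_{jq}^{m_{jq}}(1-\pi_{jq})^{n_q-m_{jq}},\qquad m_{jq}=\sum_i \tilde y_{ijq}.
\]
Integrating against the law of $(\pi_{jq})$ from Step 1 gives the same probability for every array $\tilde{\bm y}$ as under \eqref{eq:muBB_marginal}. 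Mixing over $N$ then gives equality of the joint laws of $(N,\tilde{\bm{Y}})$ in the two models.

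\emph{Step 3 (passing to $(\bm{Y},K_{\bm{n}})$).} $K_{\bm{n}}$ is the number of columns of $\tilde{\bm{Y}}$ with a nonzero entry, and $\bm{Y}$ is obtained by deleting the all-zero columns. Both are measurable functions of $(N,\tilde{\bm{Y}})$, so their joint law is the pushforward of a common law and coincides in the two models.

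The one subtle point is the ordering and labelling of the retained columns. Species labels are exchangeable, so we should either fix the same labelling convention in both models, such as order of first appearance or the original index order, or regard $\bm{Y}$ as an equivalence class of arrays up to column permutation. In either case the map is the same in both models, so the pushforward argument goes through.

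The main obstacle is Step 1. It is the only non-bookkeeping ingredient, and it reduces to the cited beta-product identity, which the moment argument above settles cleanly.
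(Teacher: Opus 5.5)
Your proposal is correct and follows the paper's proof, which applies the beta-product theorem of \citet{james1972products} with the same parameter identification $a=-\alpha_q$, $b=\sigma_q(\alpha_q+\theta_q)$, $c=(1-\sigma_q)(\alpha_q+\theta_q)$. Your moment check and your explicit reduction from the law of $\pi_{jq}$ to that of $(\bm{Y},K_{\bm{n}})$ are sound; the paper leaves these details implicit.
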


Theorem~\ref{th:ProdBeta} has practical, and not merely conceptual, consequences. Under the model of \citet{ghilotti2025bayesian} and \texttt{MOSAIC}, the marginal distribution of $\bm{Y}$, that is, the \emph{likelihood function} for $(\lambda, \bm{\alpha}, \bm{\theta})$, and the posterior distributions of $\pi_{jq}$ and $N$ are all available in closed form.  The marginal law of $\bm{Y}$ under \texttt{MOSAIC}, or more precisely the joint law of $(\bm{Y} = \bm{y}, K_{\bm{n}} = k)$, depends on $\bm{y}$ only through the sufficient statistics given by the marginal occurrence frequencies of each species. Let $M_{jq} = \sum_{i=1}^{n_q} Y_{ijq}$ denote the frequency of species $X_j$ in site $q$, let $\bm{M} = (M_{jq})$ be the
corresponding matrix, and let $\bm{m} = (m_{jq})$ be its observed counterpart, for $j=1,\dots,k$ and $q=1,\dots,Q$. Then
\begin{equation}
\label{eq:pEFPF}
\begin{split}
\pr(\bm{Y} = \bm{y}, K_{\bm{n}} = k \mid \lambda, \bm{\alpha}, \bm{\theta}) = \; & \frac{\lambda^k}{k!}
\exp\left(-\lambda \left\{ 1 - \prod_{q=1}^Q
\frac{(\alpha_q+\theta_q)_{n_q}}{(\theta_q)_{n_q}} \right\}\right) \\
& \times \prod_{q=1}^Q \left[ \left\{\frac{-\alpha_q}{(\theta_q)_{n_q}}\right\}^k
\prod_{j=1}^k (1-\alpha_q)_{m_{jq}-1}  (\alpha_q + \theta_q)_{n_q - m_{jq}}
\right],
\end{split}
\end{equation}
where $(x)_m = \Gamma(x+m)/\Gamma(x)$ is the Pochhammer symbol and $\Gamma(x)$
is the gamma function. The above result is a special case of Theorem~1 in \citet{ghilotti2025bayesian}; for completeness, the derivation is reported in the Supplementary Material. We refer to \eqref{eq:pEFPF} as a \emph{partially exchangeable feature probability function} (pEFPF), since it specializes their partially exchangeable trait probability function to the case of binary traits. It is also related to the partially exchangeable partition probability function of \citet{franzolini2025multivariate}.

We now characterize the posterior distributions of the occurrence probabilities $\pi_{jq} = \eta_{jq} \gamma_{jq}$ and of the total number of species $N$, both of which admit simple closed-form expressions. Let $N' = N - k$ denote the number of unseen species given $\bm{Y}$, and split the occurrence probabilities into two groups: those associated with the observed species $X_1,\dots, X_k$, denoted by $\pi^*_{jq}$ for $j=1,\dots,k$, and those associated with the unobserved ones, denoted by $\pi'_{jq}$ for $j=1,\dots,N'$, in both cases with $q=1,\dots,Q$.

\begin{theorem}[\citealp{ghilotti2025bayesian}]  \label{th:posterior_known}
Let $\bm{Y}$ follow the \texttt{MOSAIC} model with parameters $(\lambda, \bm{\alpha}, \bm{\theta}, \bm{\sigma})$ and set $\pi_{jq} = \eta_{jq}\gamma_{jq}$. Then the posterior distribution of the number of unseen species is
\begin{equation*}
    N'\mid \bm{Y} \sim \mathrm{Poisson}\{\lambda \, p_0(\bm{\alpha}, \bm{\theta}, \bm{n})\}, \qquad p_0(\bm{\alpha}, \bm{\theta}, \bm{n}) = \prod_{q=1}^Q p_{0,q}(\alpha_q, \theta_q, n_q),
\end{equation*}
with $p_{0,q}(\alpha_q, \theta_q, n_q) = (\alpha_q + \theta_q)_{n_q}/(\theta_q)_{n_q}$. Moreover, the occurrence probabilities are independent across species and sites given $\bm{Y}$, with
\begin{equation*}
\begin{aligned}
\pi_{jq}^* \mid \bm{Y} &\overset{\textup{ind}}{\sim} \mathrm{Beta}(m_{jq}-\alpha_q,\ \alpha_q+\theta_q+n_q-m_{jq}), &&\quad (j=1,\dots,k),\\
\pi_{jq}' \mid \bm{Y}  &\overset{\textup{ind}}{\sim} \mathrm{Beta}(-\alpha_q,\ \alpha_q+\theta_q+n_q), &&\quad (j=1,\dots,N'),
\end{aligned}
\end{equation*}
for $q=1,\dots,Q$. In particular, the posterior does not depend on $\bm{\sigma}$.
\end{theorem}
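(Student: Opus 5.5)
By Theorem~\ref{th:ProdBeta}, the pair $(\bm{Y},K_{\bm{n}})$ and the vectors $\pi_{jq}=\eta_{jq}\gamma_{jq}$ have exactly the joint law they have under the partially exchangeable BB model \eqref{eq:muBB_marginal}. So I will carry out the whole computation in \eqref{eq:muBB_marginal}. Independence from $\bm{\sigma}$ is then automatic, because $\bm{\sigma}$ never enters that model. One point needs a short justification: Theorem~\ref{th:ProdBeta} concerns the marginal of the data, but we need the joint law of $(N,\bm{\pi},\tilde{\bm{Y}})$. This follows because, given $N$, the $\pi_{jq}$ are i.i.d.\ $\mathrm{Beta}(-\alpha_q,\alpha_q+\theta_q)$ by \eqref{eq:marginal_beta}, and $\tilde{\bm{Y}}$ depends on $(\eta,\gamma)$ only through $\pi$.

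The plan is to write the joint law of $N$, the labelled array $\tilde{\bm{Y}}$ and $\bm{\pi}$, and then pass to the unlabelled observed array $\bm{Y}$.

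\textbf{Step 1 (joint density).} Given $N$ and $\bm{\pi}$, the density of $\tilde{\bm{Y}}$ is $\prod_{j,q}\pi_{jq}^{m_{jq}}(1-\pi_{jq})^{n_q-m_{jq}}$. Columns that are never recorded have $m_{jq}=0$ for all $q$. The event $\{\bm{Y}=\bm{y},K_{\bm{n}}=k\}$ corresponds to choosing which $k$ of the $N$ labels are observed, and their order relative to the canonical ordering of $\bm{Y}$. This gives a combinatorial factor of $N!/\{(N-k)!\,k!\}$ or similar. That factor is a constant in $\bm{\pi}$, so it matters only for the law of $N$.

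\textbf{Step 2 (factorization).} Multiply by the Poisson prior $e^{-\lambda}\lambda^N/N!$ and by the product of beta densities. The joint then factorizes into three parts.
\begin{enumerate}
\item A product over observed $(j,q)$ of $\pi^{m_{jq}-\alpha_q-1}(1-\pi)^{\alpha_q+\theta_q+n_q-m_{jq}-1}$. This gives the independent $\mathrm{Beta}(m_{jq}-\alpha_q,\alpha_q+\theta_q+n_q-m_{jq})$ posteriors.
\item A product over the $N'$ unseen columns of $\pi^{-\alpha_q-1}(1-\pi)^{\alpha_q+\theta_q+n_q-1}$. This gives the $\mathrm{Beta}(-\alpha_q,\alpha_q+\theta_q+n_q)$ posteriors, conditionally on $N'$.
\item Terms depending only on $N'$.
\end{enumerate}

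\textbf{Step 3 (law of $N'$).} Integrate out each unseen column. Each contributes $\prod_q B(-\alpha_q,\alpha_q+\theta_q+n_q)/B(-\alpha_q,\alpha_q+\theta_q)=\prod_q(\alpha_q+\theta_q)_{n_q}/(\theta_q)_{n_q}=p_0$, using the Pochhammer ratio of gamma functions. Combining this with $\lambda^{N}/N!$ and the factor $N!/(N-k)!$ leaves a weight proportional to $(\lambda p_0)^{N'}/N'!$ in $N'$. Hence $N'\mid\bm{Y}\sim\mathrm{Poisson}(\lambda p_0)$. Summing over $N'$ also reproduces the exponential term in \eqref{eq:pEFPF}, which serves as a consistency check.

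\textbf{Main obstacle.} The only delicate point is the bookkeeping in Step 1. It must be checked that the label-assignment factor is $N!/(N-k)!$, up to terms not involving $N$ and $\bm{\pi}$, so that the $N!$ cancels and a clean Poisson kernel in $N'$ remains. It must also be checked that the exchangeability of the i.i.d.\ columns makes the posterior of the $\pi^*_{jq}$ attach to the observed species irrespective of which original labels they carried. The remaining steps are routine beta–binomial conjugacy.
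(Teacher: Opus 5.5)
Your proposal is correct and takes essentially the same route as the paper. The paper cites \citet{ghilotti2025bayesian} for this result and gives no separate proof, but its derivation of the pEFPF in the Supplementary Material does the same computation: reduction to the BB model through $\pi_{jq}=\eta_{jq}\gamma_{jq}$, the $\binom{N}{k}$ label factor, integration of each unrecorded column to produce $p_0(\bm{\alpha},\bm{\theta},\bm{n})$, and the resulting kernel $\{\lambda p_0(\bm{\alpha},\bm{\theta},\bm{n})\}^{N-k}/(N-k)!$, with the beta posteriors being the conjugate integrands there. Your remark that Theorem~\ref{th:ProdBeta} must be strengthened from the marginal law of $\bm{Y}$ to the joint law of $(N,\bm{\pi},\tilde{\bm{Y}})$, which holds because $\tilde{\bm{Y}}$ depends on $(\bm{\eta},\bm{\gamma})$ only through $\bm{\pi}$, makes explicit a step the paper leaves implicit.
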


We note that $p_{0,q}(\alpha_q, \theta_q, n_q) = (\alpha_q + \theta_q)_{n_q}/(\theta_q)_{n_q} = E\{(1 - \pi_{jq})^{n_q}\}$ is the probability that a species goes unrecorded in the $n_q$ samples from site $q$, and that $p_0(\bm{\alpha}, \bm{\theta}, \bm{n}) = \prod_{q=1}^Q p_{0,q}(\alpha_q, \theta_q, n_q)$ is the probability that it goes unrecorded at every site. We will show in Section~\ref{sec:theory} that $E(K_{\bm{n}}) = \lambda \{1 - p_0(\bm{\alpha}, \bm{\theta}, \bm{n})\}$, so that $1 - p_0(\bm{\alpha}, \bm{\theta}, \bm{n})$ and $p_0(\bm{\alpha}, \bm{\theta}, \bm{n})$ admit a direct interpretation as the expected fractions of the species pool that are respectively discovered and left undiscovered.  
Moreover, the posterior expectations of the occurrence probabilities are
$E(\pi^*_{jq} \mid \bm{Y}) = (m_{jq}-\alpha_q)/(\theta_q+n_q)$ and $E(\pi'_{jq} \mid \bm{Y}) = -\alpha_q / (\theta_q+n_q)$, for $j=1,\dots,k$ and $j=1,\dots,N'$ respectively, and $q=1,\dots,Q$. The first is a shrunken version of the empirical occurrence frequency $m_{jq}/n_q$, with $-\alpha_q$ acting as a pseudo-count and $\theta_q$ as a pseudo-sample size. Finally, the parameters above are not known in practice and must themselves be inferred from the data, possibly borrowing information across the $Q$ sites. This is achieved in Section~\ref{sec:prior_specify} through a hierarchical specification inducing priors on $\alpha_q$ and $\theta_q$ and a Gamma prior for $\lambda$. The expressions of Theorem~\ref{th:posterior_known} nonetheless remain valuable, since they are transparent to interpret and can be embedded within an MCMC scheme, where the availability of conditional distributions in closed form yields substantial computational savings. These expressions can moreover be extended beyond the Poisson case: placing a Gamma prior on $\lambda$ induces a negative binomial distribution for $N$, for which analogous closed forms are available, as detailed in the Supplementary Material.

\section{Biodiversity quantification and inference}  \label{sec:theory_beta}

\subsection{Species richness and shared species} \label{sec:theory}

In this section, we investigate the theoretical properties of \texttt{MOSAIC} for key quantities of interest in biodiversity studies. 
Among the many diversity measures, species richness, the total number of species in a community, is arguably the simplest and most widely used. In the multi-site setting a distinction is needed: the richness of a single site, or $\alpha$-diversity, refers to one site at a time, whereas the richness across sites, or $\gamma$-diversity, refers to the pooled species list. The results below concern both, together with the number of shared species, a priori and a posteriori. The study of species richness is well established in the Bayesian nonparametric literature for exchangeable data: see \citet{Lijoi2007a, DeBlasi2015, zito2023bayesian, rigon2025biodiversity} for abundance data and \citet{ghilFeature} for incidence data. In the multi-site case, that is, under partial exchangeability, closed-form expressions are instead scarce, with the notable exception of \citet{colombi2025many} for abundance data. The findings below are new and were not investigated in \citet{ghilotti2025bayesian}; involving essentially only Poisson distributions, they are among the first available in the literature.

We provide results based both on the observed data $\bm{Y}$ and on the latent occupancies $\bm{W}$. Recall that $K_{q,n_q}$ and $K_{\bm{n}}$ denote the number of distinct species observed at site $q$ and across all sites, respectively, and let $L_{q,n_q}$ and $L_{\bm{n}}$ be their latent counterparts based on $\bm{W}$, so that $K_{\bm{n}} \le L_{\bm{n}} \le N$. In the terminology above, $K_{q,n_q}$ and $L_{q,n_q}$ are the observed and latent $\alpha$-diversities of site $q$, whereas $K_{\bm{n}}$ and $L_{\bm{n}}$ are their $\gamma$-diversity analogues; all four depend on the sampling effort and are bounded by the size $N$ of the species pool. In ecological applications, the expectations $E(K_{\bm{1}}), \dots, E(K_{\bm{n}})$ provide model-based estimates of the global rarefaction curve \citep{gotelli2001biodiveristy, zito2023bayesian} in terms of the global sampling effort $\bm{n}$. Analogously, the sequence $E(K_{q,1}), \dots, E(K_{q,n_q})$ defines a model-based estimate of the rarefaction curve at site $q$ as a function of the sampling effort $n_q$.  Recall also that $p_{0,q}(\alpha_q, \theta_q, n_q) = (\alpha_q + \theta_q)_{n_q}/(\theta_q)_{n_q}$ is the probability that a species goes unrecorded in the $n_q$ samples from site $q$, and that $p_{0}(\bm{\alpha}, \bm{\theta}, \bm{n}) = \prod_{q=1}^Q p_{0,q}(\alpha_q, \theta_q, n_q)$ is the probability that it goes unrecorded at every site. Conversely, we let $p_{1,q}(\alpha_q, \theta_q, n_q) := 1 - p_{0,q}(\alpha_q, \theta_q, n_q)$ be the probability that a species is recorded at least once at site $q$, and $p_{1}(\bm{\alpha}, \bm{\theta}, \bm{n}) := 1 - p_{0}(\bm{\alpha}, \bm{\theta}, \bm{n})$ the probability that it is recorded at least once overall. Finally, we write $\theta_q(\sigma_q) := \sigma_q(\alpha_q + \theta_q) - \alpha_q$ and $\bm{\theta}(\bm{\sigma}) := \{\theta_1(\sigma_1), \dots, \theta_Q(\sigma_Q)\}$, noting that $\theta_q(1) = \theta_q$. The following proposition provides the marginal distributions of these quantities.

\begin{proposition}[\emph{A priori} distinct species] \label{pr:Kprior}
Under the \texttt{MOSAIC} model with parameters $(\lambda, \bm{\alpha}, \bm{\theta}, \bm{\sigma})$, the number of distinct species observed at site $q$ and across sites is distributed as
\begin{equation*}
K_{q,n_q} \sim \mathrm{Poisson}\{\lambda \, p_{1,q}(\alpha_q, \theta_q, n_q) \}, \quad K_{\bm{n}} \sim \mathrm{Poisson}\{\lambda \, p_1(\bm{\alpha}, \bm{\theta}, \bm{n}) \}, \quad (q=1,\dots,Q),
\end{equation*}
whereas the number of distinct species present at site $q$ and across sites is distributed as
\begin{equation*}
L_{q,n_q} \sim \mathrm{Poisson}\{\lambda \, p_{1,q}(\alpha_q, \theta_q(\sigma_q), n_q) \}, \quad L_{\bm{n}} \sim \mathrm{Poisson}\{\lambda \, p_1(\bm{\alpha}, \bm{\theta}(\bm{\sigma}), \bm{n}) \}, \quad (q=1,\dots,Q).
\end{equation*}
\end{proposition}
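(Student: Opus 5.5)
The plan is to use Poisson thinning. Given $N$, the species are independent and identically distributed across $j$: species $j$ carries the parameters $(\eta_{jq},\gamma_{jq})_{q=1}^Q$ and the indicators $(\tilde Y_{ijq}, \tilde W_{ijq})$, drawn independently of the other species. Each of the counts in the statement is a sum over $j=1,\dots,N$ of indicators of the form $\mathds{1}\{\text{species } j \text{ satisfies event } A\}$.

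\textbf{Observed counts.} The events are:
\begin{itemize}[label={}]
\item for $K_{q,n_q}$, that $\tilde Y_{ijq}=1$ for some $i\le n_q$;
\item for $K_{\bm n}$, that $\tilde Y_{ijq}=1$ for some $i$ and some $q$.
\end{itemize}
Because $N\sim\mathrm{Poisson}(\lambda)$ and the indicators are i.i.d.\ Bernoulli$(p_A)$ given $N$, the sum is $\mathrm{Poisson}(\lambda p_A)$. This follows from the probability generating function:
\[
E\{(1-p_A+p_As)^N\}=\exp\{-\lambda p_A(1-s)\}.
\]
It therefore suffices to compute $p_A$ for each event.

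To compute $p_A$, marginalize over the parameters using Theorem~\ref{th:ProdBeta}, so that $\pi_{jq}=\eta_{jq}\gamma_{jq}\sim\mathrm{Beta}(-\alpha_q,\alpha_q+\theta_q)$ independently across $q$. The probability that species $j$ is unrecorded at site $q$ is
\[
E\{(1-\pi_{jq})^{n_q}\}=\frac{B(-\alpha_q,\alpha_q+\theta_q+n_q)}{B(-\alpha_q,\alpha_q+\theta_q)}=\frac{(\alpha_q+\theta_q)_{n_q}}{(\theta_q)_{n_q}}=p_{0,q}.
\]
This gives $p_A=p_{1,q}$ for $K_{q,n_q}$. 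By independence across sites, the probability of being unrecorded everywhere is $\prod_q p_{0,q}=p_0$, which gives $p_A=p_1$ for $K_{\bm n}$.

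\textbf{Latent counts.} For $L_{q,n_q}$ and $L_{\bm n}$ the same argument applies with $\tilde W_{ijq}\mid\eta_{jq}\sim\mathrm{Bernoulli}(\eta_{jq})$ in place of $\tilde Y$. The only change is the law of $\eta_{jq}$, which is $\mathrm{Beta}\{-\alpha_q,\sigma_q(\alpha_q+\theta_q)\}$. Writing $\theta_q(\sigma_q)=\sigma_q(\alpha_q+\theta_q)-\alpha_q$, the second parameter equals $\alpha_q+\theta_q(\sigma_q)$. So $\eta_{jq}\sim\mathrm{Beta}\{-\alpha_q,\alpha_q+\theta_q(\sigma_q)\}$, which has the same form as the law of $\pi_{jq}$ with $\theta_q$ replaced by $\theta_q(\sigma_q)$. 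The previous Pochhammer computation then yields $p_{0,q}\{\alpha_q,\theta_q(\sigma_q),n_q\}$ directly.

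The conditions for this Beta law to be valid are $-\alpha_q>0$ and $\sigma_q(\alpha_q+\theta_q)>0$. Both hold under the stated constraints $\alpha_q<0$, $\theta_q>-\alpha_q$ and $\sigma_q\in(0,1)$.

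\textbf{Main obstacle.} The only delicate point is justifying the thinning step. The species labels are exchangeable, and the counts do not depend on which labels are discarded, so counting species of $\tilde{\bm Y}$ (respectively $\tilde{\bm W}$) with at least one nonzero entry in the relevant block is the same as counting the retained species of $\bm Y$ (respectively $\bm W$). Once this identification is made, independence across $j$ given $N$, together with independence across $q$ of the parameters, is all that is needed. No joint computation across species is required, which is what makes the marginal laws plain Poisson.
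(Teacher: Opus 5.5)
Your proposal is correct and follows essentially the paper's route: conditionally on $N$, the per-species indicators are i.i.d.\ Bernoulli with success probability $1-E\{(1-\pi_{1q})^{n_q}\}$ (or $1-\prod_q E\{(1-\pi_{1q})^{n_q}\}$ across sites), so the count is binomial and Poisson thinning gives the stated laws, while $\eta_{jq}\sim\mathrm{Beta}\{-\alpha_q,\alpha_q+\theta_q(\sigma_q)\}$ handles the latent counts. The only difference is presentational: the paper splits the argument into two auxiliary lemmas (binomial via the generating function, then thinning via the mass function), whereas you combine them into a single generating-function identity.
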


Proposition~\ref{pr:Kprior} shows that the number of distinct species under the \texttt{MOSAIC} model admits a simple, analytically tractable and highly interpretable form. In the expected value $E(K_{\bm{n}}) = \lambda\, p_1(\bm{\alpha}, \bm{\theta}, \bm{n})$, the parameter $\lambda$ is the expected total number of species across sites and $p_1(\bm{\alpha}, \bm{\theta}, \bm{n})$ is the fraction of these that is observed under sampling effort $\bm{n}$. The site-specific quantities admit the same reading, since $E(K_{q,n_q}) = \lambda\, p_{1,q}(\alpha_q, \theta_q, n_q)$. The laws of $K_{q,n_q}$ and $K_{\bm{n}}$ are recovered from those of $L_{q,n_q}$ and $L_{\bm{n}}$ in the limit $\sigma_q \to 1$, that is, under perfect detection.

Of key ecological interest in the study of community similarity is the number of species shared between two sites $q$ and $r$, denoted by $C_{n_q,n_r}$, together with its latent counterpart $U_{n_q,n_r}$ based on the occupancies $\bm{W}$, so that $C_{n_q,n_r} \le U_{n_q,n_r}$. When $Q=2$, the number of shared species is linked to the site-specific and total counts through $C_{n_1,n_2} = K_{1,n_1} + K_{2,n_2} - K_{\bm{n}}$. The following proposition provides the distribution of the number of species shared between any pair of sites.

\begin{proposition}[\emph{A priori} shared species] \label{pr:shared}
Under the \texttt{MOSAIC} model with parameters $(\lambda, \bm{\alpha}, \bm{\theta}, \bm{\sigma})$, the number of species shared between sites $q$ and $r$, with $q \neq r$, is distributed as
\begin{equation*}
C_{n_q,n_r} \sim \mathrm{Poisson}\{\lambda \, p_{1,q}(\alpha_q, \theta_q, n_q) \, p_{1,r}(\alpha_r, \theta_r, n_r)\},
\end{equation*}
whereas its latent counterpart is distributed as
\begin{equation*}
U_{n_q,n_r} \sim \mathrm{Poisson}\{\lambda \, p_{1,q}(\alpha_q, \theta_q(\sigma_q), n_q) \, p_{1,r}(\alpha_r, \theta_r(\sigma_r), n_r)\}.
\end{equation*}
\end{proposition}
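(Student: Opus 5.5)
The plan is a Poisson thinning (marking) argument, the same device that gives Proposition~\ref{pr:Kprior}. Conditionally on $N$, the species $\tilde{X}_1,\dots,\tilde{X}_N$ are i.i.d.\ across $j$. For each species, the pairs $(\eta_{jq},\gamma_{jq})_{q=1}^Q$ are drawn independently of all other species, and so are the arrays $(\tilde{W}_{ijq},\tilde{T}_{ijq})$. Each species $j$ therefore carries an independent mark, namely the vector of indicators $B_{jq} = \mathds{1}\{\sum_{i=1}^{n_q}\tilde{Y}_{ijq}\ge 1\}$ for $q=1,\dots,Q$. The shared count is then
\[
C_{n_q,n_r}=\sum_{j=1}^N B_{jq}B_{jr}.
\]
Since $N\sim\mathrm{Poisson}(\lambda)$ and the marks are i.i.d., the thinning theorem gives $C_{n_q,n_r}\sim\mathrm{Poisson}(\lambda\,\rho)$ with $\rho=\pr(B_{jq}=1,B_{jr}=1)$. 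This can also be checked directly through the probability generating function:
\[
E\bigl(s^{C}\bigr)=E\{(1-\rho+\rho s)^N\}=\exp\{-\lambda\rho(1-s)\}.
\]

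It remains to compute $\rho$. Because $\bm{\pi}_q$ and $\bm{\pi}_r$ are independent for $q\neq r$, and the Bernoulli draws are conditionally independent across sites, $B_{jq}$ and $B_{jr}$ are independent, so $\rho=\pr(B_{jq}=1)\pr(B_{jr}=1)$. By Theorem~\ref{th:ProdBeta}, or directly by \eqref{eq:marginal_beta}, $\pi_{jq}=\eta_{jq}\gamma_{jq}\sim\mathrm{Beta}(-\alpha_q,\alpha_q+\theta_q)$. Hence
\[
\pr(B_{jq}=0)=E\{(1-\pi_{jq})^{n_q}\}=\frac{B(-\alpha_q,\alpha_q+\theta_q+n_q)}{B(-\alpha_q,\alpha_q+\theta_q)}=\frac{(\alpha_q+\theta_q)_{n_q}}{(\theta_q)_{n_q}}=p_{0,q}(\alpha_q,\theta_q,n_q),
\]
using the Beta moment identity. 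Therefore $\rho=p_{1,q}(\alpha_q,\theta_q,n_q)\,p_{1,r}(\alpha_r,\theta_r,n_r)$.

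For $U_{n_q,n_r}$, the same argument applies with the marks $\mathds{1}\{\sum_i \tilde{W}_{ijq}\ge1\}$, where now $\tilde{W}_{ijq}\mid\eta_{jq}\sim\mathrm{Bernoulli}(\eta_{jq})$ and $\eta_{jq}\sim\mathrm{Beta}\{-\alpha_q,\sigma_q(\alpha_q+\theta_q)\}$. Writing the second Beta parameter as $\alpha_q+\theta_q(\sigma_q)$ with $\theta_q(\sigma_q)=\sigma_q(\alpha_q+\theta_q)-\alpha_q$, the same moment computation gives $E\{(1-\eta_{jq})^{n_q}\}=p_{0,q}(\alpha_q,\theta_q(\sigma_q),n_q)$. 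Independence across sites holds again, since the $\eta_{jq}$ are independent over $q$.

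The main point needing care is justifying the thinning step. The species labels in the model are only identified through the array $\tilde{\bm{Y}}$, and $C_{n_q,n_r}$ is defined through the observed sets $\mathcal{X}_q\cap\mathcal{X}_r$. One has to note that a species lies in $\mathcal{X}_q\cap\mathcal{X}_r$ exactly when $B_{jq}B_{jr}=1$. Discarding never-recorded species does not affect this count, so $C_{n_q,n_r}$ is indeed the sum of i.i.d.\ Bernoulli marks over a Poisson number of species. Once this is settled, the rest is the Beta moment computation above.
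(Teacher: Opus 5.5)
Your proposal is correct and follows essentially the same route as the paper. Conditionally on $N$, the shared count is a sum of i.i.d.\ Bernoulli indicators whose success probability factorizes across the two sites as $p_{1,q}(\alpha_q,\theta_q,n_q)\,p_{1,r}(\alpha_r,\theta_r,n_r)$, using the Beta moment $E\{(1-\pi_{jq})^{n_q}\}=p_{0,q}(\alpha_q,\theta_q,n_q)$, and Poisson thinning over $N$ then gives the result; the latent count $U_{n_q,n_r}$ follows identically with $\eta_{jq}$ and $\theta_q(\sigma_q)$. The paper splits the argument into Lemma S1 (mixing over the random success probabilities) and Lemma S2 (Poisson thinning), whereas your one-line generating-function computation combines the two.
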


Proposition~\ref{pr:shared} is the first result characterizing the distribution of the number of shared species for partially exchangeable feature allocation models with finitely many features. The mean of $C_{n_q,n_r}$ is the expected species richness $\lambda$ multiplied by the probability that a species is recorded at both sites, which factorizes because the occurrence probabilities are independent across sites. The same reading applies to $U_{n_q,n_r}$, which counts the species genuinely present at both sites, and the two coincide in the limit $\sigma_q, \sigma_r \to 1$.

The above result allows us to define model-based estimates of rarefaction for shared species between pairs of sites, providing a measure of overlap in species composition. Although the definition of a single-site rarefaction curve is straightforward, its extension to shared species is less direct. The shared rarefaction surface is the matrix of dimension $n_q \times n_r$ with entries $E(C_{i_q,i_r})$, for $i_q = 1,\dots,n_q$ and $i_r = 1,\dots,n_r$, representing the expected number of shared species as sampling effort increases at both sites. This two-dimensional surface can be summarized by the one-dimensional curve $E(C_{1,1}), E(C_{2,2}), \dots, E(C_{\bar{n}, \bar{n}})$, with $\bar{n} = \min(n_q,n_r)$, obtained by increasing sampling effort jointly and equally at the two sites. The resulting curve is a symmetric and interpretable summary of how overlap accumulates under balanced sampling, and it is most informative when the two sites have similar sample sizes, since the diagonal then spans the entire range of sampling effort. Otherwise the curve stops at $\bar{n}$ and the full surface should be inspected instead.

Propositions~\ref{pr:Kprior}-\ref{pr:shared} concern \emph{a priori} properties of $K_{q,n_q}$, $K_{\bm{n}}$, $C_{n_q,n_r}$ and of their latent counterparts. We now consider the important problem of predicting the number of new species that would be discovered if $\bm{s} = (s_1,\dots,s_Q)$ additional samples were collected after an initial $\bm{n}$, resulting in an enlarged sample of size $\bm{n} + \bm{s}$. The key ingredient is the posterior distribution of the number of unseen species $N'$, reported in Theorem~\ref{th:posterior_known} together with the posterior of the occurrence probabilities. We denote by $K_{q,s_q}^{(n_q)}$ and $K_{\bm{s}}^{(\bm{n})}$ the number of new species that would be discovered at site $q$ and across all sites, respectively, and by $L_{q,s_q}^{(n_q)}$ and $L_{\bm{s}}^{(\bm{n})}$ their latent counterparts based on $\bm{W}$. These quantities extrapolate the accumulation curves beyond the observed effort, since $E(K_{\bm{n}+\bm{s}} \mid \bm{Y}) = k + E(K_{\bm{s}}^{(\bm{n})} \mid \bm{Y})$. Two groups of species contribute to the predictive of $K_{q,s_q}^{(n_q)}$: those not yet recorded anywhere, whose number is random; and the $k - k_q$ species recorded at some other site but not at $q$, whose number is known given $\bm{Y}$. The same reading applies to $L_{q,s_q}^{(n_q)}$, with $\ell$ and $\ell_q$ in place of $k$ and $k_q$.

\begin{theorem}[\emph{A posteriori} distinct species] \label{thm:Kpred}
Under the \texttt{MOSAIC} model with parameters $(\lambda, \bm{\alpha}, \bm{\theta}, \bm{\sigma})$, the number of new species discovered at site $q$ and across sites satisfies
\begin{equation*}
\begin{aligned}
K_{q,s_q}^{(n_q)} \mid \bm{Y} \overset{d}{=} \; & \mathrm{Poisson}\{\lambda \, p_{0}(\bm{\alpha}, \bm{\theta}, \bm{n}) \, p_{1,q}(\alpha_q, \theta_q + n_q, s_q)\} \\
 \; &+ \mathrm{Binomial}\{k - k_q, \, p_{1,q}(\alpha_q, \theta_q + n_q, s_q)\}, \quad (q = 1,\dots,Q),\\
K_{\bm{s}}^{(\bm{n})} \mid \bm{Y} \; \sim \; & \mathrm{Poisson}\{\lambda \, p_{0}(\bm{\alpha}, \bm{\theta}, \bm{n}) \, p_{1}(\bm{\alpha}, \bm{\theta} + \bm{n}, \bm{s})\},
\end{aligned}
\end{equation*}
whereas their latent counterparts satisfy
\begin{equation*}
\begin{aligned}
L_{q,s_q}^{(n_q)} \mid \bm{W} \overset{d}{=} \; & \mathrm{Poisson}\{\lambda \, p_{0}(\bm{\alpha}, \bm{\theta}(\bm{\sigma}), \bm{n}) \, p_{1,q}(\alpha_q, \theta_q(\sigma_q) + n_q, s_q)\} \\
 \; &+ \mathrm{Binomial}\{\ell - \ell_q, \, p_{1,q}(\alpha_q, \theta_q(\sigma_q) + n_q, s_q)\}, \quad (q = 1,\dots,Q),\\
L_{\bm{s}}^{(\bm{n})} \mid \bm{W} \; \sim \; & \mathrm{Poisson}\{\lambda \, p_{0}(\bm{\alpha}, \bm{\theta}(\bm{\sigma}), \bm{n}) \, p_{1}(\bm{\alpha}, \bm{\theta}(\bm{\sigma}) + \bm{n}, \bm{s})\}.
\end{aligned}
\end{equation*}
\end{theorem}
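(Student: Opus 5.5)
The plan is to work conditionally on $\bm{Y}$ using the posterior structure of Theorem~\ref{th:posterior_known}, and then to classify species according to whether they have already been recorded anywhere and, if so, whether at site $q$. Given $\bm{Y}$, the unseen count is $N' \sim \mathrm{Poisson}\{\lambda p_0(\bm{\alpha},\bm{\theta},\bm{n})\}$. All probabilities $\pi'_{jq}$ and $\pi^*_{jq}$ are independent across species and sites. The future observations $\tilde{Y}_{ijq}$, for $i=n_q+1,\dots,n_q+s_q$, are conditionally independent Bernoulli$(\pi_{jq})$ given the $\pi$'s, and independent of the past data.

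\textbf{Site-level count.} A new species at site $q$ is any species not among the $k_q$ already recorded at $q$ that appears in one of the $s_q$ future samples there. These split into two groups.

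\emph{Globally unseen species} ($j=1,\dots,N'$). Integrating $\pi'_{jq}\sim\mathrm{Beta}(-\alpha_q,\alpha_q+\theta_q+n_q)$, the probability of at least one record in $s_q$ draws is
\[
1-E\{(1-\pi'_{jq})^{s_q}\mid\bm{Y}\} = 1-\frac{(\alpha_q+\theta_q+n_q)_{s_q}}{(\theta_q+n_q)_{s_q}} = p_{1,q}(\alpha_q,\theta_q+n_q,s_q).
\]
The species are independent, so the count is a $p_{1,q}$-thinning of a Poisson variable, which is again Poisson with the stated mean.

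\emph{Species seen elsewhere but not at $q$} (there are $k-k_q$ of them). These have $m_{jq}=0$, so their posterior is $\mathrm{Beta}(-\alpha_q,\alpha_q+\theta_q+n_q)$, the same as the unseen ones. Each is therefore discovered independently with the same probability $p_{1,q}(\alpha_q,\theta_q+n_q,s_q)$, which gives the Binomial term.

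The two groups are conditionally independent, so the law is the stated sum.

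\textbf{Pooled count.} Only the $N'$ globally unseen species can contribute. Each unseen species is discovered somewhere with probability
\[
1-\prod_q E\{(1-\pi'_{jq})^{s_q}\mid \bm{Y}\} = p_1(\bm{\alpha},\bm{\theta}+\bm{n},\bm{s}),
\]
using independence across sites. Poisson thinning again yields the Poisson law.

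\textbf{Latent counterparts.} The same argument is applied to $\bm{W}$, with $\eta_{jq}\sim\mathrm{Beta}\{-\alpha_q,\theta_q(\sigma_q)+\alpha_q\}$. I first check that $\sigma_q(\alpha_q+\theta_q)=\alpha_q+\theta_q(\sigma_q)$, so $\bm{W}$ is itself a partially exchangeable BB model with parameters $(\lambda,\bm{\alpha},\bm{\theta}(\bm{\sigma}))$. Theorem~\ref{th:posterior_known} then applies to $\bm{W}$ with $\bm{\theta}$ replaced by $\bm{\theta}(\bm{\sigma})$, and $k,k_q$ replaced by $\ell,\ell_q$.

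\textbf{Main obstacle.} The delicate step is justifying this posterior for $\bm{W}$: the unobserved-in-$\bm{W}$ species count is Poisson$\{\lambda p_0(\bm{\alpha},\bm{\theta}(\bm{\sigma}),\bm{n})\}$, and the relevant $\eta$'s are independent with updated betas. My route is to rederive Theorem~\ref{th:posterior_known} directly by Poisson marking. Each of the $N$ species is independently marked by its full occurrence vector, so the never-present species form an independent Poisson count with thinned mean. Their latent probabilities are conditionally i.i.d. with density proportional to $\prod_q(1-\eta_{jq})^{n_q}$ times the prior. Beyond that, the computation is a routine Pochhammer ratio.
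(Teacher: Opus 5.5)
Your proposal is correct and follows the paper's proof essentially step for step. You use the same split of the species new to site $q$: the $N-k$ globally unrecorded ones (binomial given $N$, then Poisson thinning) and the $k-k_q$ recorded elsewhere, whose posterior with $m_{jq}=0$ coincides with that of the unseen ones. The pooled count and the substitution $\bm\theta \mapsto \bm\theta(\bm\sigma)$ for the latent case are also handled as in the paper.

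Your observation that $\tilde{\bm W}$ is itself a partially exchangeable BB model with parameters $(\lambda,\bm\alpha,\bm\theta(\bm\sigma))$ already lets you invoke Theorem~\ref{th:posterior_known} directly. The Poisson-marking rederivation you flag as the main obstacle is therefore a sound but optional extra check; the paper simply asserts the conjugate posterior given $\bm W$.
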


Here and in the following, the sum of two named distributions denotes the distribution of the sum of two independent random variables with those laws. The latent quantities are conditional on $\bm{W}$, which is itself unobserved, and are therefore available only within the sampler of Section S.2.1 in the Supplementary Material, where $\tilde{\bm{W}}$ is imputed at each iteration. Theorem~\ref{thm:Kpred} admits the same interpretation as its \emph{a priori} counterpart, with the roles of the two factors made explicit: $\lambda \, p_{0}(\bm{\alpha}, \bm{\theta}, \bm{n})$ is the expected number of species left undiscovered after the initial $\bm{n}$ samples, and $p_{1}(\bm{\alpha}, \bm{\theta} + \bm{n}, \bm{s})$ is the probability that one such species is recorded at least once in the $\bm{s}$ additional samples. The Bayesian estimator of the number of new species is therefore $E(K_{\bm{s}}^{(\bm{n})} \mid \bm{Y}) = \lambda \, p_{0}(\bm{\alpha}, \bm{\theta}, \bm{n}) \, p_{1}(\bm{\alpha}, \bm{\theta} + \bm{n}, \bm{s})$, and the site-specific expressions read analogously, with $\lambda \, p_{0,q}(\alpha_q, \theta_q, n_q)$ the expected number of species not yet recorded at site $q$. The latter includes species already recorded elsewhere, so that $K_{q,s_q}^{(n_q)}$ counts species new to site $q$ rather than new to the study. 
These results generalize those of \citet{ghilFeature} in the exchangeable case.

Finally, we characterize the predictive distribution of the number of new species shared between two sites $q$ and $r$, denoted by $C_{s_q,s_r}^{(n_q,n_r)}$, and of its latent counterpart $U_{s_q,s_r}^{(n_q,n_r)}$ based on~$\bm{W}$. Let $k_{q,r} = k_q + k_r - c_{q,r}$ be the number of distinct species recorded at site $q$ or at site $r$, and let $\ell_{q,r} = \ell_q + \ell_r - u_{q,r}$ be its latent counterpart. Three groups of species contribute to the predictive: those not yet recorded at either site, which must appear at both; those recorded only at site $q$, which must appear at site $r$; and those recorded only at site $r$, which must appear at site $q$.

\begin{theorem}[\emph{A posteriori} shared species] \label{thm:pred_shared}
Under the \texttt{MOSAIC} model with parameters $(\lambda, \bm{\alpha}, \bm{\theta}, \bm{\sigma})$, the number of species shared between sites $q$ and $r$, with $q \neq r$, that would be discovered in $(s_q,s_r)$ additional samples satisfies
\begin{equation*}
\begin{aligned}
C_{s_q,s_r}^{(n_q,n_r)} \mid \bm{Y} \overset{d}{=} \; & \mathrm{Poisson}\{\lambda \, p_0(\bm{\alpha}, \bm{\theta}, \bm{n}) \, p_{1,q}(\alpha_q, \theta_q + n_q, s_q) \, p_{1,r}(\alpha_r, \theta_r + n_r, s_r)\} \\
+ \; & \mathrm{Binomial}\{k - k_{q,r}, \, p_{1,q}(\alpha_q, \theta_q + n_q, s_q) \, p_{1,r}(\alpha_r, \theta_r + n_r, s_r)\} \\
+ \; & \mathrm{Binomial}\{k_q - c_{q,r}, \, p_{1,r}(\alpha_r, \theta_r + n_r, s_r)\}  \\
+ \; & \mathrm{Binomial}\{k_r - c_{q,r}, \, p_{1,q}(\alpha_q, \theta_q + n_q, s_q)\},
\end{aligned}
\end{equation*}
whereas its latent counterpart satisfies
\begin{equation*}
\begin{aligned}
U_{s_q,s_r}^{(n_q,n_r)} \mid \bm{W} \overset{d}{=} \; & \mathrm{Poisson}\{\lambda \, p_0(\bm{\alpha}, \bm{\theta}(\bm{\sigma}), \bm{n}) \, p_{1,q}(\alpha_q, \theta_q(\sigma_q) + n_q, s_q) \, p_{1,r}(\alpha_r, \theta_r(\sigma_r) + n_r, s_r)\} \\
+ \; & \mathrm{Binomial}\{\ell - \ell_{q,r}, \, p_{1,q}(\alpha_q, \theta_q(\sigma_q) + n_q, s_q) \, p_{1,r}(\alpha_r, \theta_r(\sigma_r) + n_r, s_r)\} \\
+ \; & \mathrm{Binomial}\{\ell_q - u_{q,r}, \, p_{1,r}(\alpha_r, \theta_r(\sigma_r) + n_r, s_r)\} \\
+ \; & \mathrm{Binomial}\{\ell_r - u_{q,r}, \, p_{1,q}(\alpha_q, \theta_q(\sigma_q) + n_q, s_q)\},
\end{aligned}
\end{equation*}
where in both displays the four terms are independent.
\end{theorem}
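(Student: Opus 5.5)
We plan to work conditionally on $\bm{Y}$ and use the posterior structure of Theorem~\ref{th:posterior_known}. Given $\bm{Y}$, the species split into four disjoint groups relative to the pair $(q,r)$. The first group is the $N'$ unseen species, whose occurrence probabilities are $\pi'_{jq}\sim\mathrm{Beta}(-\alpha_q,\alpha_q+\theta_q+n_q)$. The second is the $k-k_{q,r}$ observed species recorded at neither $q$ nor $r$, so $m_{jq}=m_{jr}=0$. The third is the $k_q-c_{q,r}$ species recorded at $q$ but not at $r$. The fourth is the $k_r-c_{q,r}$ species recorded at $r$ but not at $q$. Species already recorded at both sites cannot contribute to the new shared count.

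A species in the first group contributes if it is recorded at least once at $q$ and at least once at $r$ in the $(s_q,s_r)$ additional samples. Species in the second group must satisfy the same condition. A species in the third group contributes if it is recorded at $r$ in the $s_r$ new samples, whatever happens at $q$; the fourth group is symmetric. Since the posterior makes the $\pi$'s independent across species and sites, and the future Bernoulli draws are conditionally independent, the contributions of distinct species are independent. This yields independence of the four terms.

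The key computation concerns a species with zero posterior count at site $q$. Its posterior is $\mathrm{Beta}(-\alpha_q,\alpha_q+\theta_q+n_q)$, both for unseen species and for observed species with $m_{jq}=0$. The probability of at least one record in $s_q$ further samples is then
\[
1-E\{(1-\pi)^{s_q}\}=1-\frac{(\alpha_q+\theta_q+n_q)_{s_q}}{(\theta_q+n_q)_{s_q}}=p_{1,q}(\alpha_q,\theta_q+n_q,s_q),
\]
by the beta moment formula.

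By independence across sites, a species in the first or second group contributes with probability equal to the product $p_{1,q}(\alpha_q,\theta_q+n_q,s_q)\,p_{1,r}(\alpha_r,\theta_r+n_r,s_r)$. A species in the third group contributes with probability $p_{1,r}(\alpha_r,\theta_r+n_r,s_r)$ alone, and the fourth group symmetrically with $p_{1,q}$. Summing independent Bernoulli indicators over the fixed-size groups gives the three binomial terms. For the unseen species, $N'\mid\bm{Y}\sim\mathrm{Poisson}\{\lambda p_0(\bm\alpha,\bm\theta,\bm n)\}$, and each of the $N'$ species independently contributes with the product probability above. Poisson thinning then gives the Poisson term with the stated mean.

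The latent statement follows by the same argument applied to $\bm{W}$ in place of $\bm{Y}$. Marginally the $\eta_{jq}$ are $\mathrm{Beta}\{-\alpha_q,\theta_q(\sigma_q)+\alpha_q\}$, so $\bm{W}$ is itself a partially exchangeable BB model with parameters $(\lambda,\bm\alpha,\bm\theta(\bm\sigma))$. Theorem~\ref{th:posterior_known}, applied to this model with $\theta_q(\sigma_q)$ in place of $\theta_q$, supplies the needed posterior. We only need to check that conditioning on $\bm{W}$ alone, rather than on $(\bm{W},\bm{Y})$, is the intended conditioning.

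The main obstacle is the careful justification of independence and of the thinning step when $N'$ is random. We plan to handle it by conditioning first on $N'$, computing the probability generating function of the total count as a product of binomial-type factors, and then averaging over the Poisson law of $N'$. This produces the Poisson pgf times three binomial pgfs, which establishes both the distributional identity and the independence of the four terms.
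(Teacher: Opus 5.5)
Your proposal is correct and follows the paper's proof essentially step for step. It uses the same decomposition into the $N-k$ unseen species, the $k-k_{q,r}$ species recorded only at other sites, and the two one-site-only groups; the same Beta moment computation giving $p_{1,h}(\alpha_h,\theta_h+n_h,s_h)$; a pgf-based binomial step per group (the paper's Lemma S1) followed by Poisson thinning of the random group; and the latent case obtained by treating $\tilde{\bm W}$ as a beta--Bernoulli model with $\bm\theta(\bm\sigma)$ in place of $\bm\theta$. Your explicit remark that observed species with $m_{jq}=0$ share the posterior $\mathrm{Beta}(-\alpha_q,\alpha_q+\theta_q+n_q)$ of the unseen ones is left implicit in the paper's proof, and it is exactly what the three binomial terms require.
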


Theorem~\ref{thm:pred_shared} provides a simple closed-form expression for extrapolating the shared rarefaction surface between a pair of sites, and is the first result analytically characterizing the posterior distribution of the number of shared species for partially exchangeable feature allocation models. Each term corresponds to a distinct group of species. The Poisson and the first binomial term account for species not yet recorded at either site, which must appear at both: the former counts those never recorded anywhere, whose number is itself random and Poisson distributed, whereas the latter counts the $k - k_{q,r}$ species already recorded at some other site, whose number is known given $\bm{Y}$. The last two binomial terms count species recorded at only one of the two sites, which need appear at the other site alone. 

All the above results extend to the case in which the total number of species in \eqref{eq:mu_BBimperfect} follows a negative binomial distribution, which accommodates overdispersion while retaining the same analytical tractability and interpretability. The Supplementary Material reports the details, together with a closed-form expression for the marginal distribution analogous to~\eqref{eq:pEFPF}.

\subsection{A model-based index of $\beta$-diversity}\label{sec:betadiv}

Compared to species richness, comparatively little work in the Bayesian nonparametric literature has addressed the so-called $\beta$-diversity, namely the heterogeneity of species composition across different sampling regions, despite its central role in the quantification of biodiversity. While the pairwise rarefaction and extrapolation curves of Section~\ref{sec:theory} based on shared species are valuable, it is often convenient to summarize heterogeneity across regions in a single number. Many measures of $\beta$-diversity have been introduced in ecology, but there is no overall consensus on which one is the most appropriate \citep{anderson2011navigating}. The oldest is the decomposition of \citet{Whittaker1960}, in which the $\gamma$-diversity of a set of sites is expressed as the product of the average $\alpha$-diversity of the individual sites and a multiplicative factor, the $\beta$-diversity, measuring how much the pooled species list exceeds what a typical site contains. A second family of measures is instead based on dissimilarities; for incidence data the most popular are the Jaccard and Sørensen indices \citep{tuomisto2010diversity}. Jaccard similarity is the number of shared species divided by the total number of species, namely $|\mathcal{X}_q \cap \mathcal{X}_r|/|\mathcal{X}_q \cup \mathcal{X}_r|$ in our notation, for $q \neq r$; Sørensen is similar but places more weight on the overlap, so that for the same two sets it never returns a lower value than Jaccard. \citet{barwell2015measuring} point out that many commonly used $\beta$-diversity indices are highly correlated and redundant, yet do not properly account for undetected shared and unshared species. Our proposed \texttt{MOSAIC} model offers a principled way to solve this problem, based on a ratio of shared species between and within sites.

\begin{definition} 
    The $\beta$-diversity between sites $q$ and $r$, with $q \neq r$, is
\begin{equation*}
    \beta_{qr} = 1- \frac{\sum_{j=1}^N \eta_{jq} \eta_{jr} }{\big(\sum_{j=1}^N \eta_{jq}^2 \sum_{j=1}^N \eta_{jr}^2\big)^{1/2}}.
\end{equation*}
\end{definition}

\noindent The proposed index is the cosine dissimilarity between the occupancy profiles of the two sites; it is therefore symmetric, $\beta_{qr} = \beta_{rq}$, and, since the occupancy probabilities are nonnegative, conveniently bounded between $0$ and $1$. It equals $0$ when the two profiles are proportional, and $1$ when no species has positive occupancy probability at both sites; in general, increasing values of $\beta_{qr}$ indicate greater heterogeneity between sites. The numerator of $1-\beta_{qr}$ is the expected number of species shared in two samples drawn from \emph{different} sites, say the $i$th sample from site $q$ and the $i'$th from site~$r$,
\begin{equation*}
    \sum_{j=1}^N \pr(\tilde{W}_{ijq}=\tilde{W}_{i'jr} = 1\mid \bm{\eta}_q, \bm{\eta}_r) = \sum_{j=1}^N \eta_{jq}\eta_{jr},
\end{equation*}
which equals $E(U_{1,1} \mid \bm{\eta}_q, \bm{\eta}_r)$ in the notation of Section~\ref{sec:theory}, whereas each term in the denominator is the expected number of species shared in two samples $i \neq i'$ drawn from the \emph{same} site,
\begin{equation*}
    \sum_{j=1}^N \pr(\tilde{W}_{ijq}=\tilde{W}_{i'jq} = 1\mid \bm{\eta}_q) = \sum_{j=1}^N \eta_{jq}^2 .
\end{equation*}
The denominator thus rescales the between-site co-occurrence by the geometric mean of the two within-site co-occurrences, so that the index measures similarity of occupancy patterns rather than their overall magnitude: two sites whose profiles differ by a common multiplicative factor are regarded as maximally similar.

The index $\beta_{qr}$ is monotonically equivalent to the chord distance of \citet{Orloci1967}, although the latter is commonly applied to empirical relative abundances rather than to model-based probabilities. Replacing each $\eta_{jq}$ with the empirical indicator $\mathds{1}(\tilde{X}_j \in \mathcal{X}_q)$ yields instead one minus the Ochiai similarity coefficient \citep{Ochiai1957}, $1 - |\mathcal{X}_q \cap \mathcal{X}_r|/\sqrt{|\mathcal{X}_q|\,|\mathcal{X}_r|}$, since the sums appearing in $\beta_{qr}$ then count the species shared by the two sites and those observed at each of them; the Jaccard index differs in that it normalizes the shared species by $|\mathcal{X}_q \cup \mathcal{X}_r|$. A further connection is with the correlation between random probability measures in Proposition~4 of \citet{franzolini2025multivariate}, which equals the probability of a tie across two groups, normalized by the geometric mean of the two within-group tie probabilities: our index has the same structure, with co-occurrences of binary features replacing ties among species labels, and thus provides its feature allocation counterpart. We note in passing that their quantity would in turn provide a natural definition of $\beta$-diversity for multivariate species sampling models, based on relative abundances rather than on presence--absence patterns.

Our definition is based on the occupancy probabilities $\bm{\eta}_q$ rather than on the occurrence probabilities $\bm{\pi}_q$, since under-detection thins the recorded occurrences and would otherwise bias the resulting measure of biodiversity; it is moreover defined over the complete list of $N$ species, and hence accounts for the species left unobserved. Since $\beta_{qr}$ is a function of occupancy probabilities whose posterior is available, it admits full uncertainty quantification in addition to point estimation. However, as noted in Remark~\ref{rem:identifiability}, the occupancy probabilities $\bm{\eta}_q$ cannot be learned from the data, and their posterior law is entirely driven by the choice of $\sigma_q$. The induced $\beta$-diversity $\beta_{qr}$ is nonetheless considerably more stable, because varying $\sigma_q$ shrinks all the occupancy probabilities at site $q$ by a common factor and the cosine similarity is invariant to such a global rescaling. Finally, the same construction applies more broadly to any probabilistic framework that yields species-occurrence probabilities, such as joint species distribution models \citep{OvaskainenEtal2016}, on replacing $\bm{\eta}_q$ by $\bm{\pi}_q$ and $N$ by $k$; the resulting index no longer corrects for under-detection or for unseen species, but remains well defined.

\subsection{Hierarchical prior specification with site-specific covariates} \label{sec:prior_specify}

In this section, we discuss the prior specification for the \texttt{MOSAIC} model. In order to induce borrowing of information across sites, we define hierarchical priors for the parameters $\bm{\alpha}$ and $\bm{\theta}$. Given the formal equivalence with the BB model shown in Theorem~\ref{th:ProdBeta}, we focus on this construction for easier computation and interpretation. Hence, we consider $\pi_{jq} \mid \alpha_q, \theta_q \sim \mathrm{Beta}(-\alpha_q, \alpha_q+\theta_q)$ with $\alpha_q<0$ and $\theta_q>-\alpha_q$ in \eqref{eq:muBB_marginal}, and we need to define a hyperprior for $\alpha_q$ and $\theta_q$, for $q=1,\dots,Q$. For convenience, we reparametrize the beta distribution in terms of mean and precision, considering
\begin{equation} \label{eq:rip_beta}
    \mu_q = \frac{-\alpha_q}{\theta_q}, \quad \phi_q = \theta_q, \quad \text{with } \mu_q \in (0,1), \ \phi_q>0,
\end{equation}
for $q=1,\dots,Q$. Therefore, $\pi_{jq} \mid \mu_q, \phi_q \sim \mathrm{Beta}\{\mu_q\phi_q, (1-\mu_q)\phi_q\}$, with
\begin{equation*}
    E(\pi_{jq}\mid\mu_q, \phi_q) = \mu_q, \quad \V(\pi_{jq}\mid\mu_q, \phi_q) = \frac{\mu_q(1-\mu_q)}{\phi_q+1}.
\end{equation*}
In biodiversity studies, it is common to have site-specific covariates, such as mean wind speed, mean temperature, and mean precipitation. We incorporate this information into a regression model for $\mu_q$. Letting $\bm{\zeta}=(\zeta_1, \dots, \zeta_H)^{\top}$ denote regression coefficients and $\bm{z}_q=(z_{q1}, \dots, z_{qH})^{\top}$ site-specific covariates, we assume
\begin{equation} \label{eq:linpred_h}
    \log\bigg(\frac{\mu_q}{1-\mu_q}\bigg) = \bm{z}_q^{\top} \bm{\zeta}, \quad \bm{\zeta} \sim \mathcal{N}(\bm{b}_0, \bm{B}_0),
\end{equation}
for $q=1,\dots,Q$, and we let $\bm{Z}$ denote the $Q\times H$ covariate matrix. This specification admits a direct interpretation of the regression coefficients in terms of odds ratios, analogous to standard logistic regression. For the dispersion parameters $\bm{\phi}=(\phi_1, \dots, \phi_Q)^{\top}$, we assume a common distribution across sites, $\phi_q \sim \mathrm{Gamma}(a_{\phi}, b_{\phi})$ for $q=1,\dots,Q$. Since the detectability parameter $\sigma_q$ is not identifiable, we take it to be common across sites and conduct a sensitivity analysis over a set of plausible values, for instance $\{0.4,0.6,0.8\}$; ideally, prior information on the detectability of species should guide this choice. For the parameter of the Poisson distribution for $N$, we assume $\lambda \sim \mathrm{Gamma}(\rho_0, \rho_0/ \nu_0)$, which implies $N \sim \mathrm{NegBinomial}(r_0,\nu_0)$. We recall that all the theoretical results in Section~\ref{sec:theory} extend to the negative binomial case, as detailed in the Supplementary Material.

\section{Simulation studies} \label{sec:sim_study}

In this section, we evaluate the performance of the proposed approach with synthetic data. Specifically, we assess estimation accuracy for the regression coefficients $\bm{\zeta}$, which capture covariate effects, and the dispersion parameters $\bm{\phi}$, as well as predictive performance for the number of shared species between site pairs, $C_{s_q,s_r}^{(n_q,n_r)}$, and the number of site-specific distinct species, $K_{q,s_q}^{(n_q)}$. To the best of our knowledge, there are no existing methods specifically designed for predicting the number of shared species across sites, as considered here. However, for the prediction of $K_{q,s_q}^{(n_q)}$, different feature allocation models are available. Accordingly, we adopt as a natural benchmark the exchangeable negative binomial mixture of beta--Bernoulli models proposed by \citet{ghilFeature}.

We begin by evaluating the performance of the proposed approach in estimating the number of shared species in a simple setting with $Q=3$ sites.
We generate occurrence data under the hierarchical formulation of the \texttt{MOSAIC} model with $N=3{,}000$, regression coefficients $\bm{\zeta}=(-3, 0.5, -0.3)^{\top}$, where the negative intercept reflects the presence of many rare species, and draw $\bm{\phi}$ from a Gamma distribution with shape 4 and rate 2. The elements of the design matrix $\bm{Z}$ are simulated from standard normal variables.
We consider sample sizes $\bm{n}=(100,80,80)$ and split the data into training and test sets, using $60\%$ of the observations in each site for training.
The \texttt{MOSAIC} model is fitted to the training data using the following hyperparameter specification: $\bm{b}_0 = (-2,0,0)^{\top}$, $\bm{B}_0 = \bm{I}$, $\rho_0 = 10$, $\nu_0 = 1.5k$, $a_{\phi}=2$ and $b_{\phi}=2$. The MCMC algorithm is run for $8{,}000$ iterations, of which the first half is discarded as burn-in. We predict the number of shared species $C_{s_q,s_r}^{(n_q,n_r)}$ for each pair of sites in the test set, with $\bm{s} = (40,32,32)$.

Figure \ref{fig:sim_shared} displays the rarefaction surfaces for shared species on the test set, together with the corresponding model-based predictions.
The posterior mean closely follows the empirical surfaces for all pairs, indicating a strong agreement between observed and predicted values. For the pair with the same number of samples, $(2,3)$, we additionally report the full one-dimensional rarefaction curve, corresponding to the diagonal of the surface, together with its model-based prediction on the test set. This representation further highlights the strong agreement between empirical and predicted values and provides a clearer view of how the number of shared species increases under balanced sampling effort. Figure S2 in the Supplementary Material shows the $95\%$ posterior credible intervals for the predicted rarefaction surfaces for all site pairs.

\begin{figure}[t!] 
    \centering
    \hfill
     \includegraphics[width=0.8\textwidth]{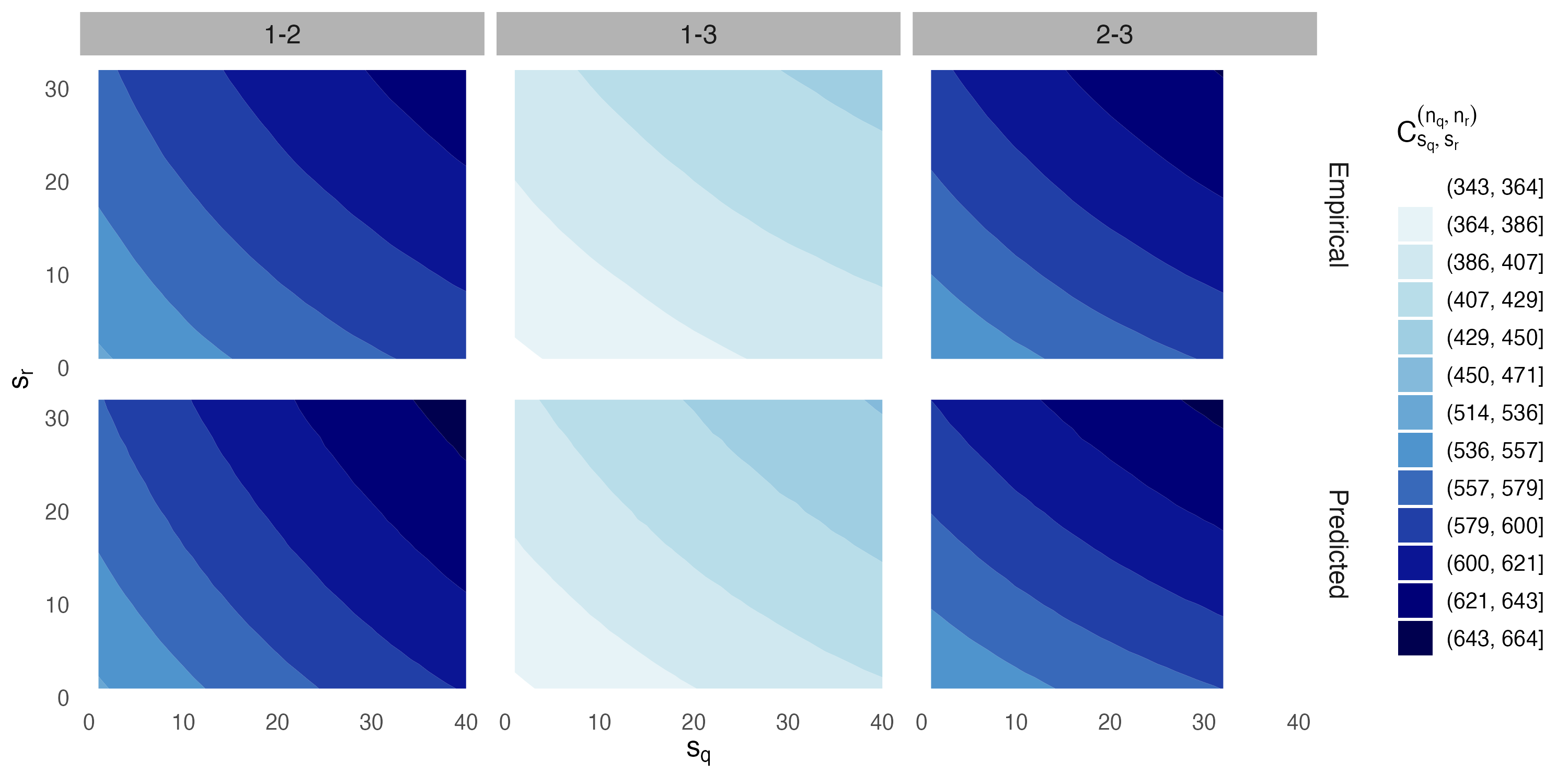}
    \hfill
    \includegraphics[width=0.5\textwidth]{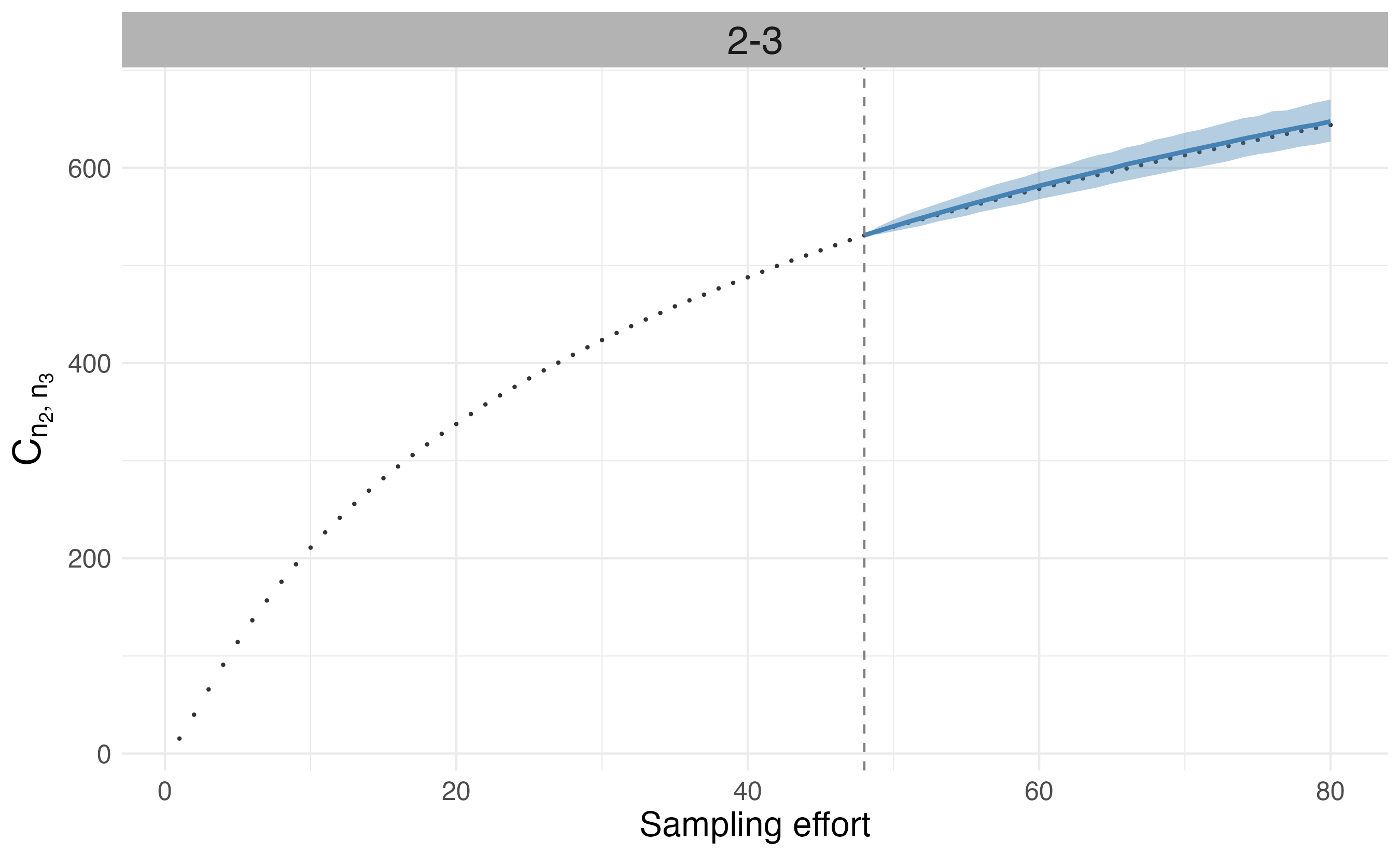}
    \caption{Top panel: empirical rarefaction surfaces for shared species in the test set, together with the corresponding posterior mean estimates of $C_{s_q,s_r}^{(n_q,n_r)}$ for all pairs of sites. Bottom panel:  simplified one-dimensional rarefaction curve for shared species (black dots), alongside the posterior mean of $C_{s_2,s_3}^{(n_2,n_3)}$ (solid line) and its $95\%$ credible intervals for site pair 2-3. Gray vertical line indicates the training set.}
    \label{fig:sim_shared}
\end{figure}

We next consider a more complex setting with $Q=15$ sites, where the sample sizes $n_q$ are drawn uniformly between 15 and 40 and $N$ follows a negative binomial distribution with mean $5{,}000$ and size $1{,}000$. The parameters $\bm{\zeta}$ and $\bm{\phi}$ are set as in the previous experiment. This configuration reflects typical biodiversity data.
Data are generated under two scenarios: (A) the correctly specified \texttt{MOSAIC} model and (B) a misspecified version in which a random intercept is added to the linear predictor in~\eqref{eq:linpred_h}, $\log\{\mu_q/(1-\mu_q)\} = \bm{z}_q^{\top}\bm{\zeta} + u_q$, with $u_q \sim \mathcal{N}(0,\tau^2)$ and $\tau = 0.1$. For each scenario, we replicate the experiment 25 times.
Each dataset is split into training and test sets, with $10$ observations in the test set for all sites.
The \texttt{MOSAIC} model is fitted to the training data using the same hyperparameter configuration as in the previous experiment. Based on the fitted model, we predict the number of new species at each site, $K_{q,s_q}^{(n_q)}$, with $s_q=10$ for $q=1,\dots,Q$. Predictive performance is evaluated using the average mean squared error across sites, scaled by the true values, for each $s_q=1,\dots,10$. We compare our approach with the exchangeable negative binomial mixture of beta--Bernoulli models \citep{ghilFeature}, using the same negative binomial hyperparameter specification as in our model.
Figure \ref{fig:sim_MSE} shows that the proposed method achieves comparable or improved predictive performance in both scenarios, highlighting the benefit of borrowing information across sites.
Moreover, we recall that exchangeable feature allocation models of this type do not provide estimates or predictions for the number of shared species between sites. In the Supplementary Material, we provide additional results for experiments with varying numbers of sites $Q$, together with additional experiments evaluating the frequentist coverage of posterior credible intervals for the regression coefficients $\bm{\zeta}$ and dispersion parameters $\bm{\phi}$.

\begin{figure}[t!] 
    \centering
    \includegraphics[width=0.85\textwidth]{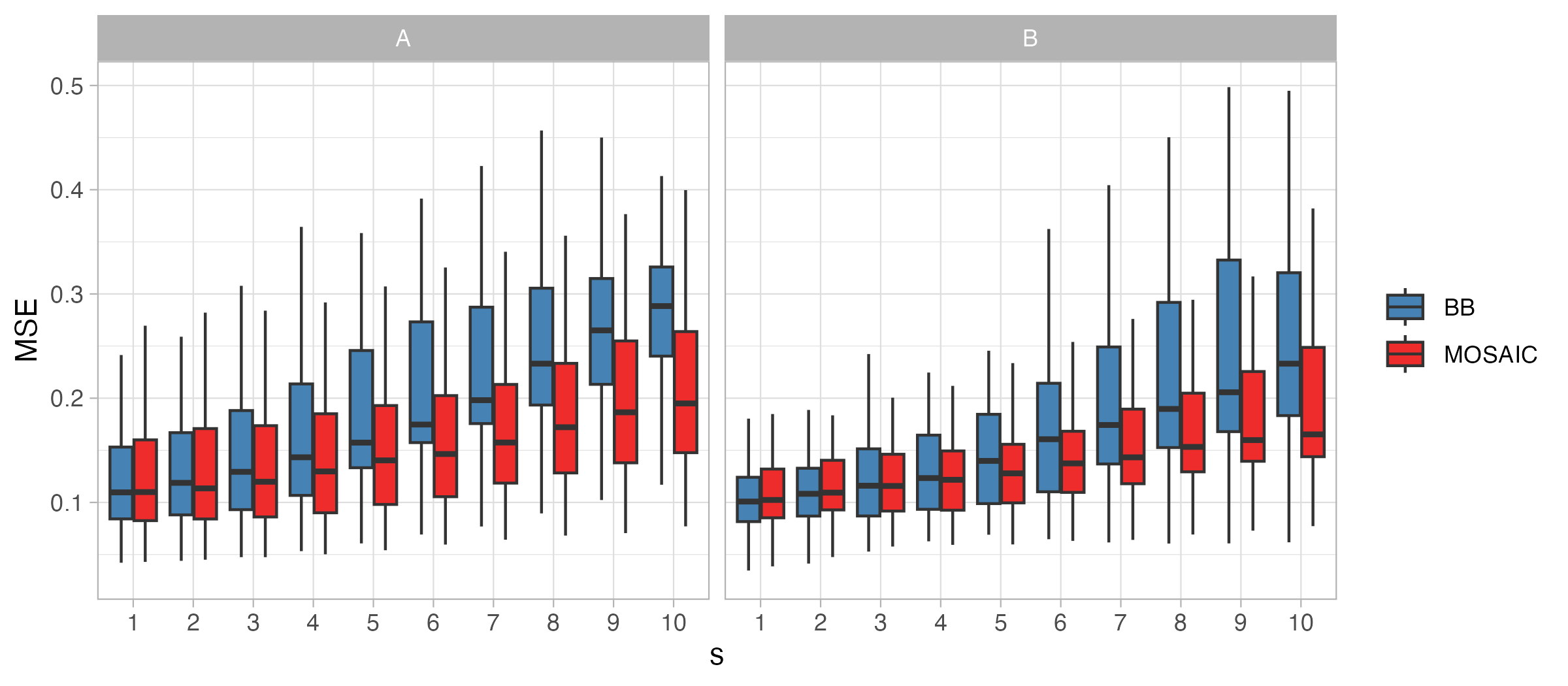}
    \caption{Average MSE in predicting $K_{q,s_q}^{(n_q)}$ for each $s_q=1,\dots,10$ for all $q=1,\dots,Q$ for  \texttt{MOSAIC} and the exchangeable negative binomial mixture of beta-Bernoulli (BB).}
    \label{fig:sim_MSE} 
\end{figure}

\section{Global patterns in airborne fungal diversity} \label{sec:application}

We analyze data from a global project on the biodiversity of fungi, the GSSP \citep{ovaskainen2024global, abrego2024airborne}. Fungi are among the most diverse and ecologically important kingdoms of life, despite fungal diversity and its environmental drivers remaining poorly understood \citep{abrego2024airborne}.
We consider samples in temperate and polar-continental climatic zones, for a total of $Q=34$ sites in the Northern hemisphere, covering North America, Europe, Asia and the Arctic. In the Supplementary Material we show the map with the sampling locations. The data consist of $1{,}425$ observations over 1 year with site-specific sample sizes $n_q$ ranging between $12$ and $103$, with a median of $37$. The Supplementary Material reports the GSSP labels used in the following to denote the sampling locations, together with the corresponding sample sizes for all sites.

The total number of identified species is $k = 15{,}352$. Data are very sparse, as more than half of the species ($57\%$) have only been observed once. To account for climatic and latitudinal effects, we include latitude and mean annual temperature along with its quadratic term as covariates. In addition, weather covariates may influence variation in fungal communities \citep{abrego2024airborne}. Consequently, we also include the mean annual wind speed and the mean annual precipitation for each sampling area, leading to $H=6$ columns of $\bm{Z}$.

We fit \texttt{MOSAIC} using the MCMC specification in the simulation study.
The posterior mean of the global number of species $N$ is $16{,}574$, with $95\%$ credible interval $(16{,}494,\, 16{,}659)$. This estimate exceeds the $15{,}352$ observed species, suggesting that about $1{,}200$ species remain undetected.
Table \ref{tab:beta} reports estimates of the regression coefficients $\bm{\zeta}$. The large negative intercept reflects the extreme sparsity of the data.
Temperature and latitude emerge as the main drivers of the distribution of airborne fungi, consistent with \citet{abrego2024airborne} and previous studies on soil fungi \citep{tedersoo2014global}. Fungi follow a predictable latitudinal diversity gradient, with prevalence decreasing with distance from the Equator.
The mean annual temperature coefficients reveal a hump-shaped relationship with temperature, where fungal activity increases with temperature up to an optimal point before declining at higher, physiologically stressful temperatures. This pattern is commonly described as a thermal performance curve \citep{hess2025evolution}.
The negative sign of the mean annual precipitation coefficient is consistent with rainfall washout, where fungal spores are scavenged from the air and deposited onto surfaces. Several studies have reported negative relationships between rainfall and airborne fungal abundance, see, e.g., \citet{pakpour2015relationships}, with genera such as \emph{Cladosporium} and \emph{Alternaria} being particularly susceptible to rainfall-induced declines in airborne occurrence \citep{ho2005characteristics}.

\begin{table}[t]
\centering
\small
\setlength{\tabcolsep}{4pt}
\begin{tabular}{lcccccc}
\hline
 & Intercept & Latitude & Temperature & Temperature$^2$ & Wind & Precipitation \\
\hline
Mean & $-5.55$ & $-0.27$ & $0.67$ & $-0.47$ & $-0.11$ & $-0.26$ \\
95\% CI & $(-5.56, -5.53)$ & $(-0.30, -0.25)$ & $(0.64, 0.71)$ & $(-0.50, -0.44)$ & $(-0.14, -0.09)$ & $(-0.28, -0.24)$ \\
\hline
\end{tabular}
\caption{Posterior means and $95\%$ credible intervals for $\bm{\zeta}$.}
\label{tab:beta}
\end{table}

To assess diversity in fungal composition we compute the $\beta$-diversity index proposed in Section \ref{sec:betadiv}. Figure \ref{fig:fungi_beta} displays the posterior mean of the similarity $1-\beta_{qr}$ for all pairs of sites, with detectability parameter $\sigma_q=0.6$ for all sites. Because the index is invariant to a global rescaling of the occupancy probabilities, the choice of $\sigma_q$ leaves the overall structure essentially unchanged, as illustrated in the Supplementary Material. The results are compared to those obtained via the Jaccard and Ochiai similarity indices \citep{Ochiai1957}. Although the overall patterns are broadly consistent, the proposed index reveals more complex and nuanced biodiversity patterns, particularly for sites with small sample sizes. For example, our measure identifies a small cluster of High Arctic and Arctic tundra locations, comprising CHA, ZAC, NUU, PRI and TOO, that is not visible in the Jaccard index. These sites are associated with cold-climate ecosystems and are geographically isolated, all having relatively small sample sizes $n_q$, highlighting the importance of borrowing information across locations in the estimation procedure.

In general, airborne fungal communities exhibit a pronounced biogeographic organization, with temperate European sites forming a highly connected cluster (from KLA to ZUR) and polar-continental locations in Europe grouping separately (from TUR to TAR). These patterns are consistent with evidence that fungal community composition is strongly shaped by climatic zone \citep{abrego2024airborne}. The high similarity among temperate European sites likely reflects shared climate and atmospheric factors, while Arctic and polar-continental sites appear to be more strongly shaped by environmental filtering and geographic isolation. For example, the marked dissimilarity of CAM (Sandger\dh{}i, Iceland) and CHA (Cambridge Bay, Canada) from most other locations is ecologically meaningful and consistent with Arctic aerobiology and microbial biogeography theory \citep{cao2024fungal}. This likely reflects their atypical climatic settings, with Sandger\dh{}i influenced by strong wind exposure and North Atlantic air-mass transport and Cambridge Bay characterized by extreme continental isolation and sparse vegetation, both of which may shape local airborne fungal assemblages differently from typical polar-continental sites.

\begin{figure}[t]
    \centering
    \includegraphics[width=0.99\textwidth]{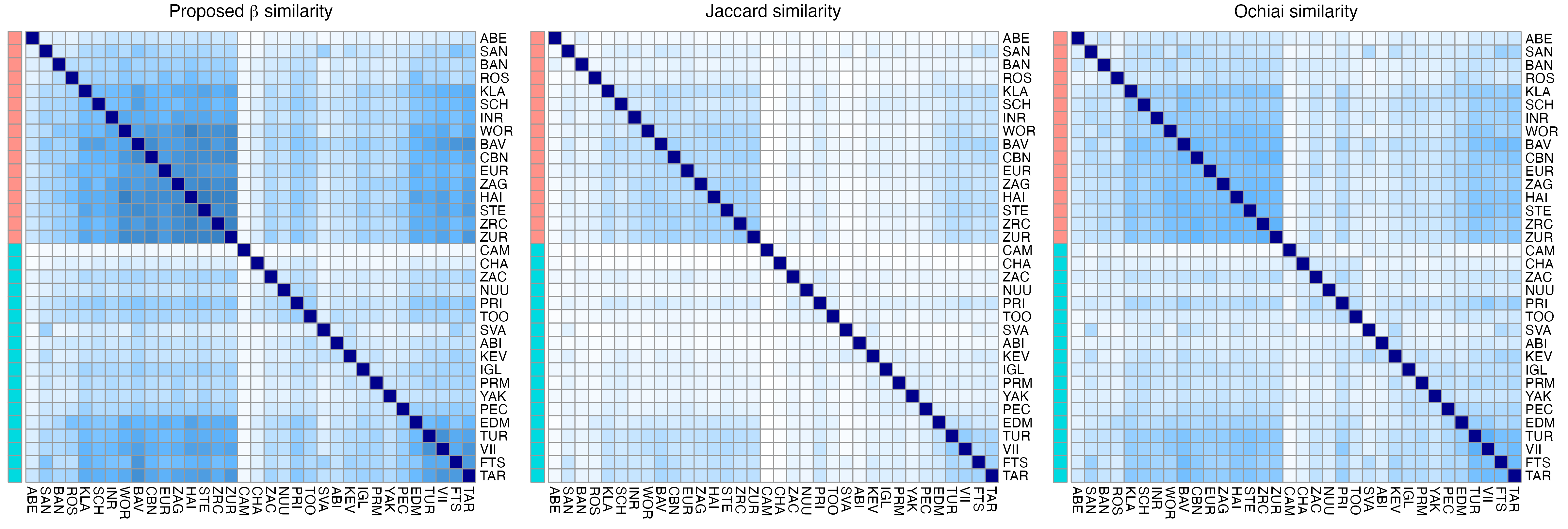}
    \caption{Posterior mean of the proposed $\beta$-similarity for all sites, grouped by climatic zone, compared with Jaccard and Ochiai similarities.}
    \label{fig:fungi_beta}
\end{figure}

To further explore fungal biodiversity, we produce extrapolations of rarefaction curves for shared species between pairs of sites, $C_{s_q,s_r}^{(n_q,n_r)}$. Since most sites have different sample sizes, the one-dimensional summary curve taking the diagonal of the whole surface has limited utility here. We instead construct extrapolations of rarefaction curves by simulating from the posterior under two prediction sets: (i) a one-side test set, where we simulate additional observations only for the site with the smaller sample size and compute the number of shared species using the remaining observations from the other site; and (ii) a full test set, where we simulate additional observations from both sites. Assuming $n_q > n_r$, the size of the one-side test set is $n_q - n_r$. This approach makes full use of the available data, avoiding unnecessary discarding of observations.
The results with test size $s_q = 1,\dots,30$ are shown in Figure \ref{fig:fungi_semishared} for three illustrative site pairs: a pair of temperate locations, Klagenfurt (KLA) and Zurich (ZUR); a pair of polar-continental locations, Nuuk (NUU) and Wrangel Island (PRI); and a cross-zone pair, Klagenfurt and Fetsund (FTS). The figure also displays the posterior density of $\beta_{qr}$ for each pair.
The expected number of new shared species grows rather slowly when collecting additional balanced samples for the Nuuk--Wrangel Island pair, which exhibits the highest $\beta$-diversity among the three.
As previously noted, Arctic locations tend to appear quite distinct, likely reflecting the combined effects of geographic isolation, sparse tundra vegetation, and limited regional spore sources. Growth is faster for the remaining two pairs, which comprise locations in Europe, even though Fetsund belongs to the polar-continental climatic zone, consistent with lower $\beta$-diversity. In the Supplementary Material we provide additional experiments assessing the model performance in predicting the number of shared species in out-of-sample experiments.

\begin{figure}[t!]
    \centering
    \includegraphics[width=0.92\textwidth]{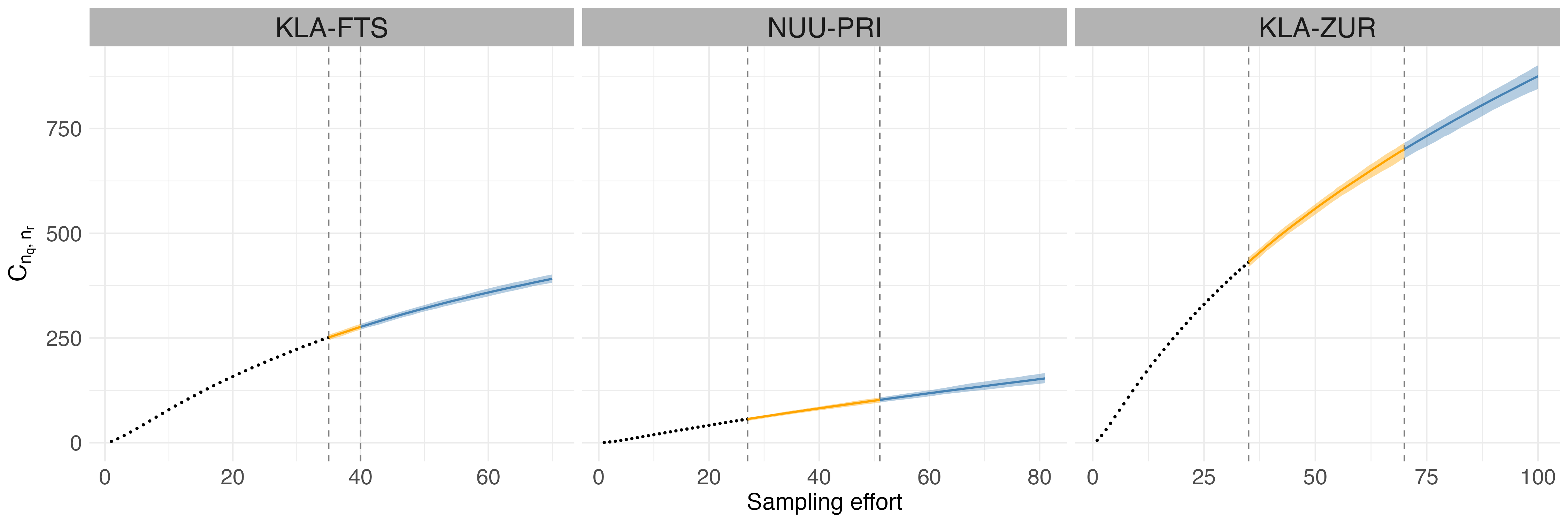}
    \includegraphics[width=0.92\textwidth]{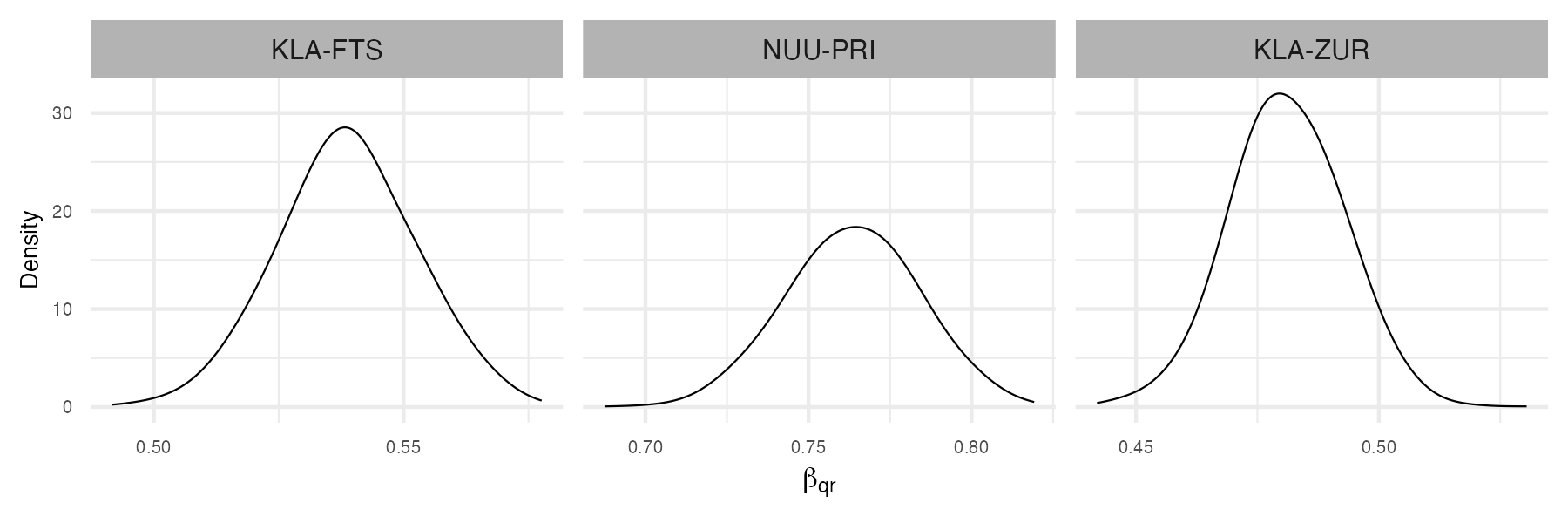}
    \caption{Top panel: empirical site-specific rarefaction curves (dots), posterior mean (solid lines) and 95\% credible intervals (shaded ribbons) for one-side test and test predictions across three sampling location pairs. Gray vertical lines indicate training and one-side test sets. Bottom panel: posterior densities of $\beta_{qr}$ for the same three pairs.}
    \label{fig:fungi_semishared}
\end{figure}

To explore the detectability component, we provide site-specific rarefaction curves for the latent number of species $L_{q,n_q}$. Figure \ref{fig:fungi_latent_rar} shows results for three sites under three values of $\sigma \in \{0.05, 0.5, 0.95\}$. These values span a range of detectability regimes: $\sigma=0.05$ corresponds to almost no species being detectable, $\sigma=0.95$ to near-perfect detectability, and $\sigma=0.5$ to an intermediate case. The curves characterize the growth rate of the number of species based on the latent occupancy data $\bm{W}$, which has no directly observed counterpart. Across all three sites, the effect of detectability is clearly visible: the expected number of species increases substantially as detectability decreases, highlighting how undetected species can represent a large fraction of the underlying diversity. We stress that these are illustrative scenarios rather than inferences about detectability: as discussed in Remark \ref{rem:identifiability}, $\sigma$ is not identified from the data, so each curve reports what the latent diversity would be under the corresponding assumption. In the Supplementary Material we provide analogous results for the rarefaction curves of the latent number of shared species based on the occupancy data, $U_{n_q,n_r}$.

\begin{figure}[t!]
    \centering
    \includegraphics[width=0.99\textwidth]{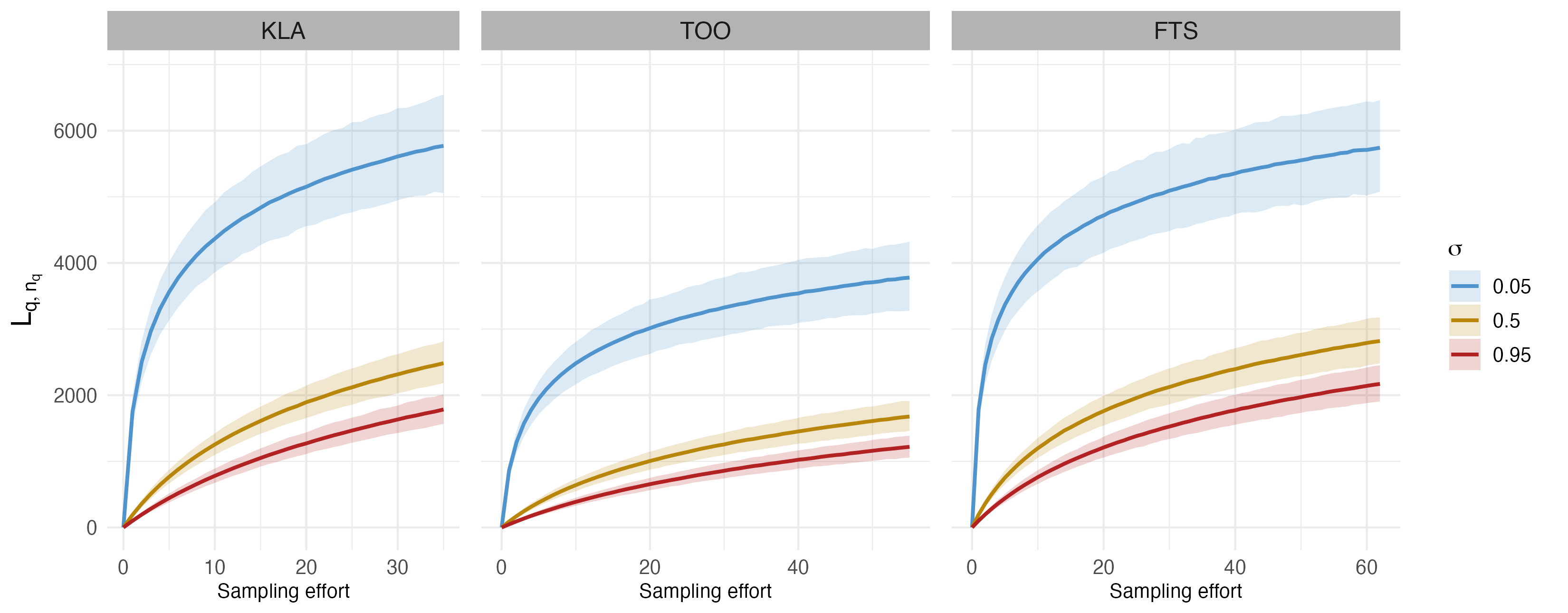}
    \caption{Posterior means and 95\% credible intervals for site-specific rarefaction curves for the latent number of species $L_{q,n_q}$ for different values of $\sigma$.}
    \label{fig:fungi_latent_rar}
\end{figure}

\section{Discussion}

We have proposed a \texttt{MOSAIC} modeling framework, defining a novel partially exchangeable feature allocation model with imperfect detection. Our motivation is to define a principled statistical framework for  inference on $\beta-$diversity and heterogeneity in community composition across sites. In order for this framework to be useful to ecologists, including in the framework of modern large scale biodiversity surveys, it is crucial for model fitting to be both efficient and scalable. For this reason, we have balanced ecological realism and computational efficiency and scalability in \texttt{MOSAIC}. Data and code to  reproduce our analysis are available at \url{https://github.com/federicastolf/MOSAIC}.

To obtain analytic tractability, we have made some simplifying assumptions that are not entirely realistic in ecology. For example, unlike joint species distribution models (JSDMs), we have not modeled statistical dependence in co-occurrence across species. Modern versions of JSDMs typically rely on Bayesian factor models for high-dimensional multivariate binary outcome species occurrence data, and substantial challenges arise in implementing such models for highly diverse communities containing 1,000s to millions of species \citep{mauri2025factor}. In addition, a key aspect that forms much of the focus of the \texttt{MOSAIC} framework is the ability to allow and predict new species discovery as sampling proceeds, which is currently lacking in JSDMs models with the exception of \cite{stolf2025infinite}. In future work, it will be of substantial interest to define new approaches bridging between JSDMs and feature allocation models such as \texttt{MOSAIC}.

An additional key consideration in conducting inference in \texttt{MOSAIC} and related models is the challenge of non-identifiability of detectability parameters. For other types of ecological occupancy models, detectability parameters can sometimes be inferred using targeted sampling designs, such as capture-recapture \citep{pollock1990statistical}. For ecological surveys using cyclone samplers or malaise traps and DNA meta-barcoding, such designs are not possible, as one cannot release the organisms. An important open problem is  developing new designs that allow inference on detectability parameters. 

\section*{Acknowledgments}

This research was partially supported by the European Research Council under the European Union’s Horizon 2020 research and innovation programme (grant agreement No 856506), the National Science Foundation (IIS-2426762) and the Office of Naval Research (grant N000142412626). The authors thank Otso Ovaskainen, Tomas Roslin, and other members of the Lifeplan project team for helpful discussions.

\bibliography{reference}

\newpage

\section*{Supplementary materials}

\setcounter{section}{0}
\setcounter{table}{0}
\setcounter{figure}{0}
\setcounter{equation}{0}
\setcounter{proposition}{0}
\setcounter{lemma}{0}
\renewcommand{\thelemma}{S\arabic{lemma}}
\renewcommand{\theproposition}{S\arabic{proposition}}
\renewcommand{\thefigure}{S\arabic{figure}}
\renewcommand{\thesection}{S.\arabic{section}}
\renewcommand{\thetable}{S\arabic{table}}
\renewcommand{\theequation}{S\arabic{equation}}

\section{Proofs of theoretical results}

\subsection{Auxiliary results}

\begin{lemma} \label{lemma:sm_pgf}
Let $V \ge 1$ be an integer and let $\varrho_1,\dots,\varrho_V$ be independent and identically distributed random variables with values in $(0,1)$. Let $G_1,\dots,G_V$ be conditionally independent Bernoulli random variables given $(\varrho_1,\dots,\varrho_V)$, with $\pr(G_j = 1 \mid \varrho_1,\dots,\varrho_V) = \varrho_j$ for $j=1,\dots,V$. Then
\begin{equation*}
    \sum_{j=1}^{V} G_j \sim \mathrm{Binomial}\{V, E(\varrho_1)\}.
\end{equation*}
\end{lemma}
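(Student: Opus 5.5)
The key observation is that, after integrating out the random probabilities, the indicators $G_1,\dots,G_V$ are independent Bernoulli variables with common success probability $E(\varrho_1)$. The binomial law of their sum then follows immediately. This is a de Finetti–type mixing argument made trivial by the independence of the $\varrho_j$. We do not need to compute the distribution of the sum by conditioning on it directly. We only need the joint law of $(G_1,\dots,G_V)$.

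Concretely, fix $g=(g_1,\dots,g_V)\in\{0,1\}^V$. By the tower property and conditional independence,
\begin{equation*}
\pr(G_1=g_1,\dots,G_V=g_V) = E\Big\{\prod_{j=1}^V \varrho_j^{g_j}(1-\varrho_j)^{1-g_j}\Big\}.
\end{equation*}
Since the $\varrho_j$ are independent, the expectation of the product factorizes into $\prod_{j=1}^V E\{\varrho_j^{g_j}(1-\varrho_j)^{1-g_j}\}$. Each factor equals $E(\varrho_j)$ when $g_j=1$ and $1-E(\varrho_j)$ when $g_j=0$. Because the $\varrho_j$ are identically distributed, $E(\varrho_j)=E(\varrho_1)=:\bar\varrho$ for every $j$. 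Hence
\begin{equation*}
\pr(G_1=g_1,\dots,G_V=g_V)=\bar\varrho^{\,\sum_j g_j}(1-\bar\varrho)^{V-\sum_j g_j}.
\end{equation*}
This identifies $G_1,\dots,G_V$ as i.i.d. $\mathrm{Bernoulli}(\bar\varrho)$.

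Finally, we sum over all $g$ with $\sum_j g_j = t$. There are $\binom{V}{t}$ such vectors, which gives $\pr(\sum_j G_j=t)=\binom{V}{t}\bar\varrho^{\,t}(1-\bar\varrho)^{V-t}$ for $t=0,\dots,V$, the claimed $\mathrm{Binomial}\{V,E(\varrho_1)\}$ law. As an alternative route, suggested by the label of the lemma, one could compute the probability generating function $E(z^{\sum_j G_j}) = E\{\prod_j (1-\varrho_j+\varrho_j z)\} = (1-\bar\varrho+\bar\varrho z)^V$. The factorization step is identical.

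There is no real obstacle here. The only point requiring care is that factorizing the expectation uses the independence of the $\varrho_j$ across $j$, and not merely the conditional independence of the $G_j$. This is exactly why the result fails for exchangeable but dependent $\varrho_j$, where a mixed binomial would arise instead.
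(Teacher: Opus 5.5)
Your proof is correct and rests on the same key step as the paper's: condition on $(\varrho_1,\dots,\varrho_V)$, then factorize the expectation of the product using the independence of the $\varrho_j$. The paper applies this step directly to the probability generating function $E(z^{\sum_j G_j})$, exactly as in your alternative route, whereas your joint-pmf version also shows that the $G_j$ are unconditionally i.i.d.\ $\mathrm{Bernoulli}\{E(\varrho_1)\}$, a slightly stronger statement.
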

\begin{proof}
Consider the probability generating function $G(z) = E(z^{\sum_{j=1}^V G_j})$. By the tower property and the conditional independence of the $G_j$'s,
\begin{equation*}
    E\Big(z^{\sum_{j=1}^V G_j} \Big)
    = E\Big\{ E\Big(z^{\sum_{j=1}^V G_j} \mid \varrho_1,\dots,\varrho_V \Big) \Big\}
    = E\Big\{ \prod_{j=1}^{V} E\big( z^{G_j} \mid \varrho_1,\dots,\varrho_V \big) \Big\}.
\end{equation*}
Since $E( z^{G_j} \mid \varrho_1,\dots,\varrho_V) = 1 - \varrho_j + \varrho_j z$ depends on $\varrho_j$ only, and the $\varrho_j$'s are independent and identically distributed,
\begin{equation*}
    E\Big(z^{\sum_{j=1}^V G_j} \Big)
    = \prod_{j=1}^{V} E( 1 - \varrho_j + \varrho_j z )
    = \{ 1 - E(\varrho_1) + E(\varrho_1) z \}^{V},
\end{equation*}
which is the probability generating function of a $\mathrm{Binomial}\{V,E(\varrho_1)\}$ random variable.
\end{proof}

\begin{lemma} \label{lemma:sm_thinning}
If $V \sim \mathrm{Poisson}(\mu)$ and $X \mid V \sim \mathrm{Binomial}(V,p)$, then
$X \sim \mathrm{Poisson}(\mu p)$.
\end{lemma}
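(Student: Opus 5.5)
The natural route is to compute the probability generating function of $X$ by conditioning on $V$, mirroring the proof of Lemma~\ref{lemma:sm_pgf}. For $z \in [0,1]$, the tower property together with the binomial generating function gives
\begin{equation*}
E(z^X) = E\{E(z^X \mid V)\} = E\{(1-p+pz)^V\}.
\end{equation*}

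Next we apply the Poisson generating function $E(t^V) = \exp\{\mu(t-1)\}$ at $t = 1-p+pz$. This yields
\begin{equation*}
E(z^X) = \exp\{\mu(pz - p)\} = \exp\{\mu p(z-1)\},
\end{equation*}
which is the generating function of a $\mathrm{Poisson}(\mu p)$ random variable.

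Since a probability generating function on $[0,1]$ determines the law of a nonnegative integer-valued variable, the conclusion follows. Nothing here is a genuine obstacle. The only points to check are that the cases $p=0$ and $p=1$ are covered by the same formula, and that the Poisson$(0)$ law is read as a point mass at zero when $\mu p = 0$.

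As a cross-check one can argue directly: $\pr(X=x) = \sum_{v\ge x} e^{-\mu}\mu^v/v!\,\binom{v}{x}p^x(1-p)^{v-x}$. Substituting $w = v-x$ and summing the exponential series gives $e^{-\mu p}(\mu p)^x/x!$.
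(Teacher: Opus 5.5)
Your proof is correct, but your main argument takes a different route from the paper's. The paper proves Lemma~\ref{lemma:sm_thinning} by computing the mass function directly. It writes $\pr(X=x)=\sum_{v\ge x}\binom{v}{x}p^x(1-p)^{v-x}\mu^v e^{-\mu}/v!$, factors out $(\mu p)^x e^{-\mu}/x!$, and sums the remaining series $\sum_{v\ge x}\{\mu(1-p)\}^{v-x}/(v-x)!=e^{\mu(1-p)}$. That is exactly your ``cross-check''.

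Your generating-function argument is shorter and matches how the paper itself handles Lemma~\ref{lemma:sm_pgf} and the negative binomial analogue, Lemma~\ref{lemma:sm_nbthinning}. It also extends verbatim to any law of $V$, since $E(z^X)=E\{(1-p+pz)^V\}$ is just the generating function of $V$ evaluated at $1-p+pz$. This is precisely how the negative binomial case is obtained. Its cost is an appeal to the uniqueness of probability generating functions. The paper's computation, by contrast, is fully elementary and identifies the mass function directly. Your remarks on the edge cases $p\in\{0,1\}$ and $\mu p=0$ are fine, and both arguments cover them without modification.
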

\begin{proof}
For any integer $x \ge 0$,
\begin{align*}
    \pr(X = x)
    &= \sum_{v \ge x} \binom{v}{x} p^x (1-p)^{v-x} \frac{\mu^v e^{-\mu}}{v!}
    = \frac{(\mu p)^x}{x!} e^{-\mu} \sum_{v \ge x} \frac{\{\mu(1-p)\}^{v-x}}{(v-x)!}\\
    &= \frac{(\mu p)^x}{x!} e^{-\mu} e^{\mu(1-p)}
    = \frac{(\mu p)^x}{x!} e^{-\mu p},
\end{align*}
which is the probability mass function of a Poisson distribution with mean $\mu p$.
\end{proof}

\subsection{Proof of Theorem 1}

The proof is a special case of Theorem 1 in \cite{james1972products} with $n=2$, $a_1 = -\alpha_q$, $b_1=\sigma_q(\alpha_q+\theta_q)$, $a_2=-\alpha_q+\sigma_q(\alpha_q+\theta_q)$ and $b_2=(1-\sigma_q)(\alpha_q+\theta_q)$.

\subsection{Proof of Proposition 1}
Conditionally on $N$, the number of distinct species observed in the $n_q$ samples collected at site $q$ is
\begin{equation*}
    K_{q,n_q} = \sum_{j=1}^N \mathds{1}\Big( \sum_{i=1}^{n_q} \tilde{Y}_{ijq}>0 \Big) = \sum_{j=1}^N G_{jq},
\end{equation*}
where the $G_{jq}$ are independent Bernoulli given the random probabilities
\begin{equation*}
    \varrho_{jq} := \pr(G_{jq}=1 \mid \pi_{jq})
    = 1- \pr\Big( \sum_{i=1}^{n_q} \tilde{Y}_{ijq}=0 \mid \pi_{jq}\Big)
    = 1- (1-\pi_{jq})^{n_q}.
\end{equation*}
The $\pi_{jq}$'s are independent and identically distributed across $j$, and hence so are the $\varrho_{jq}$'s, being the same measurable transformation of them. Lemma \ref{lemma:sm_pgf} then gives $K_{q,n_q} \mid N \sim \mathrm{Binomial}\{N,E(\varrho_{1q})\}$ with
\begin{equation*}
    E(\varrho_{1q}) = 1 - E\{(1-\pi_{1q})^{n_q}\}
      = 1 - p_{0,q}(\alpha_q,\theta_q,n_q)
      = p_{1,q}(\alpha_q,\theta_q,n_q).
\end{equation*}
Since $N \sim \mathrm{Poisson}(\lambda)$, Lemma \ref{lemma:sm_thinning} yields $K_{q,n_q} \sim \mathrm{Poisson}\{\lambda\, p_{1,q}(\alpha_q,\theta_q,n_q)\}$.
Conditionally on $N$, the total number of distinct species observed in $\bm{n}$ samples is
\begin{equation*}
    K_{\bm{n}} = \sum_{j=1}^N \mathds{1}\Big(\sum_{q=1}^Q \sum_{i=1}^{n_q} \tilde{Y}_{ijq}>0\Big) = \sum_{j=1}^N G_j,
\end{equation*}
where the $G_j$ are independent Bernoulli given the random probabilities
\begin{equation*}
    \varrho_{j} := \pr(G_j=1 \mid \pi_{j1}, \dots, \pi_{jQ})
    = 1- \pr\Big(\sum_{q=1}^Q \sum_{i=1}^{n_q} \tilde{Y}_{ijq}=0 \mid \pi_{j1},\dots,\pi_{jQ}\Big)
    = 1- \prod_{q=1}^Q(1-\pi_{jq})^{n_q}.
\end{equation*}
The vectors $(\pi_{j1},\dots,\pi_{jQ})$ are independent and identically distributed across $j$, and hence so are the $\varrho_j$'s. Lemma \ref{lemma:sm_pgf} then gives $K_{\bm{n}} \mid N \sim \mathrm{Binomial}\{N,E(\varrho_1)\}$ with
\begin{equation*}
    E(\varrho_1) = 1- E\Big\{ \prod_{q=1}^Q(1-\pi_{1q})^{n_q} \Big\}
      = 1- \prod_{q=1}^Q E\{(1-\pi_{1q})^{n_q} \}
      = 1- \prod_{q=1}^Q p_{0,q}(\alpha_q,\theta_q,n_q)
      = p_1(\bm{\alpha},\bm{\theta},\bm{n}),
\end{equation*}
where the second equality uses the independence of the occurrence probabilities across sites. Lemma \ref{lemma:sm_thinning} then gives $K_{\bm{n}} \sim \mathrm{Poisson}\{\lambda\, p_1(\bm{\alpha},\bm{\theta},\bm{n})\}$.
For 
$$L_{q,n_q} = \sum_{j=1}^N \mathds{1}( \sum_{i=1}^{n_q} \tilde{W}_{ijq}>0) \quad \text{and} \quad L_{\bm{n}} = \sum_{j=1}^N \mathds{1}(\sum_{q=1}^Q \sum_{i=1}^{n_q} \tilde{W}_{ijq}>0),$$ 
the proof proceeds in exactly the same way, with $\pi_{jq}$ replaced by the occupancy probabilities $\eta_{jq} \overset{\textup{ind}}{\sim} \mathrm{Beta}\{-\alpha_q, \sigma_q(\alpha_q+\theta_q)\}$, which are independent across $j$ and across sites. Since $\sigma_q(\alpha_q+\theta_q) = \alpha_q + \theta_q(\sigma_q)$, we have $E\{(1-\eta_{1q})^{n_q}\} = p_{0,q}(\alpha_q,\theta_q(\sigma_q),n_q)$, and it follows that
$L_{q,n_q} \sim \mathrm{Poisson}\{\lambda\, p_{1,q}(\alpha_q,\theta_q(\sigma_q),n_q)\}$ and
$L_{\bm{n}} \sim \mathrm{Poisson}\{\lambda\, p_1(\bm{\alpha},\bm{\theta}(\bm{\sigma}),\bm{n})\}$.

\subsection{Proof of Proposition 2}
Conditionally on $N$, the number of species shared between two sites $q$ and $r$, with $q \neq r$, is
\begin{equation*}
    C_{n_q,n_r} = \sum_{j=1}^N G_{j,qr},
    \qquad
    G_{j,qr} = \mathds{1}\Big(\sum_{i=1}^{n_q} \tilde{Y}_{ijq}>0\Big)\,\mathds{1}\Big(\sum_{i=1}^{n_r} \tilde{Y}_{ijr}>0\Big),
\end{equation*}
where the $G_{j,qr}$ are independent Bernoulli given the random probabilities
\begin{align*}
     \varrho_{j,qr} := \pr(G_{j,qr}=1 \mid \pi_{jq}, \pi_{jr})
     &= \Big\{ 1-\pr\Big(\sum_{i=1}^{n_q} \tilde{Y}_{ijq}=0 \mid \pi_{jq}\Big)\Big\}
        \Big\{ 1-\pr\Big(\sum_{i=1}^{n_r} \tilde{Y}_{ijr}=0 \mid \pi_{jr}\Big)\Big\}\\
     &= \{1-(1-\pi_{jq})^{n_q}\} \{1-(1-\pi_{jr})^{n_r}\},
\end{align*}
the factorization following from the conditional independence of the samples collected at the two sites. The pairs $(\pi_{jq},\pi_{jr})$ are independent and identically distributed across $j$, and hence so are the $\varrho_{j,qr}$'s, so Lemma \ref{lemma:sm_pgf} gives $C_{n_q,n_r} \mid N \sim \mathrm{Binomial}\{N,E(\varrho_{1,qr})\}$ with
\begin{align*}
    E(\varrho_{1,qr}) &= E\big[\{1-(1-\pi_{1q})^{n_q}\} \{1-(1-\pi_{1r})^{n_r}\}\big]
       = \big[1-E\{(1-\pi_{1q})^{n_q}\}\big]\big[1-E\{(1-\pi_{1r})^{n_r}\}\big]\\
       &= p_{1,q}(\alpha_q,\theta_q,n_q)\, p_{1,r}(\alpha_r,\theta_r,n_r),
\end{align*}
where the second equality uses the independence of the occurrence probabilities across sites. Since $N \sim \mathrm{Poisson}(\lambda)$, Lemma \ref{lemma:sm_thinning} then yields $C_{n_q,n_r} \sim \mathrm{Poisson}\{\lambda\, p_{1,q}(\alpha_q,\theta_q,n_q)\, p_{1,r}(\alpha_r,\theta_r,n_r)\}$.
For the latent number of shared species,
$U_{n_q,n_r} = \sum_{j=1}^N \mathds{1}(\sum_{i=1}^{n_q} \tilde{W}_{ijq}>0) \mathds{1}(\sum_{i=1}^{n_r} \tilde{W}_{ijr}>0)$, the proof proceeds in the same way with $\pi_{jq}, \pi_{jr}$ replaced by the occupancy probabilities $\eta_{jq} \overset{\textup{ind}}{\sim} \mathrm{Beta}\{-\alpha_q, \sigma_q(\alpha_q+\theta_q)\}$ and $\eta_{jr} \overset{\textup{ind}}{\sim} \mathrm{Beta}\{-\alpha_r, \sigma_r(\alpha_r+\theta_r)\}$, which are independent across $j$ and across sites. Since $\sigma_q(\alpha_q+\theta_q) = \alpha_q + \theta_q(\sigma_q)$, we have $E\{(1-\eta_{1q})^{n_q}\} = p_{0,q}(\alpha_q,\theta_q(\sigma_q),n_q)$, giving
$$
U_{n_q,n_r} \sim \mathrm{Poisson}\{\lambda\, p_{1,q}(\alpha_q,\theta_q(\sigma_q),n_q)\,
p_{1,r}(\alpha_r,\theta_r(\sigma_r),n_r)\}.
$$

\subsection{Proof of Theorem 3}

Let $\mathcal{A} = \{ j \in \{1,\dots,N\} : \sum_{q=1}^Q \sum_{i=1}^{n_q} \tilde{Y}_{ijq}=0 \}$ be the set of species unrecorded at all sites, so that $|\mathcal{A}| = N-k$ given $\bm{Y}$. In the \texttt{MOSAIC} model the number of new species that would be discovered in $\bm{s}$ additional samples is
\begin{equation*}
    K_{\bm{s}}^{(\bm{n})} = \sum_{j \in \mathcal{A}}
    \mathds{1}\Big(\sum_{q=1}^Q\sum_{i=n_q+1}^{n_q+s_q} \tilde{Y}_{ijq}>0\Big) = \sum_{j \in \mathcal{A}} G_j,
\end{equation*}
where the $G_j$'s are independent Bernoulli given the random probabilities
\begin{equation*}
    \varrho_j := \pr\Big(\sum_{q=1}^Q\sum_{i=n_q+1}^{n_q+s_q} \tilde{Y}_{ijq}>0 \mid \pi_{j1},\dots,\pi_{jQ}\Big)
    = 1- \prod_{q=1}^Q (1-\pi_{jq})^{s_q}.
\end{equation*}
By Theorem 2 in the main text, the occurrence probabilities of the previously unobserved species satisfy $\pi'_{jq} \mid \bm{Y} \overset{\textup{ind}}{\sim} \mathrm{Beta}(-\alpha_q, \alpha_q+\theta_q+n_q)$, independently across species and sites, so that the $\varrho_j$'s are independent and identically distributed across $j \in \mathcal{A}$ given $\bm{Y}$, with
\begin{equation*}
    E(\varrho_j \mid \bm{Y})
     = 1- \prod_{q=1}^Q  E\{(1-\pi_{jq})^{s_q} \mid \bm{Y}\}
       = 1 - \prod_{q=1}^Q p_{0,q}(\alpha_q,\theta_q+n_q,s_q)
       = p_1(\bm{\alpha},\bm{\theta}+\bm{n},\bm{s}).
\end{equation*}
Lemma \ref{lemma:sm_pgf} then gives $K_{\bm{s}}^{(\bm{n})} \mid \bm{Y}, N \sim \mathrm{Binomial}\{N-k,\, p_1(\bm{\alpha},\bm{\theta}+\bm{n},\bm{s})\}$. By the same theorem, $N-k \mid \bm{Y} \sim \mathrm{Poisson}\{\lambda\, p_0(\bm{\alpha},\bm{\theta},\bm{n})\}$, so that marginalizing through Lemma \ref{lemma:sm_thinning} gives $K_{\bm{s}}^{(\bm{n})} \mid \bm{Y} \sim \mathrm{Poisson}\{\lambda\, p_0(\bm{\alpha},\bm{\theta},\bm{n})\, p_1(\bm{\alpha},\bm{\theta}+\bm{n},\bm{s})\}$.

Let now $\mathcal{A}_q = \{ j \in \{1,\dots,N\} : \sum_{i=1}^{n_q} \tilde{Y}_{ijq}=0 \}$ be the set of species not yet recorded at site $q$, so that $|\mathcal{A}_q| = N-k_q$; note that $\mathcal{A}_q$ contains species that have been recorded at some other site. The number of species new to site $q$ that would be discovered in $s_q$ additional samples is
\begin{equation*}
    K_{q,s_q}^{(n_q)} = \sum_{j \in \mathcal{A}_q} \mathds{1}\Big(\sum_{i=n_q+1}^{n_q+s_q} \tilde{Y}_{ijq}>0\Big)
    = \sum_{j \in \mathcal{A}_q} G_{jq},
\end{equation*}
where the $G_{jq}$'s are independent Bernoulli given the random probabilities $\varrho_{jq} := 1-(1-\pi_{jq})^{s_q}$, which are independent and identically distributed across $j \in \mathcal{A}_q$ given $\bm{Y}$, again by Theorem 2, and satisfy
\begin{equation*}
    E(\varrho_{jq} \mid \bm{Y})
     = 1 - E\{(1-\pi_{jq})^{s_q} \mid \bm{Y}\}
     = 1- p_{0,q}(\alpha_q,\theta_q+n_q,s_q)
     = p_{1,q}(\alpha_q,\theta_q+n_q,s_q).
\end{equation*}
We decompose $\mathcal{A}_q = \mathcal{A} \cup (\mathcal{A}_q \setminus \mathcal{A})$, where $\mathcal{A}_q \setminus \mathcal{A}$ is the set of species recorded at some site other than $q$, whose cardinality $k-k_q$ is known given $\bm{Y}$. Applying Lemma \ref{lemma:sm_pgf} to each of the two groups separately gives
\begin{align*}
    K_{q,s_q}^{(n_q)} \mid \bm{Y}, N \overset{d}{=}\;
    &\mathrm{Binomial}\{N-k,\, p_{1,q}(\alpha_q,\theta_q+n_q,s_q)\}\\
    +\;&\mathrm{Binomial}\{k-k_q,\, p_{1,q}(\alpha_q,\theta_q+n_q,s_q)\},
\end{align*}
the two terms being independent, since they involve disjoint sets of species with independent occurrence probabilities. Marginalizing the first term over $N-k \mid \bm{Y} \sim \mathrm{Poisson}\{\lambda\, p_0(\bm{\alpha},\bm{\theta},\bm{n})\}$ through Lemma \ref{lemma:sm_thinning}, we obtain
\begin{align*}
    K_{q,s_q}^{(n_q)} \mid \bm{Y} \overset{d}{=}\;
    &\mathrm{Poisson}\{\lambda\, p_0(\bm{\alpha},\bm{\theta},\bm{n})\, p_{1,q}(\alpha_q,\theta_q+n_q,s_q)\}\\
    +\;&\mathrm{Binomial}\{k-k_q,\, p_{1,q}(\alpha_q,\theta_q+n_q,s_q)\}.
\end{align*}

The latent quantities are obtained along the same lines, replacing the observed data $\bm{Y}$ with the occupancies $\bm{W}$ throughout. Let $\mathcal{B} = \{ j \in \{1,\dots,N\} : \sum_{q=1}^Q \sum_{i=1}^{n_q} \tilde{W}_{ijq}=0 \}$ and $\mathcal{B}_q = \{ j \in \{1,\dots,N\} : \sum_{i=1}^{n_q} \tilde{W}_{ijq}=0 \}$ be the sets of species absent from all samples and from those collected at site $q$, respectively, so that $|\mathcal{B}| = N-\ell$ and $|\mathcal{B}_q| = N-\ell_q$, and set
\begin{equation*}
    L_{\bm{s}}^{(\bm{n})} = \sum_{j \in \mathcal{B}} \mathds{1}\Big(\sum_{q=1}^Q\sum_{i=n_q+1}^{n_q+s_q} \tilde{W}_{ijq}>0\Big),
    \qquad
    L_{q,s_q}^{(n_q)} = \sum_{j \in \mathcal{B}_q} \mathds{1}\Big(\sum_{i=n_q+1}^{n_q+s_q} \tilde{W}_{ijq}>0\Big).
\end{equation*}
Conditioning on $\bm{W}$ rather than on $\bm{Y}$ is essential here: a null entry of $\bm{Y}$ is ambiguous between absence and non-detection, whereas a null entry of $\bm{W}$ identifies absence, and only in the latter case does the occupancy probability of an unrecorded species admit the conjugate posterior $\eta_{jq} \mid \bm{W} \overset{\textup{ind}}{\sim} \mathrm{Beta}\{-\alpha_q, \sigma_q(\alpha_q+\theta_q)+n_q\}$. Since $\sigma_q(\alpha_q+\theta_q) = \alpha_q + \theta_q(\sigma_q)$, the argument used for the occurrence probabilities applies verbatim with $\theta_q$ replaced by $\theta_q(\sigma_q)$, and the number of species absent everywhere satisfies $N-\ell \mid \bm{W} \sim \mathrm{Poisson}\{\lambda\, p_0(\bm{\alpha},\bm{\theta}(\bm{\sigma}),\bm{n})\}$. Hence
\begin{equation*}
    L_{\bm{s}}^{(\bm{n})} \mid \bm{W} \sim \mathrm{Poisson}\{\lambda\, p_0(\bm{\alpha},\bm{\theta}(\bm{\sigma}),\bm{n})\, p_1(\bm{\alpha},\bm{\theta}(\bm{\sigma})+\bm{n},\bm{s})\},
\end{equation*}
and, decomposing $\mathcal{B}_q = \mathcal{B} \cup (\mathcal{B}_q \setminus \mathcal{B})$ with $|\mathcal{B}_q \setminus \mathcal{B}| = \ell-\ell_q$ known given $\bm{W}$,
\begin{align*}
    L_{q,s_q}^{(n_q)} \mid \bm{W} \overset{d}{=}\;
    &\mathrm{Poisson}\{\lambda\, p_0(\bm{\alpha},\bm{\theta}(\bm{\sigma}),\bm{n})\, p_{1,q}(\alpha_q,\theta_q(\sigma_q)+n_q,s_q)\}\\
    +\;&\mathrm{Binomial}\{\ell-\ell_q,\, p_{1,q}(\alpha_q,\theta_q(\sigma_q)+n_q,s_q)\},
\end{align*}
with the two terms independent, as above.

\subsection{Proof of Theorem 4}

Let $q \neq r$ and define the three disjoint sets of species
\begin{equation*}
\begin{aligned}
    \mathcal{A}_{00} &= \Big\{j\in \{1,\dots,N\} : \sum_{i=1}^{n_q}\tilde{Y}_{ijq}=0,\; \sum_{i=1}^{n_r}\tilde{Y}_{ijr}=0\Big\},\\
    \mathcal{A}_{q} &= \Big\{j\in \{1,\dots,N\} : \sum_{i=1}^{n_q}\tilde{Y}_{ijq}>0,\; \sum_{i=1}^{n_r}\tilde{Y}_{ijr}=0\Big\},\\
    \mathcal{A}_{r} &= \Big\{j\in \{1,\dots,N\} : \sum_{i=1}^{n_q}\tilde{Y}_{ijq}=0,\; \sum_{i=1}^{n_r}\tilde{Y}_{ijr}>0\Big\},
\end{aligned}
\end{equation*}
namely the species recorded at neither site, those recorded only at site $q$ and those recorded only at site $r$. To indicate the detection of species $j$ at site $h \in \{q,r\}$ in the additional samples we use $I_{jh} = \mathds{1}(\sum_{i=n_h+1}^{n_h+s_h} \tilde{Y}_{ijh}>0)$. In the \texttt{MOSAIC} model the number of new species shared between sites $q$ and $r$ that would be discovered in $(s_q,s_r)$ additional samples is
\begin{equation} \label{eq:sm_target_cpqpred}
     C_{s_q,s_r}^{(n_q,n_r)} = \sum_{j \in \mathcal{A}_{00}} I_{jq}I_{jr}
     + \sum_{j \in \mathcal{A}_{q}} I_{jr}
     + \sum_{j \in \mathcal{A}_{r}} I_{jq}.
\end{equation}
For $h \in \{q,r\}$, the indicators $I_{jh}$ are independent Bernoulli given the random probabilities $\varrho_{jh} := 1-(1-\pi_{jh})^{s_h}$. By Theorem 2 in the main text, the occurrence probabilities of the previously unobserved species satisfy $\pi'_{jh} \mid \bm{Y} \overset{\textup{ind}}{\sim} \mathrm{Beta}(-\alpha_h, \alpha_h+\theta_h+n_h)$, independently across species and sites, so that
\begin{equation*}
     E(\varrho_{jh} \mid \bm{Y})
     = 1- E\{(1-\pi_{jh})^{s_h} \mid \bm{Y}\}
     = 1- p_{0,h}(\alpha_h,\theta_h+n_h,s_h)
     = p_{1,h}(\alpha_h,\theta_h+n_h,s_h),
\end{equation*}
whereas the species in $\mathcal{A}_{00}$ contribute with random success probability $\varrho_{jq}\varrho_{jr}$, whose expectation factorizes by the independence of $\pi_{jq}$ and $\pi_{jr}$,
\begin{equation*}
    E(\varrho_{jq}\varrho_{jr} \mid \bm{Y})
    = p_{1,q}(\alpha_q,\theta_q+n_q,s_q)\, p_{1,r}(\alpha_r,\theta_r+n_r,s_r).
\end{equation*}
Within each of the sets $\mathcal{A}_{00}$, $\mathcal{A}_{q}$ and $\mathcal{A}_{r}$ these success probabilities are independent and identically distributed across $j$ given $\bm{Y}$.

It remains to identify the cardinalities of the three sets. Recalling that $k_q$ and $k_r$ denote the number of species recorded at sites $q$ and $r$, that $c_{q,r}$ is the number of species recorded at both and that $k_{q,r} = k_q+k_r-c_{q,r}$, we have $|\mathcal{A}_{q}| = k_q - c_{q,r}$ and $|\mathcal{A}_{r}| = k_r - c_{q,r}$, both known given $\bm{Y}$. As for $\mathcal{A}_{00}$, we split it into the species that are unrecorded everywhere, which are $N-k$ and hence random, and the species recorded at some site other than $q$ and $r$, which are $k - k_{q,r}$ and hence known given $\bm{Y}$. Applying Lemma \ref{lemma:sm_pgf} to each of the four groups separately, we obtain
\begin{align*}
     C_{s_q,s_r}^{(n_q,n_r)} \mid \bm{Y}, N \overset{d}{=}\;
     &\mathrm{Binomial}\{N-k,\, p_{1,q}(\alpha_q,\theta_q+n_q,s_q)\, p_{1,r}(\alpha_r,\theta_r+n_r,s_r)\}\\
     +\;&\mathrm{Binomial}\{k-k_{q,r},\, p_{1,q}(\alpha_q,\theta_q+n_q,s_q)\, p_{1,r}(\alpha_r,\theta_r+n_r,s_r)\}\\
     +\;&\mathrm{Binomial}\{k_q-c_{q,r},\, p_{1,r}(\alpha_r,\theta_r+n_r,s_r)\}\\
     +\;&\mathrm{Binomial}\{k_r-c_{q,r},\, p_{1,q}(\alpha_q,\theta_q+n_q,s_q)\},
\end{align*}
where the four terms are independent because they involve disjoint sets of species with independent occurrence probabilities. Finally, by Theorem 2 in the main text, $N-k \mid \bm{Y} \sim \mathrm{Poisson}\{\lambda\, p_0(\bm{\alpha},\bm{\theta},\bm{n})\}$, so that marginalizing the first term through Lemma \ref{lemma:sm_thinning} gives
\begin{align*}
     C_{s_q,s_r}^{(n_q,n_r)} \mid \bm{Y} \overset{d}{=}\;
     &\mathrm{Poisson}\{\lambda\, p_0(\bm{\alpha},\bm{\theta},\bm{n})\, p_{1,q}(\alpha_q,\theta_q+n_q,s_q)\, p_{1,r}(\alpha_r,\theta_r+n_r,s_r)\}\\
     +\;&\mathrm{Binomial}\{k-k_{q,r},\, p_{1,q}(\alpha_q,\theta_q+n_q,s_q)\, p_{1,r}(\alpha_r,\theta_r+n_r,s_r)\}\\
     +\;&\mathrm{Binomial}\{k_q-c_{q,r},\, p_{1,r}(\alpha_r,\theta_r+n_r,s_r)\}\\
     +\;&\mathrm{Binomial}\{k_r-c_{q,r},\, p_{1,q}(\alpha_q,\theta_q+n_q,s_q)\}.
\end{align*}

For the latent number of new shared species $U_{s_q,s_r}^{(n_q,n_r)}$ the proof proceeds along the same lines, replacing the observed data $\bm{Y}$ with the occupancies $\bm{W}$ throughout, as in the proof of Theorem 3. The three sets are now defined by absence rather than by non-detection, namely
\begin{equation*}
\begin{aligned}
    \mathcal{B}_{00} &= \Big\{j \in \{1,\dots,N\} : \sum_{i=1}^{n_q}\tilde{W}_{ijq}=0,\; \sum_{i=1}^{n_r}\tilde{W}_{ijr}=0\Big\},\\
    \mathcal{B}_{q} &= \Big\{j \in \{1,\dots,N\} : \sum_{i=1}^{n_q}\tilde{W}_{ijq}>0,\; \sum_{i=1}^{n_r}\tilde{W}_{ijr}=0\Big\},\\
    \mathcal{B}_{r} &= \Big\{j \in \{1,\dots,N\} : \sum_{i=1}^{n_q}\tilde{W}_{ijq}=0,\; \sum_{i=1}^{n_r}\tilde{W}_{ijr}>0\Big\};
\end{aligned}
\end{equation*}
the target is \eqref{eq:sm_target_cpqpred} with the indicators $\tilde{I}_{jh} = \mathds{1}(\sum_{i=n_h+1}^{n_h+s_h} \tilde{W}_{ijh}>0)$ in place of the $I_{jh}$; and the occupancy probabilities of the species absent from site $h$ satisfy $\eta_{jh} \mid \bm{W} \overset{\textup{ind}}{\sim} \mathrm{Beta}\{-\alpha_h, \sigma_h(\alpha_h+\theta_h)+n_h\}$. Since $\sigma_h(\alpha_h+\theta_h) = \alpha_h + \theta_h(\sigma_h)$, every occurrence of $\theta_h$ is replaced by $\theta_h(\sigma_h)$, including in the Poisson mean, which becomes $\lambda\, p_0(\bm{\alpha},\bm{\theta}(\bm{\sigma}),\bm{n})$ because $N-\ell \mid \bm{W} \sim \mathrm{Poisson}\{\lambda\, p_0(\bm{\alpha},\bm{\theta}(\bm{\sigma}),\bm{n})\}$. The four cardinalities become $N-\ell$, $\ell-\ell_{q,r}$, $\ell_q-u_{q,r}$ and $\ell_r-u_{q,r}$, with $\ell_{q,r} = \ell_q+\ell_r-u_{q,r}$, all determined by $\bm{W}$.

\section{Derivation of the pEFPF} \label{sec:supp_marginal}
In this section, we obtain the analytical formulation for the marginal distribution of the proposed approach reported in (5) in the main paper. Let $\tilde{M}_{jq} = \sum_{i=1}^{n_q} \tilde{Y}_{ijq}$, for $j=1,\dots,N$ and $q=1,\dots,Q$, be the number of samples from site $q$ in which species $\tilde{X}_j$ is recorded, and let $\tilde{\bm{M}} = (\tilde{M}_{jq})$ be the corresponding $N \times Q$ matrix, with realization $\tilde{\bm{m}}$. These extend the observed frequencies $\bm{M}$ of the main text to the complete species list, and the two coincide once the $N-k$ all-zero rows of $\tilde{\bm{m}}$ are discarded, up to a permutation of the columns.

Conditionally on the parameters $N$, $\bm{\alpha}$, $\bm{\theta}$ and on $\bm{\pi} = (\pi_{jq} : j=1,\dots,N;\ q=1,\dots,Q)$, the likelihood function for the event $\tilde{\bm{M}} = \tilde{\bm{m}}$ is
\begin{equation*}
   \mathcal{L}(\bm{\pi}, N; \tilde{\bm{m}}) = \prod_{q=1}^Q \prod_{j=1}^N \pi_{jq}^{\tilde{m}_{jq}} (1-\pi_{jq})^{n_q-\tilde{m}_{jq}},
\end{equation*}
and integrating over $\bm{\pi}$ we obtain the marginal probability
\begin{align*}
    \pr(\tilde{\bm{M}} = \tilde{\bm{m}}) &= \prod_{q=1}^Q \prod_{j=1}^N \int_0^1 \pi_{jq}^{\tilde{m}_{jq}} (1-\pi_{jq})^{n_q-\tilde{m}_{jq}}\frac{\Gamma(\theta_q)}{\Gamma(\theta_q+ \alpha_q)\Gamma(-\alpha_q)} \pi_{jq}^{-\alpha_q-1} (1-\pi_{jq})^{\alpha_q+\theta_q-1} \,\mathrm{d}\pi_{jq}\\
    &= \prod_{q=1}^Q \left[ \left\{ \frac{\Gamma(\theta_q)}{\Gamma(\theta_q+ \alpha_q)\Gamma(-\alpha_q)\Gamma(\theta_q+ n_q)} \right\}^N \prod_{j=1}^N \Gamma(\tilde{m}_{jq}- \alpha_q) \Gamma(n_q -\tilde{m}_{jq}+ \alpha_q+\theta_q)\right].
\end{align*}
The event $(\bm{M} = \bm{m}, K_{\bm{n}} = k)$ obtains when exactly $k$ rows of $\tilde{\bm{m}}$ are nonzero and carry the observed frequencies $\bm{m}$, the remaining $N-k$ rows being identically zero. Since the species labels are exchangeable, the $\binom{N}{k}$ choices of which rows are the nonzero ones all contribute the same probability, and it suffices to evaluate the display above at $(\tilde{m}_{1q},\dots,\tilde{m}_{Nq}) = (m_{1q},\dots,m_{kq},0,\dots,0)$ for every $q=1,\dots,Q$ and multiply by $\binom{N}{k}$. Hence
\begin{equation*}
\begin{split}
    \pr(\bm{M} = \bm{m}, K_{\bm{n}} = k \mid N, \bm{\alpha}, \bm{\theta})
    = \binom{N}{k} \prod_{q=1}^Q \Bigg[ &\left\{ \frac{\Gamma(\theta_q)}{\Gamma(\theta_q+ \alpha_q)\Gamma(-\alpha_q)\Gamma(\theta_q+ n_q)} \right\}^{N}\\
    &\times \{\Gamma(-\alpha_q) \Gamma(n_q+\theta_q+\alpha_q) \}^{N-k}\\
    &\times \prod_{j=1}^k \Gamma(m_{jq}- \alpha_q) \Gamma(n_q -m_{jq}+ \alpha_q+\theta_q) \Bigg],
\end{split}
\end{equation*}
which, expressing the gamma ratios through Pochhammer symbols, becomes
\begin{equation*}
\begin{split}
    \pr(\bm{M} = \bm{m}, K_{\bm{n}} = k \mid N, \bm{\alpha}, \bm{\theta})
    = \binom{N}{k} \prod_{q=1}^Q \Bigg[ &\left\{ \frac{(\theta_q+\alpha_q)_{n_q}}{(\theta_q)_{n_q}} \right\}^{N} \left\{ \frac{-\alpha_q}{(\theta_q+\alpha_q)_{n_q}} \right\}^{k}\\
    &\times \prod_{j=1}^k (\theta_q+\alpha_q)_{n_q-m_{jq}}(1-\alpha_q)_{m_{jq}-1} \Bigg].
\end{split}
\end{equation*}
Finally, marginalizing with respect to $N \sim \mathrm{Poisson}(\lambda)$ we obtain
\begin{align*}
    \pr(\bm{M} = \bm{m}&, K_{\bm{n}} = k \mid \lambda, \bm{\alpha}, \bm{\theta}) =\\
    &= \sum_{N \ge k} e^{-\lambda}\frac{\lambda^N}{N!} \binom{N}{k} \prod_{q=1}^Q \Bigg[ \left\{ \frac{(\theta_q+\alpha_q)_{n_q}}{(\theta_q)_{n_q}} \right\}^{N} \left\{ \frac{-\alpha_q}{(\theta_q+\alpha_q)_{n_q}} \right\}^{k}\\
    &\qquad\qquad\qquad\qquad\qquad \times \prod_{j=1}^k (\theta_q+\alpha_q)_{n_q-m_{jq}}(1-\alpha_q)_{m_{jq}-1} \Bigg]\\
    &= \frac{ e^{-\lambda} \lambda^k}{k!} \sum_{N \ge k} \frac{\lambda^{N-k}}{(N-k)!} \prod_{q=1}^Q \Bigg[ \left\{ \frac{(\theta_q+\alpha_q)_{n_q}}{(\theta_q)_{n_q}} \right\}^{N-k} \left\{ \frac{-\alpha_q}{(\theta_q)_{n_q}} \right\}^{k}\\
    &\qquad\qquad\qquad\qquad\qquad \times \prod_{j=1}^k (\theta_q+\alpha_q)_{n_q-m_{jq}}(1-\alpha_q)_{m_{jq}-1} \Bigg]\\
    &= \frac{ e^{-\lambda} \lambda^k}{k!} \left[\sum_{N \ge k} \frac{\lambda^{N-k}}{(N-k)!} \prod_{q=1}^Q \left\{ \frac{(\theta_q+\alpha_q)_{n_q}}{(\theta_q)_{n_q}} \right\}^{N-k}\right]\\
    &\qquad \times \prod_{q=1}^Q \left[ \left\{ \frac{-\alpha_q}{(\theta_q)_{n_q}} \right\}^{k} \prod_{j=1}^k (\theta_q+\alpha_q)_{n_q-m_{jq}}(1-\alpha_q)_{m_{jq}-1} \right]\\
    &= \frac{\lambda^k}{k!} \exp\left(-\lambda \left\{ 1-\prod_{q=1}^Q \frac{(\alpha_q+\theta_q)_{n_q}}{(\theta_q)_{n_q}} \right\}\right)\\
    &\qquad \times \prod_{q=1}^Q \left[ \left\{\frac{-\alpha_q}{(\theta_q)_{n_q}}\right\}^{k} \prod_{j=1}^k (1-\alpha_q)_{m_{jq}-1} (\alpha_q + \theta_q)_{n_q - m_{jq}} \right].
\end{align*}

\subsection{Posterior computation}\label{sec:computation}

Posterior computation for the proposed \texttt{MOSAIC} model proceeds through a Markov chain Monte Carlo (MCMC) algorithm. The procedure is facilitated by the explicit relationship between the partially exchangeable BB and \texttt{MOSAIC} models established in Theorem 1, and by the availability of the closed-form pEFPF in (5). Indeed, draws from the posterior distribution of $(\bm{\alpha}, \bm{\theta})$ can be obtained via any collapsed Metropolis--Hastings algorithm leveraging the marginal likelihood in (5). Specifically, we obtain posterior samples of the regression coefficients $\bm{\zeta}$ and of the precision parameters $\bm{\phi}$ using Hamiltonian Monte Carlo implemented in the probabilistic programming language Stan \citep{carpenter2017stan}, and then map these samples through (6) to recover posterior draws of $(\bm{\alpha}, \bm{\theta})$. We recall that the posterior of $N$ has a closed-form expression, as stated in Theorem 2, so that no MCMC step is needed to obtain posterior draws of the number of unseen species. To sample the occupancy and detectability components $\bm{\eta}_q$ and $\bm{\gamma}_q$, the sampler relies on a data augmentation strategy based on the augmented arrays $\tilde{\bm{W}}$ and $\tilde{\bm{T}}$. Conditional on $N$ and on these augmented data, the full conditional distributions of $\eta_{jq}$ and $\gamma_{jq}$ are beta by conjugacy, which enables straightforward Gibbs updates. The full sampling procedure is summarized in Algorithm~\ref{alg:gibbs}, which is stated for the Poisson formulation with $\lambda$ fixed, for the sake of simplicity. The extension to a random $\lambda$ with a Gamma prior is straightforward and requires only two modifications: the marginal likelihood in step 1 and the sampling distribution of $N'$ in step 3 are replaced by their negative binomial counterparts, reported in Section \ref{sec:supp_nb}.

\begin{algorithm}[t]
\caption{Posterior sampling algorithm for the \texttt{MOSAIC} model}\label{alg:gibbs}
\begin{algorithmic}
\vspace{5pt}
\State 1. Sample $B$ draws of $\bm{\zeta}$ and $\bm{\phi}$ from their posterior through a collapsed HMC algorithm targeting the marginal likelihood (5).
\State 2. Set $\bm{\theta}^{(b)} = \bm{\phi}^{(b)}$ and $\bm{\alpha}^{(b)} = -\bm{\mu}^{(b)}\bm{\phi}^{(b)}$, with $\bm{\mu}^{(b)} = 1/(1+e^{-\bm{Z}\bm{\zeta}^{(b)}})$, for $b=1,\dots,B$.
\State 3. For $b=1,\dots,B$, sample the number of unseen species $N'^{(b)} \sim \mathrm{Poisson}\{\lambda\, p_0(\bm{\alpha}^{(b)}, \bm{\theta}^{(b)}, \bm{n})\}$ and set $N^{(b)} = k + N'^{(b)}$.
\State 4. For each draw $b=1,\dots,B$ and each site $q=1,\dots,Q$ (in parallel), suppressing the index $b$ for readability and setting $\tilde{y}_{ijq} = y_{ijq}$ for the observed species $j=1,\dots,k$ and $\tilde{y}_{ijq}=0$ for the unseen ones $j=k+1,\dots,N$:
\begin{enumerate}[label=(\roman*)]
    \item for $i=1,\dots,n_q$ and $j=1,\dots,N$, if $\tilde{y}_{ijq}=1$ then set $\tilde{W}_{ijq} = \tilde{T}_{ijq} = 1$, otherwise sample $(\tilde{W}_{ijq}, \tilde{T}_{ijq})$ from the categorical distribution
    \begin{equation*}
    (\tilde{W}_{ijq}, \tilde{T}_{ijq}) = \begin{cases}
    (0,0) &\text{with probability proportional to }\, (1-\eta_{jq})(1-\gamma_{jq}),\\
    (0,1) &\text{with probability proportional to }\, (1-\eta_{jq})\gamma_{jq},\\
    (1,0) &\text{with probability proportional to }\, \eta_{jq}(1-\gamma_{jq});
    \end{cases}
    \end{equation*}
    \item for $j=1,\dots,N$, sample
\begin{equation*}
\begin{aligned}
\eta_{jq} \mid - \; &\sim \mathrm{Beta}\Big\{-\alpha_q + \textstyle\sum_{i=1}^{n_q} \tilde{W}_{ijq}, \; \sigma_q(\alpha_q+\theta_q) + n_q - \textstyle\sum_{i=1}^{n_q} \tilde{W}_{ijq}\Big\},\\
\gamma_{jq} \mid - \; &\sim \mathrm{Beta}\Big\{-\alpha_q + \sigma_q(\alpha_q+\theta_q) + \textstyle\sum_{i=1}^{n_q} \tilde{T}_{ijq}, \; (1-\sigma_q)(\alpha_q+\theta_q) + n_q - \textstyle\sum_{i=1}^{n_q} \tilde{T}_{ijq}\Big\}.
\end{aligned}
\end{equation*}
\end{enumerate}
\end{algorithmic}
\end{algorithm}

\section{The negative binomial extension} \label{sec:supp_nb}

\subsection{Extension of theoretical results} 

In this section we adapt the results of Section 3.1 of the main paper to the case in which the total number of species follows a negative binomial distribution, i.e. $N \sim \mathrm{NegBinomial}(r_0,\nu_0)$ with size $r_0>0$ and mean $\nu_0>0$. This is equivalent to assuming $N\sim \mathrm{Poisson}(\lambda)$ and $\lambda \sim \mathrm{Gamma}(r_0, r_0/ \nu_0)$. All the remaining assumptions of the \texttt{MOSAIC} model are left unchanged.
The results below are obtained exactly as their Poisson counterparts, the only difference being that
Lemma \ref{lemma:sm_thinning} is replaced by the following thinning property of the negative
binomial distribution.

\begin{lemma} \label{lemma:sm_nbthinning}
If $V \sim \mathrm{NegBinomial}(r,\mu)$ and $X \mid V \sim \mathrm{Binomial}(V,p)$, then $X \sim \mathrm{NegBinomial}(r,\mu p)$.
\end{lemma}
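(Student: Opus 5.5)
I will prove the lemma with the same generating-function device used for Lemma~\ref{lemma:sm_pgf}. The first step fixes the parametrization. $\mathrm{NegBinomial}(r,\mu)$ has size $r$ and mean $\mu$, so its success parameter is $\varpi = r/(r+\mu)$ and its probability generating function is
\begin{equation*}
E(z^V) = \left\{\frac{r}{r+\mu(1-z)}\right\}^{r} = \left\{1+\frac{\mu}{r}(1-z)\right\}^{-r}, \qquad |z|\le 1 .
\end{equation*}
I will read this off either from the probability mass function directly, or from the mixture representation $V\mid\lambda\sim\mathrm{Poisson}(\lambda)$, $\lambda\sim\mathrm{Gamma}(r,r/\mu)$ recalled just before the lemma. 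In the mixture form one has $E(z^V)=E\{e^{-\lambda(1-z)}\}$, and this is the Gamma Laplace transform evaluated at $1-z$.

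Next I condition on $V$. Since $X\mid V\sim\mathrm{Binomial}(V,p)$, the tower property gives
\begin{equation*}
E(z^X)=E\{(1-p+pz)^V\}=G_V(1-p+pz).
\end{equation*}
Substituting $1-(1-p+pz)=p(1-z)$ into the expression for $G_V$ yields $\{1+\mu p(1-z)/r\}^{-r}$. This is the generating function of $\mathrm{NegBinomial}(r,\mu p)$. Uniqueness of probability generating functions for $\mathbb{N}$-valued variables then concludes the proof.

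An alternative route avoids generating functions entirely. Write $V$ as a Gamma mixture of Poissons and apply Lemma~\ref{lemma:sm_thinning} conditionally on $\lambda$. This gives $X\mid\lambda\sim\mathrm{Poisson}(\lambda p)$. Since $\lambda p\sim\mathrm{Gamma}(r,r/(\mu p))$, the variable $X$ is again a Gamma mixture of Poissons with size $r$ and mean $\mu p$. I would mention this route as a check, because it makes clear why everything downstream carries over unchanged.

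The only real obstacle is bookkeeping. One must make sure the (size, mean) parametrization is used consistently, so that the thinned parameters are $(r,\mu p)$ and not, for example, a rescaled success probability. Once the generating function is written in mean form, the thinning acts simply by replacing $\mu$ with $\mu p$, and the rest is immediate.
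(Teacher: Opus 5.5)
Your proof is correct and follows the paper's argument: both compute $E(z^X)=E\{(1-p+pz)^V\}$ by the tower property and recognize the result as a negative binomial probability generating function. The only difference is cosmetic. You write the generating function in mean form, $\{1+\mu(1-z)/r\}^{-r}$, so the substitution $1-(1-p+pz)=p(1-z)$ gives mean $\mu p$ at once, whereas the paper works with $\phi=\mu/(\mu+r)$ and must compute $\phi'=\phi p/\{1-\phi(1-p)\}$ and convert it back to the mean.
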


\begin{proof}
Let $\phi = \mu/(\mu+r)$, so that the probability generating function of $V$ is $E(z^V) = \{(1-\phi)/(1-\phi z)\}^{r}$. Then
\begin{equation*}
    E(z^X) = E\{E(z^X \mid V)\} = E\{(1-p+pz)^V\}
    = \Big(\frac{1-\phi}{1-\phi(1-p)-\phi p z}\Big)^{r}
    = \Big(\frac{1-\phi'}{1-\phi' z}\Big)^{r},
\end{equation*}
with $\phi' = \phi p /\{1-\phi(1-p)\}$, since $1-\phi' = (1-\phi)/\{1-\phi(1-p)\}$. Hence $X$ is negative binomial with size $r$ and mean $r\phi'/(1-\phi') = r\phi p/(1-\phi) = \mu p$.
\end{proof}

We first state the analogue of Theorem 2 in the main paper, which is a special case of a more general result in \cite{ghilotti2025bayesian}. Recall that $N' = N-k$ denotes the number of species unseen given $\bm{Y}$ and that $L' = N - \ell$ denotes the number of species never present given $\bm{W}$.

\begin{proposition} \label{prop:sm_nbpost}
Let $\bm{Y}$ follow the negative binomial specification of the \texttt{MOSAIC} model with parameters $(r_0, \nu_0, \bm{\alpha},\bm{\theta},\bm{\sigma})$, then
\begin{equation*}
    N' \mid \bm{Y} \sim \mathrm{NegBinomial}(r_0+k,\, \mu_0),
    \qquad
    \mu_0 = \frac{(r_0+k)\, p_0(\bm{\alpha},\bm{\theta},\bm{n})}
                 {r_0/\nu_0 + p_1(\bm{\alpha},\bm{\theta},\bm{n})},
\end{equation*}
whereas the posterior distributions of the occurrence probabilities $\pi^{*}_{jq}$ and $\pi'_{jq}$ are those given in Theorem 2, and are therefore unchanged.
\end{proposition}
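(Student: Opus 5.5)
The plan is to condition on $N$, integrate out the occurrence probabilities, and then use the Gamma mixing over $\lambda$. This reduces everything to the Poisson computations already done.

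From the derivation of the pEFPF in Section~\ref{sec:supp_marginal}, conditionally on $N$,
\begin{equation*}
\pr(\bm{M}=\bm{m}, K_{\bm{n}}=k \mid N) = \binom{N}{k}\, p_0(\bm{\alpha},\bm{\theta},\bm{n})^{N-k}\, g(\bm{m}),
\end{equation*}
where $g(\bm{m})$ collects every factor that does not depend on $N$. Hence the posterior of $N$ is proportional to the prior $\pr(N)$ times $\binom{N}{k}p_0^{N-k}$. For the negative binomial prior, I would write $\pr(N) = \Gamma(r_0+N)/\{\Gamma(r_0)N!\}\,(1-\phi)^{r_0}\phi^N$ with $\phi=\nu_0/(\nu_0+r_0)$. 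Multiplying, the posterior of $N'=N-k$ is proportional to
\begin{equation*}
\frac{\Gamma(r_0+k+N')}{N'!}\,(\phi p_0)^{N'}.
\end{equation*}
This is the kernel of a negative binomial with size $r_0+k$ and success parameter $\phi' = \phi p_0$.

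It then remains to convert to the mean parametrization:
\begin{equation*}
\mu_0 = \frac{(r_0+k)\,\phi p_0}{1-\phi p_0}.
\end{equation*}
Dividing numerator and denominator by $\phi$ and using $1/\phi = 1 + r_0/\nu_0$, the denominator becomes $r_0/\nu_0 + 1 - p_0 = r_0/\nu_0 + p_1$. This matches the stated $\mu_0$. The only delicate point is this bookkeeping between the size/mean parametrization and the $(r,\phi)$ one, and it is where I would be careful.

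An alternative route is to use the mixture representation $\lambda\sim\mathrm{Gamma}(r_0,r_0/\nu_0)$. The pEFPF \eqref{eq:pEFPF} shows that the $\lambda$-dependence of the likelihood is $\lambda^k e^{-\lambda p_1}$, so $\lambda \mid \bm{Y} \sim \mathrm{Gamma}(r_0+k,\, r_0/\nu_0+p_1)$. Mixing the Poisson$(\lambda p_0)$ law of Theorem~\ref{th:posterior_known} over this Gamma then gives the same negative binomial directly.

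For the occurrence probabilities, the key observation is that the likelihood factorizes over species and sites given $N$. Integrating out $\bm{\pi}$, the conditional law of $(\pi_{jq})$ given $\bm{Y}$ and $N$ is the conjugate beta posterior, which does not involve the prior on $N$. Therefore the posteriors of $\pi^*_{jq}$ and $\pi'_{jq}$ are exactly those of Theorem~\ref{th:posterior_known}. For the unseen species this holds for any value of $N'$, so mixing over the new law of $N'$ leaves their individual laws unchanged.
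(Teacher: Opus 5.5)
Your proof is correct. This covers the kernel $\Gamma(r_0+k+N')(\phi p_0)^{N'}/N'!$ with $\phi=\nu_0/(\nu_0+r_0)$, the conversion to $\mu_0=(r_0+k)p_0/(r_0/\nu_0+p_1)$, and the observation that the beta posteriors are conditional on $N$ and hence unaffected by its prior.

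The paper gives no separate argument here and defers to \citet{ghilotti2025bayesian}. Your main route is nonetheless exactly the computation implicit in its derivation of the negative binomial pEFPF: there, the sum over $N\ge k$ of $\Gamma(N+r_0)\binom{N}{k}(\phi p_0)^N/N!$ is, up to constants, the normalizing constant of your posterior kernel. Your Gamma-mixing alternative, with $\lambda\mid\bm{Y}\sim\mathrm{Gamma}(r_0+k,\,r_0/\nu_0+p_1)$ combined with Theorem~\ref{th:posterior_known}, is an equally valid shortcut.
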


\begin{proposition}[\emph{A priori} distinct species] \label{prop:sm_nbprior1}
Under the \texttt{MOSAIC} model with $N \sim \mathrm{NegBinomial}(r_0,\nu_0)$, the number of distinct species observed at site $q$ and across sites is distributed as
\begin{equation*}
    K_{q,n_q} \sim \mathrm{NegBinomial}\{r_0,\, \nu_0\, p_{1,q}(\alpha_q,\theta_q,n_q)\},
    \qquad
    K_{\bm{n}} \sim \mathrm{NegBinomial}\{r_0,\, \nu_0\, p_1(\bm{\alpha},\bm{\theta},\bm{n})\},
\end{equation*}
for $q = 1,\dots,Q$, whereas the number of distinct species present at site $q$ and across sites is distributed as
\begin{equation*}
    L_{q,n_q} \sim \mathrm{NegBinomial}\{r_0,\, \nu_0\, p_{1,q}(\alpha_q,\theta_q(\sigma_q),n_q)\},
    \qquad
    L_{\bm{n}} \sim \mathrm{NegBinomial}\{r_0,\, \nu_0\, p_1(\bm{\alpha},\bm{\theta}(\bm{\sigma}),\bm{n})\}.
\end{equation*}
\end{proposition}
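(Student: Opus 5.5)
The plan is to reuse the proof of Proposition~1 verbatim up to the point where $N$ is integrated out, and to replace only that final step. The conditional computation there does not involve the law of $N$. Conditionally on $N$, we have $K_{q,n_q}=\sum_{j=1}^N G_{jq}$, where the $G_{jq}$ are conditionally independent Bernoulli variables with i.i.d.\ random success probabilities $\varrho_{jq}=1-(1-\pi_{jq})^{n_q}$. Lemma~\ref{lemma:sm_pgf} then gives
\begin{equation*}
K_{q,n_q}\mid N\sim\mathrm{Binomial}\{N,p_{1,q}(\alpha_q,\theta_q,n_q)\}.
\end{equation*}
In the same way, $K_{\bm{n}}\mid N\sim\mathrm{Binomial}\{N,p_1(\bm{\alpha},\bm{\theta},\bm{n})\}$, where the factorization $E\{\prod_q(1-\pi_{1q})^{n_q}\}=\prod_q p_{0,q}(\alpha_q,\theta_q,n_q)$ uses independence of the occurrence probabilities across sites. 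For the latent counts $L_{q,n_q}$ and $L_{\bm{n}}$, the same reasoning applies with $\pi_{jq}$ replaced by $\eta_{jq}\sim\mathrm{Beta}\{-\alpha_q,\alpha_q+\theta_q(\sigma_q)\}$. This yields binomial thinnings with success probabilities $p_{1,q}(\alpha_q,\theta_q(\sigma_q),n_q)$ and $p_1(\bm{\alpha},\bm{\theta}(\bm{\sigma}),\bm{n})$.

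The only change is the mixing step. Since $N\sim\mathrm{NegBinomial}(r_0,\nu_0)$ is parametrized by size and mean, Lemma~\ref{lemma:sm_nbthinning} replaces Lemma~\ref{lemma:sm_thinning}. It shows that a binomial thinning with probability $p$ of a $\mathrm{NegBinomial}(r_0,\nu_0)$ variable is $\mathrm{NegBinomial}(r_0,\nu_0p)$. Applying this four times, with $p$ equal in turn to each of the four success probabilities above, gives the four stated laws. As an alternative check, one can condition on $\lambda\sim\mathrm{Gamma}(r_0,r_0/\nu_0)$. Proposition~1 gives $K_{q,n_q}\mid\lambda\sim\mathrm{Poisson}(\lambda p)$, and $\lambda p\sim\mathrm{Gamma}(r_0,r_0/(\nu_0p))$, so the Poisson--Gamma mixture is $\mathrm{NegBinomial}(r_0,\nu_0p)$, which agrees with the first route.

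I expect no genuine obstacle. The only point needing care is that conditioning on $N$ is legitimate when $N$ is random. This holds because the $\eta_{jq}$ and $\gamma_{jq}$ are i.i.d.\ given $N$, with laws that do not depend on $N$, so Lemma~\ref{lemma:sm_pgf} applies for each fixed value of $N$ (with the case $N=0$ trivial). I would also confirm that the size/mean convention in Lemma~\ref{lemma:sm_nbthinning} matches the convention $\mathrm{NegBinomial}(r_0,\nu_0)$ used in the statement.
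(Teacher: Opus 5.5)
Your proposal is correct and follows the paper's proof: keep the conditional-on-$N$ binomial laws from the proof of Proposition~1, which do not depend on the law of $N$, and then use the negative binomial thinning Lemma~\ref{lemma:sm_nbthinning} in place of Lemma~\ref{lemma:sm_thinning}. The latent counts follow by replacing $\theta_q$ with $\theta_q(\sigma_q)$. Your Poisson--Gamma mixture cross-check through $\lambda\sim\mathrm{Gamma}(r_0,r_0/\nu_0)$ is also valid, since it is just the paper's stated equivalence of the two specifications.
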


\begin{proof}
The proof of Proposition 1 shows that $K_{q,n_q} \mid N \sim \mathrm{Binomial}\{N, p_{1,q}(\alpha_q,\theta_q,n_q)\}$ and
$K_{\bm{n}} \mid N \sim \mathrm{Binomial}\{N, p_1(\bm{\alpha},\bm{\theta},\bm{n})\}$, since that step does not involve the law of $N$. The statements then follow from Lemma \ref{lemma:sm_nbthinning} in place of Lemma \ref{lemma:sm_thinning}. For the latent counterparts the same proof applies with $\theta_q$ replaced by $\theta_q(\sigma_q)$.
\end{proof}

The simple interpretation of the Poisson case carries over: $E(K_{\bm{n}}) = \nu_0\, p_1(\bm{\alpha},\bm{\theta},\bm{n})$, with $\nu_0$ the expected total number of species and $p_1(\bm{\alpha},\bm{\theta},\bm{n})$ the expected fraction of these that is observed under sampling effort $\bm{n}$.

\begin{proposition}[\emph{A priori} shared species] \label{prop:sm_nbprior2}
Under the \texttt{MOSAIC} model with $N \sim \mathrm{NegBinomial}(r_0,\nu_0)$, the number of species shared between sites $q$ and $r$, with $q \neq r$, is distributed as
\begin{equation*}
    C_{n_q,n_r} \sim \mathrm{NegBinomial}\{r_0,\, \nu_0\, p_{1,q}(\alpha_q,\theta_q,n_q)\, p_{1,r}(\alpha_r,\theta_r,n_r)\},
\end{equation*}
whereas its latent counterpart is distributed as
\begin{equation*}
    U_{n_q,n_r} \sim \mathrm{NegBinomial}\{r_0,\, \nu_0\, p_{1,q}(\alpha_q,\theta_q(\sigma_q),n_q)\, p_{1,r}(\alpha_r,\theta_r(\sigma_r),n_r)\}.
\end{equation*}
\end{proposition}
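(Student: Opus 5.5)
The plan follows the proof of Proposition~\ref{prop:sm_nbprior1}: first isolate the part of the argument for Proposition~2 that does not depend on the law of $N$, then swap the Poisson thinning step for its negative binomial analogue. The proof of Proposition~2 in the main text writes $C_{n_q,n_r} = \sum_{j=1}^N G_{j,qr}$. Here the $G_{j,qr}$ are conditionally independent Bernoulli variables with success probabilities
\[
\varrho_{j,qr} = \{1-(1-\pi_{jq})^{n_q}\}\{1-(1-\pi_{jr})^{n_r}\}.
\]
These probabilities are independent and identically distributed across $j$. Lemma~\ref{lemma:sm_pgf} then yields
\[
C_{n_q,n_r}\mid N \sim \mathrm{Binomial}\{N,\, p_{1,q}(\alpha_q,\theta_q,n_q)\,p_{1,r}(\alpha_r,\theta_r,n_r)\}.
\]
None of this uses the Poisson prior on $N$, so it holds verbatim under the negative binomial specification. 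The only ingredients are the factorization of $E(\varrho_{1,qr})$, which follows from independence of $\pi_{1q}$ and $\pi_{1r}$, and the identity $E\{(1-\pi_{1q})^{n_q}\} = p_{0,q}(\alpha_q,\theta_q,n_q)$. Both are unaffected by the change in the prior on $N$.

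Next, Lemma~\ref{lemma:sm_nbthinning} is applied with $V=N\sim\mathrm{NegBinomial}(r_0,\nu_0)$ and $p = p_{1,q}(\alpha_q,\theta_q,n_q)\,p_{1,r}(\alpha_r,\theta_r,n_r)$. This immediately gives
\[
C_{n_q,n_r}\sim\mathrm{NegBinomial}\{r_0,\,\nu_0\,p_{1,q}(\alpha_q,\theta_q,n_q)\,p_{1,r}(\alpha_r,\theta_r,n_r)\}.
\]

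For $U_{n_q,n_r}$ the same two steps are repeated with $\pi_{jq},\pi_{jr}$ replaced by the occupancy probabilities $\eta_{jq},\eta_{jr}$. These are independent across species and sites, with $\eta_{jq}\sim\mathrm{Beta}\{-\alpha_q,\alpha_q+\theta_q(\sigma_q)\}$, because $\sigma_q(\alpha_q+\theta_q)=\alpha_q+\theta_q(\sigma_q)$. Hence $E\{(1-\eta_{1q})^{n_q}\}=p_{0,q}(\alpha_q,\theta_q(\sigma_q),n_q)$, and the binomial conditional law carries $\theta_q(\sigma_q)$ and $\theta_r(\sigma_r)$ in place of $\theta_q$ and $\theta_r$. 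Applying Lemma~\ref{lemma:sm_nbthinning} again concludes the argument.

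There is no substantive obstacle. The only point to check is that the conditional binomial step in the proof of Proposition~2 is genuinely conditional on $N$ alone, with $N$ independent of the $\pi_{jq}$ and $\eta_{jq}$. This holds because those parameters are drawn independently given $N$ with a law not depending on $N$. With that confirmed, the mixture over the negative binomial law of $N$ is exactly the setting of Lemma~\ref{lemma:sm_nbthinning}.
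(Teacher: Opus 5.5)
Your proposal is correct and follows the paper's proof essentially verbatim. You reuse the conditional law $C_{n_q,n_r}\mid N \sim \mathrm{Binomial}\{N,\, p_{1,q}(\alpha_q,\theta_q,n_q)\,p_{1,r}(\alpha_r,\theta_r,n_r)\}$ from Proposition~2, apply Lemma~\ref{lemma:sm_nbthinning} in place of the Poisson thinning lemma, and for $U_{n_q,n_r}$ replace $\theta_q,\theta_r$ by $\theta_q(\sigma_q),\theta_r(\sigma_r)$ via $\eta_{jq}\sim\mathrm{Beta}\{-\alpha_q,\alpha_q+\theta_q(\sigma_q)\}$, which the paper leaves implicit.
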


\begin{proof}
As in the proof of Proposition 2, $C_{n_q,n_r} \mid N \sim \mathrm{Binomial}\{N, p_{1,q}(\alpha_q,\theta_q,n_q) p_{1,r}(\alpha_r,\theta_r,n_r)\}$, and the statement follows from Lemma \ref{lemma:sm_nbthinning}; the latent counterpart is obtained by replacing $\theta_q$ and $\theta_r$ with $\theta_q(\sigma_q)$ and $\theta_r(\sigma_r)$ respectively.
\end{proof}

\begin{proposition}[\emph{A posteriori} distinct species] \label{prop:sm_nbpost1}
Under the \texttt{MOSAIC} model with $N \sim \mathrm{NegBinomial}(r_0,\nu_0)$, the number of new species discovered at site $q$ and across sites satisfies
\begin{align*}
    K_{q,s_q}^{(n_q)} \mid \bm{Y} \overset{d}{=}\;
    &\mathrm{NegBinomial}\{r_0+k,\, \mu_0\, p_{1,q}(\alpha_q,\theta_q+n_q,s_q)\}\\
    +\;&\mathrm{Binomial}\{k-k_q,\, p_{1,q}(\alpha_q,\theta_q+n_q,s_q)\}, \qquad (q = 1,\dots,Q),\\
    K_{\bm{s}}^{(\bm{n})} \mid \bm{Y} \sim\;
    &\mathrm{NegBinomial}\{r_0+k,\, \mu_0\, p_1(\bm{\alpha},\bm{\theta}+\bm{n},\bm{s})\},
\end{align*}
with $\mu_0$ as in Proposition \ref{prop:sm_nbpost}, whereas their latent counterparts satisfy
\begin{align*}
    L_{q,s_q}^{(n_q)} \mid \bm{W} \overset{d}{=}\;
    &\mathrm{NegBinomial}\{r_0+\ell,\, \mu_0^{W}\, p_{1,q}(\alpha_q,\theta_q(\sigma_q)+n_q,s_q)\}\\
    +\;&\mathrm{Binomial}\{\ell-\ell_q,\, p_{1,q}(\alpha_q,\theta_q(\sigma_q)+n_q,s_q)\}, \qquad (q = 1,\dots,Q),\\
    L_{\bm{s}}^{(\bm{n})} \mid \bm{W} \sim\;
    &\mathrm{NegBinomial}\{r_0+\ell,\, \mu_0^{W}\, p_1(\bm{\alpha},\bm{\theta}(\bm{\sigma})+\bm{n},\bm{s})\},
\end{align*}
with 
$$\mu_0^{W} = \frac{(r_0+\ell)\, p_0(\bm{\alpha},\bm{\theta}(\bm{\sigma}),\bm{n})}
                     {r_0/\nu_0 + p_1(\bm{\alpha},\bm{\theta}(\bm{\sigma}),\bm{n})}.$$
\end{proposition}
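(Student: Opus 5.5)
The argument follows the proof of Theorem~\ref{thm:Kpred} step by step. Only the last marginalization changes: Lemma~\ref{lemma:sm_nbthinning} replaces Lemma~\ref{lemma:sm_thinning}, and the negative binomial posterior of Proposition~\ref{prop:sm_nbpost} replaces the Poisson posterior of $N'$.

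The first step is to observe that the conditional decompositions in the proof of Theorem~\ref{thm:Kpred} never use the law of $N$. They rely only on the posterior of the occurrence probabilities, which by Proposition~\ref{prop:sm_nbpost} is unchanged. Hence, exactly as there,
\[
K_{\bm{s}}^{(\bm{n})} \mid \bm{Y}, N \sim \mathrm{Binomial}\{N-k,\, p_1(\bm{\alpha},\bm{\theta}+\bm{n},\bm{s})\}.
\]
For the site-specific count, $K_{q,s_q}^{(n_q)} \mid \bm{Y}, N$ is the sum of two independent terms: $\mathrm{Binomial}\{N-k, p_{1,q}(\alpha_q,\theta_q+n_q,s_q)\}$ and $\mathrm{Binomial}\{k-k_q, p_{1,q}(\alpha_q,\theta_q+n_q,s_q)\}$. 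This uses the split $\mathcal{A}_q=\mathcal{A}\cup(\mathcal{A}_q\setminus\mathcal{A})$, applies Lemma~\ref{lemma:sm_pgf} to each group, and uses independence across disjoint species sets.

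The second step mixes over $N'=N-k \mid \bm{Y} \sim \mathrm{NegBinomial}(r_0+k,\mu_0)$. Lemma~\ref{lemma:sm_nbthinning} gives a $\mathrm{NegBinomial}(r_0+k, \mu_0 p)$ law for the first term, with $p$ the relevant success probability. Because the second binomial term has a deterministic size given $\bm{Y}$ and involves species disjoint from $\mathcal{A}$, it stays independent of the first term after mixing. One point needs care here: the Binomial given $N$ depends on $N$ only through its size, and the success probabilities of the unseen species are i.i.d.\ given $\bm{Y}$ independently of $N'$. Theorem 2 supplies exactly this, with Proposition~\ref{prop:sm_nbpost} carrying it over to the negative binomial prior.

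For the latent counterparts, I would condition on $\bm{W}$ and replace $\theta_q$ by $\theta_q(\sigma_q)$ throughout, using the conjugate posterior $\eta_{jq}\mid\bm{W}\sim\mathrm{Beta}\{-\alpha_q,\alpha_q+\theta_q(\sigma_q)+n_q\}$ for absent species. The only genuinely new ingredient, and the main obstacle, is the posterior law of $L'=N-\ell$ given $\bm{W}$. I would obtain it by running the derivation of Proposition~\ref{prop:sm_nbpost} with $\bm{W}$ in place of $\bm{Y}$. The pEFPF computation of Section~\ref{sec:supp_marginal}, applied to $\bm{W}$ with Beta$\{-\alpha_q,\alpha_q+\theta_q(\sigma_q)\}$ priors, shows that $N$ enters only through $\binom{N}{\ell}\, p_0(\bm{\alpha},\bm{\theta}(\bm{\sigma}),\bm{n})^{N}$ times factors free of $N$. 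Multiplying by the negative binomial prior, equivalently a Poisson--Gamma mixture, and integrating out $\lambda$ gives
\[
L'\mid\bm{W}\sim\mathrm{NegBinomial}(r_0+\ell,\mu_0^W).
\]
One can check this directly: the posterior is Poisson$\{\lambda p_0\}$ mixed over $\lambda\mid\bm{W}\sim\mathrm{Gamma}(r_0+\ell,\, r_0/\nu_0+p_1)$, whose mean is $\mu_0^W$. Lemma~\ref{lemma:sm_nbthinning} then yields the two latent statements, exactly as in the observed case.
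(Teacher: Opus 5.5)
Your proposal is correct and follows the paper's proof. Like the paper, you condition on the number of unseen (respectively never-present) species to get the binomial decompositions from the proof of Theorem~\ref{thm:Kpred}, then mix over the negative binomial posterior using Lemma~\ref{lemma:sm_nbthinning} in place of Lemma~\ref{lemma:sm_thinning}. Your explicit derivation of $L' \mid \bm{W} \sim \mathrm{NegBinomial}(r_0+\ell, \mu_0^{W})$, via the $\mathrm{Gamma}(r_0+\ell,\, r_0/\nu_0 + p_1)$ posterior of $\lambda$, usefully fills in a step the paper only describes as ``analogous.''
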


\begin{proof}
The proof of Theorem 3 applies  up to the point where the number of unseen species is integrated out: conditionally on $N'$ one has $K_{\bm{s}}^{(\bm{n})} \mid N', \bm{Y} \sim \mathrm{Binomial}\{N', p_1(\bm{\alpha},\bm{\theta}+\bm{n},\bm{s})\}$, and the species new to site $q$ split into the $N'$ species unrecorded everywhere and the $k-k_q$
species recorded at some other site but not at $q$, each contributing with probability $p_{1,q}(\alpha_q,\theta_q+n_q,s_q)$. Since $N' \mid \bm{Y} \sim \mathrm{NegBinomial}(r_0+k,\mu_0)$ by Proposition \ref{prop:sm_nbpost}, applying Lemma \ref{lemma:sm_nbthinning} instead of Lemma \ref{lemma:sm_thinning} to the first group gives the stated expressions, the binomial term being unchanged because $k-k_q$ is known given $\bm{Y}$.  The latent counterparts are obtained analogously, by replacing $\bm{Y}$ with $\bm{W}$, $\bm{\theta}$ with $\bm{\theta}(\bm{\sigma})$ and $(k,k_q,N',\mu_0)$ with $(\ell,\ell_q,L',\mu_0^{W})$.
\end{proof}

\begin{proposition}[\emph{A posteriori} shared species] \label{prop:sm_nbpost2}
Under the \texttt{MOSAIC} model with $N \sim \mathrm{NegBinomial}(r_0,\nu_0)$, the number of species shared between sites $q$ and $r$, with $q \neq r$, that would be discovered in $(s_q,s_r)$ additional samples satisfies
\begin{align*}
     C_{s_q,s_r}^{(n_q,n_r)} \mid \bm{Y} \overset{d}{=}\;
     &\mathrm{NegBinomial}\{r_0+k,\, \mu_0\, p_{1,q}(\alpha_q,\theta_q+n_q,s_q)\, p_{1,r}(\alpha_r,\theta_r+n_r,s_r)\}\\
     +\;&\mathrm{Binomial}\{k-k_{q,r},\, p_{1,q}(\alpha_q,\theta_q+n_q,s_q)\, p_{1,r}(\alpha_r,\theta_r+n_r,s_r)\}\\
     +\;&\mathrm{Binomial}\{k_q-c_{q,r},\, p_{1,r}(\alpha_r,\theta_r+n_r,s_r)\}\\
     +\;&\mathrm{Binomial}\{k_r-c_{q,r},\, p_{1,q}(\alpha_q,\theta_q+n_q,s_q)\},
\end{align*}
with $\mu_0$ as in Proposition \ref{prop:sm_nbpost}, whereas its latent counterpart satisfies
\begin{align*}
     U_{s_q,s_r}^{(n_q,n_r)} \mid \bm{W} \overset{d}{=}\;
     &\mathrm{NegBinomial}\{r_0+\ell,\, \mu_0^{W}\, p_{1,q}(\alpha_q,\theta_q(\sigma_q)+n_q,s_q)\, p_{1,r}(\alpha_r,\theta_r(\sigma_r)+n_r,s_r)\}\\
     +\;&\mathrm{Binomial}\{\ell-\ell_{q,r},\, p_{1,q}(\alpha_q,\theta_q(\sigma_q)+n_q,s_q)\, p_{1,r}(\alpha_r,\theta_r(\sigma_r)+n_r,s_r)\}\\
     +\;&\mathrm{Binomial}\{\ell_q-u_{q,r},\, p_{1,r}(\alpha_r,\theta_r(\sigma_r)+n_r,s_r)\}\\
     +\;&\mathrm{Binomial}\{\ell_r-u_{q,r},\, p_{1,q}(\alpha_q,\theta_q(\sigma_q)+n_q,s_q)\},
\end{align*}
with $\mu_0^{W}$ as in Proposition \ref{prop:sm_nbpost1}.
\end{proposition}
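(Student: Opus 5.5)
The plan is to reuse the proof of Theorem 4 up to the point where the number of unseen species is integrated out, and replace only that final step. That proof does not depend on the prior law of $N$ except through the posterior of $N' = N-k$. It expresses $C_{s_q,s_r}^{(n_q,n_r)}$ as in \eqref{eq:sm_target_cpqpred}, splits $\mathcal{A}_{00}$ into the $N'$ species unrecorded everywhere and the $k-k_{q,r}$ species recorded elsewhere, and applies Lemma~\ref{lemma:sm_pgf} group by group. Conditionally on $(\bm{Y},N')$, this gives the sum of four independent binomials, with sizes $N'$, $k-k_{q,r}$, $k_q-c_{q,r}$ and $k_r-c_{q,r}$ and success probabilities $p_{1,q}p_{1,r}$, $p_{1,q}p_{1,r}$, $p_{1,r}$ and $p_{1,q}$, evaluated at $\theta_h+n_h$ and $s_h$.

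The one ingredient to check is that the posterior of the occurrence probabilities is still as in Theorem 2 under the negative binomial prior. Proposition~\ref{prop:sm_nbpost} asserts exactly this. It can also be seen directly: given $\lambda$ the model is the Poisson one, and the conditional posterior of the $\pi$'s given $(\bm{Y},N)$ does not involve $\lambda$. Hence the conditional four-term representation given $(\bm{Y},N')$ is unchanged.

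Next, I would integrate out $N'$ using $N'\mid\bm{Y}\sim\mathrm{NegBinomial}(r_0+k,\mu_0)$ from Proposition~\ref{prop:sm_nbpost}. Only the first term depends on $N'$, and conditionally on $N'$ it is $\mathrm{Binomial}\{N', p_{1,q}(\alpha_q,\theta_q+n_q,s_q)\,p_{1,r}(\alpha_r,\theta_r+n_r,s_r)\}$. Lemma~\ref{lemma:sm_nbthinning} then turns it into $\mathrm{NegBinomial}\{r_0+k,\mu_0\,p_{1,q}p_{1,r}\}$. The other three terms have deterministic sizes given $\bm{Y}$ and are independent of $N'$ and of the first term, since they involve disjoint species with independent posterior occurrence probabilities. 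Independence therefore survives the mixing over $N'$, which yields the stated four-term decomposition.

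For $U_{s_q,s_r}^{(n_q,n_r)}\mid\bm{W}$, I would repeat the argument with the sets $\mathcal{B}_{00},\mathcal{B}_q,\mathcal{B}_r$ and occupancy posteriors $\eta_{jh}\mid\bm{W}\sim\mathrm{Beta}\{-\alpha_h,\alpha_h+\theta_h(\sigma_h)+n_h\}$. This amounts to replacing $\theta_h$ by $\theta_h(\sigma_h)$.

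The main obstacle is the posterior of $L'=N-\ell$ given $\bm{W}$. The marginal law of $\bm{W}$ is a pEFPF of the form \eqref{eq:pEFPF} with $\theta_q(\sigma_q)$ in place of $\theta_q$. Hence, given $\lambda$, $L'\mid\bm{W},\lambda\sim\mathrm{Poisson}\{\lambda p_0(\bm{\alpha},\bm{\theta}(\bm{\sigma}),\bm{n})\}$. The posterior of $\lambda$ given $\bm{W}$ is proportional to $\lambda^{\ell}e^{-\lambda p_1(\bm{\alpha},\bm{\theta}(\bm{\sigma}),\bm{n})}$ times the Gamma$(r_0,r_0/\nu_0)$ density, so it is Gamma with shape $r_0+\ell$ and rate $r_0/\nu_0+p_1(\bm{\alpha},\bm{\theta}(\bm{\sigma}),\bm{n})$. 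Mixing the Poisson over this Gamma gives a negative binomial with size $r_0+\ell$ and mean $\mu_0^W$, matching Proposition~\ref{prop:sm_nbpost1}. Lemma~\ref{lemma:sm_nbthinning} then concludes as before.
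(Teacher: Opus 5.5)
Your proposal is correct and follows the paper's proof: keep the four-group binomial decomposition from the proof of Theorem~4 conditionally on the number of unseen species (that step does not depend on the law of $N$), then integrate out $N'$ in the single term that involves it, using Lemma~\ref{lemma:sm_nbthinning} in place of Lemma~\ref{lemma:sm_thinning}. Your Gamma--Poisson derivation of $L' \mid \bm{W} \sim \mathrm{NegBinomial}(r_0+\ell,\mu_0^{W})$, obtained from the pEFPF of $\bm{W}$ with $\bm{\theta}(\bm{\sigma})$ in place of $\bm{\theta}$, fills in a step that the paper only asserts by substitution.
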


\begin{proof}
Identical to the proof of Theorem 4: the four groups of species and the corresponding success probabilities are unaffected by the law of $N$, and only the first term, which involves the $N'$ species unrecorded everywhere, requires integrating out $N'$. Since $N' \mid \bm{Y} \sim \mathrm{NegBinomial}(r_0+k,\mu_0)$ by Proposition \ref{prop:sm_nbpost}, Lemma \ref{lemma:sm_nbthinning} replaces Lemma \ref{lemma:sm_thinning} in that step. The latent counterpart follows in the same way, with $\bm{\theta}$ replaced by $\bm{\theta}(\bm{\sigma})$ and $(k, k_q, k_r, c_{q,r}, k_{q,r}, N', \mu_0)$ replaced by $(\ell, \ell_q, \ell_r, u_{q,r}, \ell_{q,r}, L', \mu_0^{W})$.
\end{proof}

\subsection{Derivation of the pEFPF}

In this section, we obtain the analytical formulation for the marginal distribution of the proposed \texttt{MOSAIC} model under the assumption $N\sim \mathrm{NegBinomial}(r_0, \nu_0)$. Let $\pi_n(\bm{m};r_0, \nu_0, \bm{\theta},\bm{\alpha}) = \pr(\bm{M} = \bm{m}, K_{\bm{n}} = k \mid r_0, \nu_0, \bm{\theta},\bm{\alpha})$, then adapting the results presented in Section \ref{sec:supp_marginal}, the pEFPF is
\begin{equation}
\label{eq:sm_marginal_negbin}
\begin{split}
\pi_n(\bm{m};r_0, \nu_0, \bm{\theta},\bm{\alpha}) = \; & \binom{k+r_0-1}{k} \left(\frac{r_0}{\nu_0+r_0}\right)^{r_0}\left\{1-\left(\frac{\nu_0}{\nu_0+r_0}\right) \prod_{q=1}^Q \frac{(\theta_q+\alpha_q)_{n_q}}{(\theta_q)_{n_q}}\right\}^{-k-r_0} \times \\
& \times \left(\frac{\nu_0}{\nu_0+r_0}\right)^k\prod_{q=1}^Q \left\{\frac{-\alpha_q}{(\theta_q)_{n_q}}\right\}^k \prod_{j=1}^k (1-\alpha_q)_{m_{jq}-1} (\alpha_q + \theta_q)_{n_q - m_{jq}}.
\end{split}
\end{equation}

\begin{proof}
Adapting the calculation in Section \ref{sec:supp_marginal}, given the following expression
\begin{equation*}
    \pi_n(\bm{m}; N, \bm{\alpha}, \bm{\theta}) = \binom{N}{k} \prod_{q=1}^Q \left\{ \frac{(\theta_q+\alpha_q)_{n_q}}{(\theta_q)_{n_q}}  \right\}^N \left\{ \frac{-\alpha_q}{(\theta_q+\alpha_q)_{n_q}}  \right\}^k \prod_{j=1}^k (\theta_q+\alpha_q)_{n_q-m_{jq}}(1-\alpha_q)_{m_{jq}-1},
\end{equation*}
we need to marginalize with respect to $N \sim \mathrm{NegBinomial}(r_0, \nu_0)$, hence
\begin{align*}
    \pi_n(\bm{m}; r_0, \nu_0, \bm{\alpha}, \bm{\theta}) 
    &= \sum_{N \ge k} \binom{N+r_0-1}{N} \left(\frac{\nu_0}{\nu_0+r_0}\right)^{N} \left(\frac{r_0}{r_0+\nu_0}\right)^{r_0} \binom{N}{k} \prod_{q=1}^Q \left\{ \frac{(\theta_q+\alpha_q)_{n_q}}{(\theta_q)_{n_q}}  \right\}^N \times \\
    &\quad \times \left\{ \frac{-\alpha_q}{(\theta_q+\alpha_q)_{n_q}}  \right\}^k\prod_{j=1}^k (\theta_q+\alpha_q)_{n_q-m_{jq}}(1-\alpha_q)_{m_{jq}-1}\\
    &=  \frac{1}{k!(r_0-1)!}\left[ \prod_{q=1}^Q \left\{\frac{-\alpha_q}{(\theta_q+\alpha_q)_{n_q}}  \right\}^k \prod_{j=1}^k (\theta_q+\alpha_q)_{n_q-m_{jq}}(1-\alpha_q)_{m_{jq}-1} \right] \times \\
    &\quad \times \left(\frac{r_0}{r_0+\nu_0}\right)^{r_0} \sum_{N \ge k} \frac{(N+r_0-1)!}{(N-k)!}\left(\frac{\nu_0}{\nu_0+r_0}\right)^N \prod_{q=1}^Q \left\{ \frac{(\theta_q+\alpha_q)_{n_q}}{(\theta_q)_{n_q}}  \right\}^N\\
    &= \frac{1}{k!(r_0-1)!}\left[ \prod_{q=1}^Q \left\{\frac{-\alpha_q}{(\theta_q+\alpha_q)_{n_q}}  \right\}^k \prod_{j=1}^k (\theta_q+\alpha_q)_{n_q-m_{jq}}(1-\alpha_q)_{m_{jq}-1} \right] \times \\
    &\quad \times \left(\frac{r_0}{r_0+\nu_0}\right)^{r_0}\sum_{N \ge 0} \frac{(N+k+r_0-1)!}{N!}\left(\frac{\nu_0}{\nu_0+r_0}\right)^{N+k} \prod_{q=1}^Q \left\{ \frac{(\theta_q+\alpha_q)_{n_q}}{(\theta_q)_{n_q}}  \right\}^{N+k}.
\end{align*}
Since $-\alpha_q>0$ and $\alpha_q+\theta_q>0$, we have $0<\nu_0/(\nu_0+r_0)\prod_{q=1}^Q(\theta_q+\alpha_q)_{n_q}/(\theta_q)_{n_q}<1$, thus we recognize the probability mass function of a negative binomial and the latter equation simplifies to
\begin{align*}
    \pi_n(\bm{m}; r_0, \nu_0, &\bm{\alpha}, \bm{\theta}) 
    = \left[ \prod_{q=1}^Q \left\{\frac{-\alpha_q}{(\theta_q+\alpha_q)_{n_q}}  \right\}^k \prod_{j=1}^k (\theta_q+\alpha_q)_{n_q-m_{jq}}(1-\alpha_q)_{m_{jq}-1} \right] \frac{(k+r_0-1)!}{k!(r_0-1)!} \\
    &\times \left(\frac{r_0}{r_0+\nu_0}\right)^{r_0} \left(\frac{\nu_0}{\nu_0+r_0}\right)^k \prod_{q=1}^Q \left\{ \frac{(\theta_q+\alpha_q)_{n_q}}{(\theta_q)_{n_q}}  \right\}^{k} \left\{ 1-\frac{\nu_0}{\nu_0+r_0} \prod_{q=1}^Q\frac{(\theta_q+\alpha_q)_{n_q}}{(\theta_q)_{n_q}}  \right\}^{-k-r_0}.
\end{align*}
\end{proof}

\section{Additional results for simulation study}

\subsection{Additional experiments}

To assess the performance of the \texttt{MOSAIC} model in predicting $K_{q,s_q}^{(n_q)}$ as a function of the number of sites, we simulate data under the same settings as scenario A in Section 4 of the main paper, considering different values for $Q$, namely $Q \in \{5,8,10,15,20,25,30\}$. Figure~\ref{fig:QmulMSE} shows the predictive performance compared with the exchangeable negative binomial mixture of beta Bernoulli. As $Q$ increases, the improved predictive performance of \texttt{MOSAIC} relative to analyzing data from each site separately becomes increasingly apparent.

\begin{figure}[t] 
    \centering
    \includegraphics[width=0.76\textwidth]{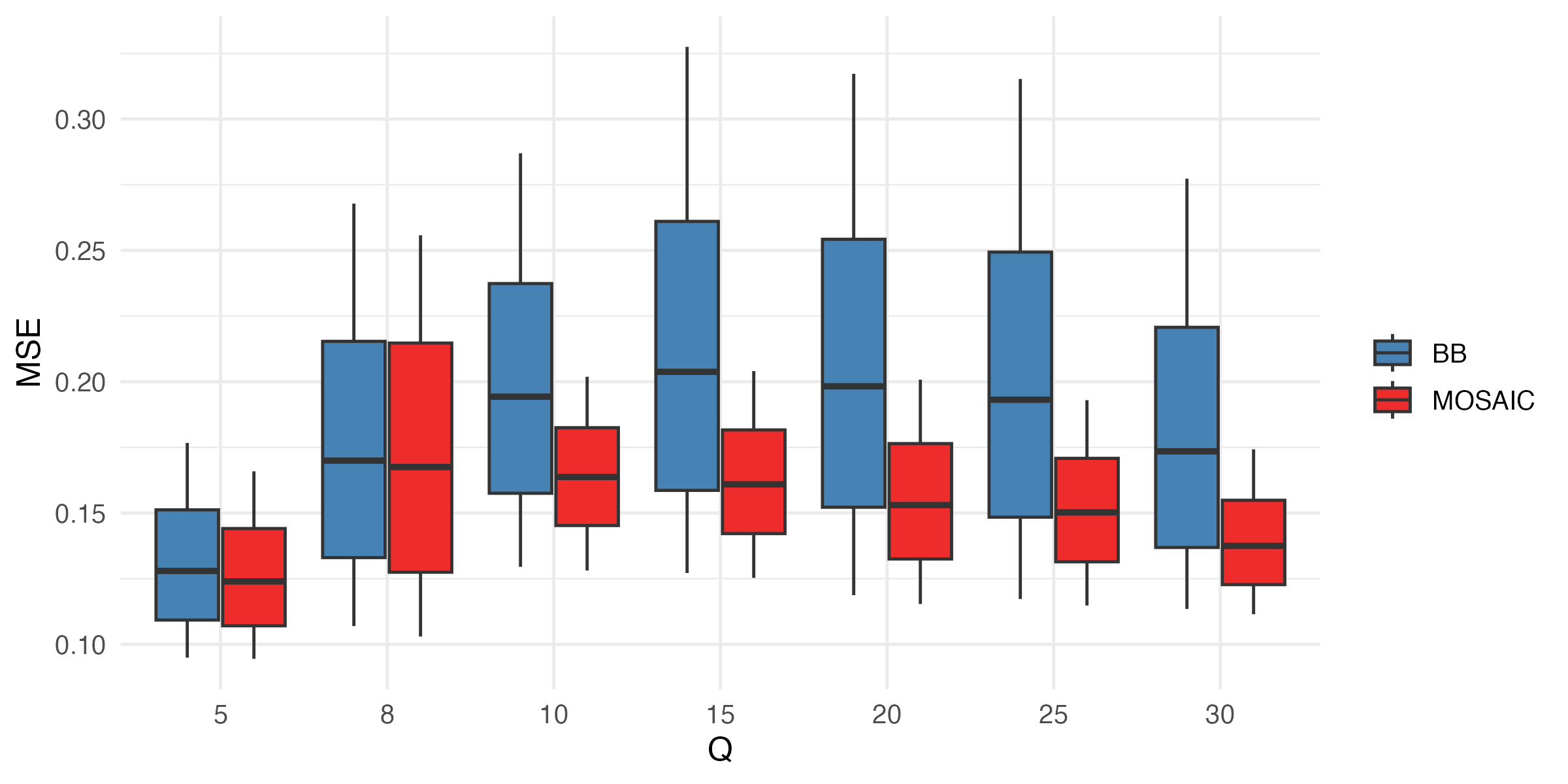}
    \caption{Average MSE in predicting $K_{q,s_q}^{(n_q)}$, averaged over $l=1,\dots,s_q$, with $s_q=10$, for the \texttt{MOSAIC} model and the exchangeable negative binomial mixture of beta-Bernoulli models (\texttt{BB}) across different values of $Q$.  }
    \label{fig:QmulMSE} 
\end{figure}

To assess whether Bayesian posterior distributions are well calibrated, we evaluate the frequentist coverage of posterior credible intervals for the regression coefficients $\bm{\zeta}$ and dispersion parameters $\bm{\phi}$. These parameters are directly interpretable in the \texttt{MOSAIC} framework - $\bm{\zeta}$ governs the mean structure, while $\bm{\phi}$ controls site-level dispersion - making reliable uncertainty quantification essential. We simulate the data under the $\texttt{MOSAIC}$ model using the same parameter settings as in the main paper with $Q=15$ sites, but consider two different values for the mean of the negative binomial distribution from which  $N$ is drawn, namely $3000$ and $5000$. We generated 500 datasets and for each replicate compute $95\%$ posterior credible intervals for each component of $\bm{\zeta}$ and $\bm{\phi}$. Coverage is then estimated as the proportion of times the true parameter values are contained within these intervals.
The results, reported in Table~\ref{tab:coverage}, show coverage close to the nominal level for both parameters, with moderate variability across components.

\begin{table}[t!]
\centering
\caption{Coverage probabilities for $\bm{\zeta}$ and $\bm{\phi}$ under two values of $\nu_0$.}
\vspace{6pt}

\begin{tabular}{c|ccc}
\hline
Scenario & $\zeta_1$ & $\zeta_2$ & $\zeta_3$ \\
\hline
$\nu_0=3000$ & 0.93 & 0.95 & 0.95 \\
$\nu_0=5000$ & 0.93 & 0.95 & 0.96 \\
\hline
\end{tabular}

\vspace{10pt}

\small
\begin{tabular}{c|ccccccccccccccc}
\hline
Scenario 
& $\phi_1$ & $\phi_2$ & $\phi_3$ & $\phi_4$ & $\phi_5$ 
& $\phi_6$ & $\phi_7$ & $\phi_8$ & $\phi_9$ & $\phi_{10}$ 
& $\phi_{11}$ & $\phi_{12}$ & $\phi_{13}$ & $\phi_{14}$ & $\phi_{15}$ \\
\hline
$\nu_0=3000$ 
& 0.95 & 0.94 & 0.90 & 0.96 & 0.93 
& 0.93 & 0.95 & 0.93 & 0.94 & 0.94 
& 0.93 & 0.94 & 0.95 & 0.94 & 0.96 \\
$\nu_0=5000$ 
& 0.95 & 0.94 & 0.91 & 0.95 & 0.93 
& 0.92 & 0.94 & 0.94 & 0.94 & 0.94 
& 0.95 & 0.95 & 0.94 & 0.94 & 0.94 \\
\hline
\end{tabular}
\normalsize

\label{tab:coverage}
\end{table}

\subsection{Additional plots}

\begin{figure}[H] 
    \centering
    \includegraphics[width=0.8\textwidth]{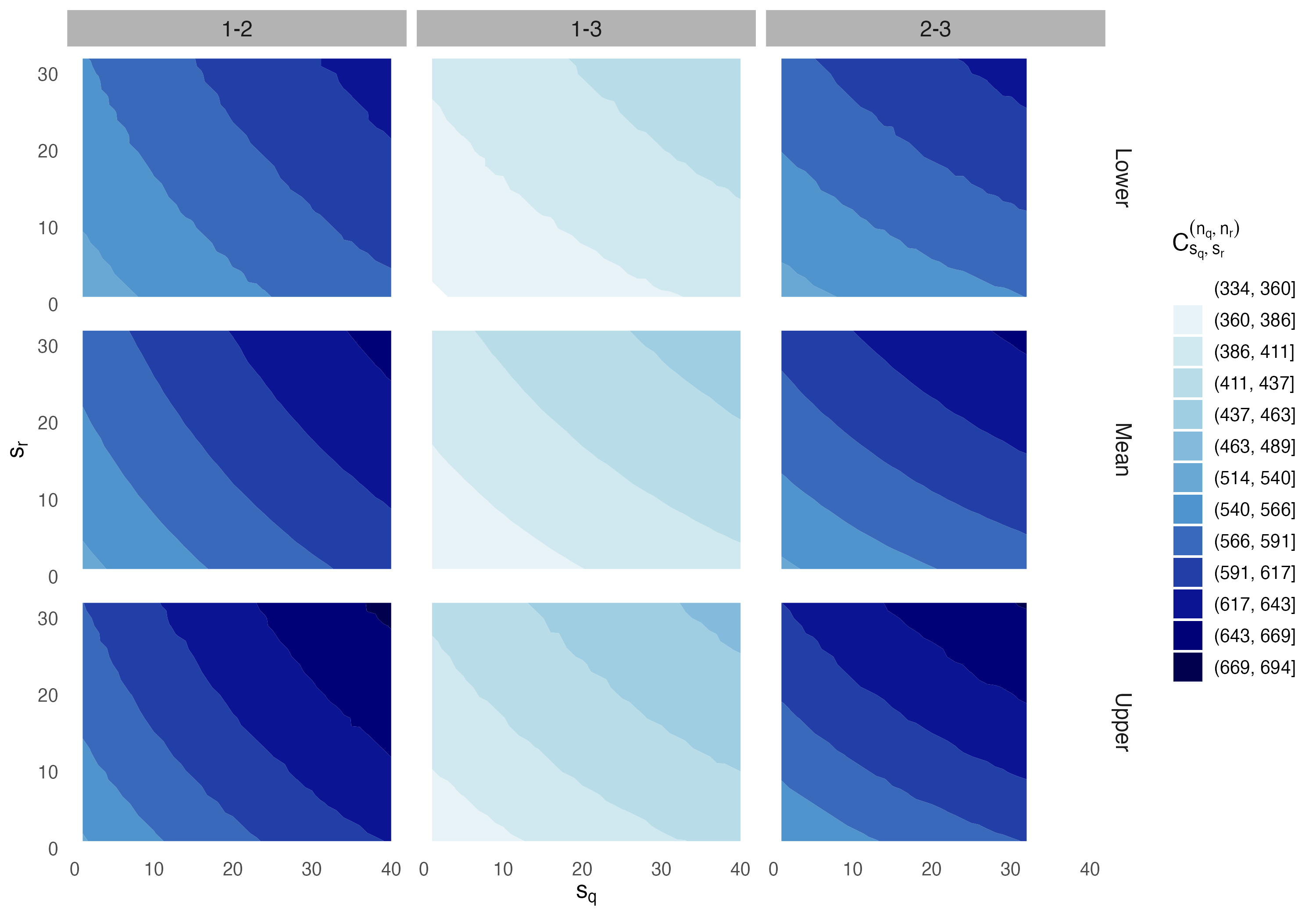}
    \caption{Posterior summaries of the rarefaction surfaces for  shared species across site pairs, showing the upper bounds of the 95\% credible intervals (bottom panels), posterior means (middle panels), and lower bounds (top panels).}
    \label{fig:supp_surfUQ}
\end{figure}

\section{Additional details and results on fungal biodiversity application}

\subsection{Additional details on fungi data}

\begin{figure}[H] \label{fig:map_site}
    \centering
    \includegraphics[width=0.95\textwidth]{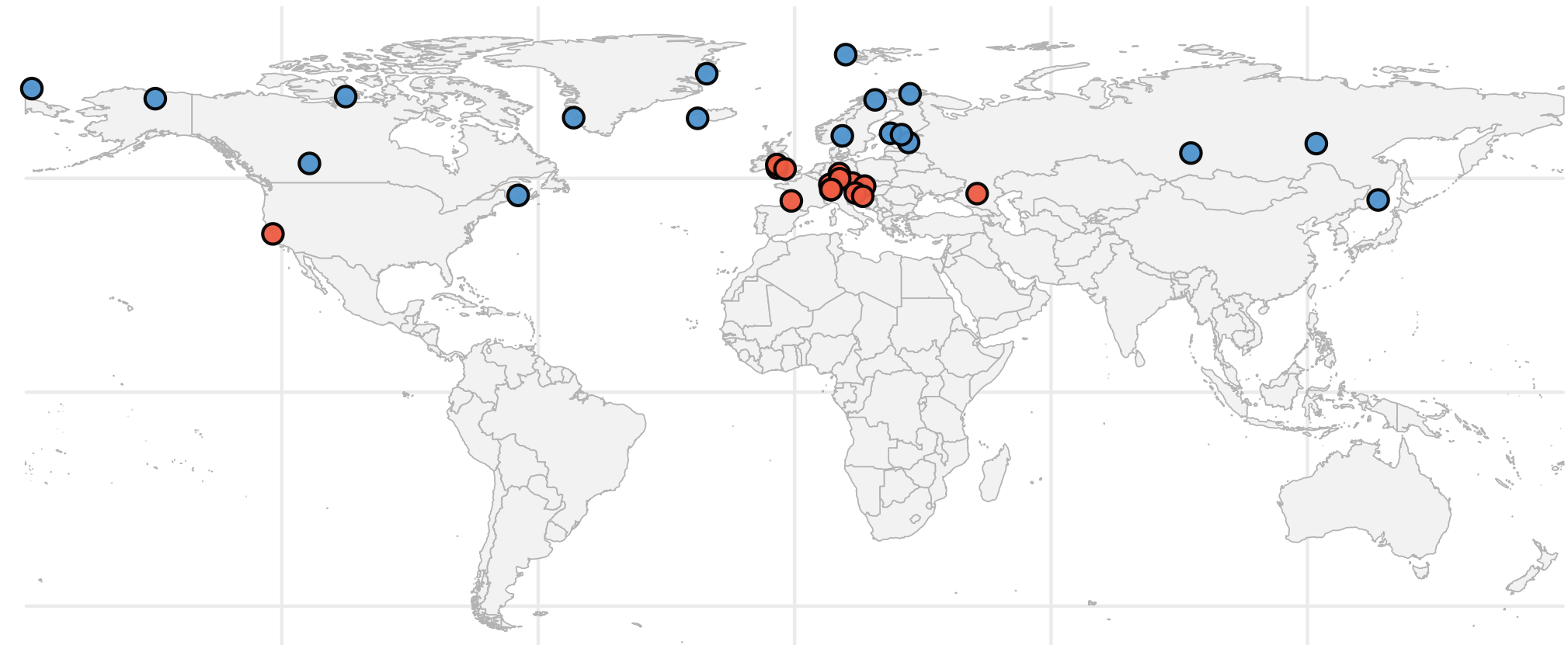}
    \caption{Map of sampling locations for the fungi data analysis colored by climatic zones.}
\end{figure}

\begin{table}[H]
\centering
\footnotesize
\caption{Labels of sampling locations and sample sizes for the fungi data analysis.}
\begin{tabular}{ccc|ccc}
\multicolumn{3}{c|}{\textbf{Temperate}} & \multicolumn{3}{c}{\textbf{Polar-Continental}} \\
Site & Name & $n_q$ & Site & Name & $n_q$ \\
\hline
ABE & Aberystwyth (Wales)      &  17 & ABI & Abisko (Sweden)             & 36 \\
BAN & Bangor (Wales)           & 17 & CAM & Sandger\o{}i (Iceland)      & 12 \\
BAV & Waldh\"auser (Germany)   & 103 & CHA & Cambridge Bay (Canada)      & 18 \\
CBN & Urdorf (Switzerland)     & 57 & EDM & Edmonton (Canada)           & 42\\
EUR & Wien (Austria)           & 38 & FTS & Fetsund (Norway)            & 62\\
HAI & Hainich (Germany)        & 93 & IGL & Memramcook (Canada)         & 16 \\
INR & Bordeaux (France)        & 45& KEV & Kevon (Finland)             & 33\\
KLA & Klagenfurt (Austria)     & 35 & NUU & Nuuk (Greenland)            & 27\\
ROS & Rostov-on-Don (Russia)   & 32  & PEC & Krasnoyarsk (Russia)        & 13\\
SAN & Santa Cruz (US)          & 34 & PRI & Wranger Island (Russia)    & 51\\
SCH & Schwarzwald (Germany)    & 40 & PRM & Dikimdzha (Russia)          & 13\\
STE & Steigerwald (Germany)    & 72 & SVA & Svalbard (Norway)             & 24\\
WOR & Worcester (England)      & 71 & TAR & Tartu (Estonia)             & 78 \\
ZAG & Zagreb (Croatia)         & 69 & TOO & Toolik (Alaska)             &55  \\
ZRC & Zurich I (Switzerland)   & 40 & TUR & Turku (Finland)             & 39\\
ZUR & Zurich II (Switzerland)  & 70 & VII & Viikki (Finland)            & 34\\
    &                          & & YAK & Terney (Russia)             & 24\\
    &                          & & ZAC & Zackenberg (Greenland)      & 15\\
\hline
\end{tabular}
\label{tab:sites}
\end{table}

\subsection{Additional results on fungal biodiversity application} \label{sec:supp_resFungi}

We provide rarefaction curves for the latent total number of shared species $U_{n_q,n_r}$ for pairs of sites with the same sample sizes. Figure \ref{fig:sm_rarSharedLat} shows results for three pairs under three values of $\sigma \in \{0.1, 0.4, 0.95\}$ for all sites. The results are analogous to those for site-specific rarefaction curves in the main paper: lower detectability yields notably higher expected shared species counts.

\begin{figure}[H] 
    \centering
    \includegraphics[width=0.99\textwidth]{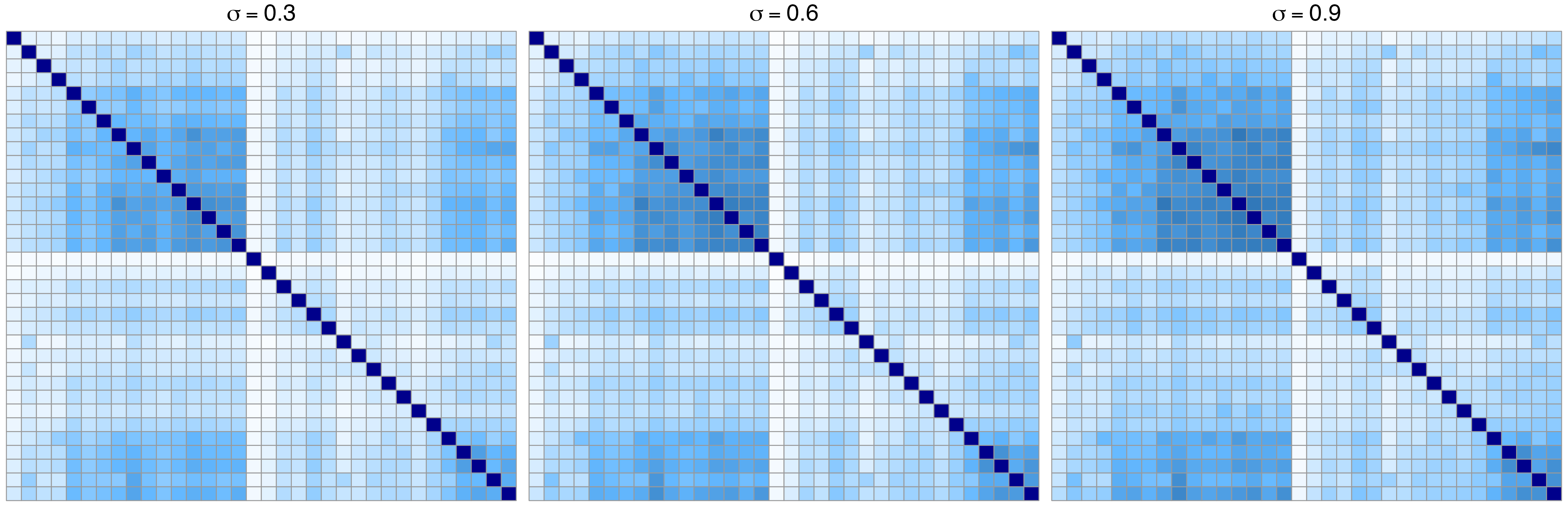}
    \caption{Posterior mean of proposed $\beta$-similarity for all sites grouped by climatic zones for different values of $\sigma_{q}=\sigma \in \{0.3, 0.6, 0.9\}$ for all $q=1,\dots,34$.}
    \label{fig:sm_betaSigma} 
\end{figure}

\begin{figure}[H] 
    \centering
    \includegraphics[width=0.99\textwidth]{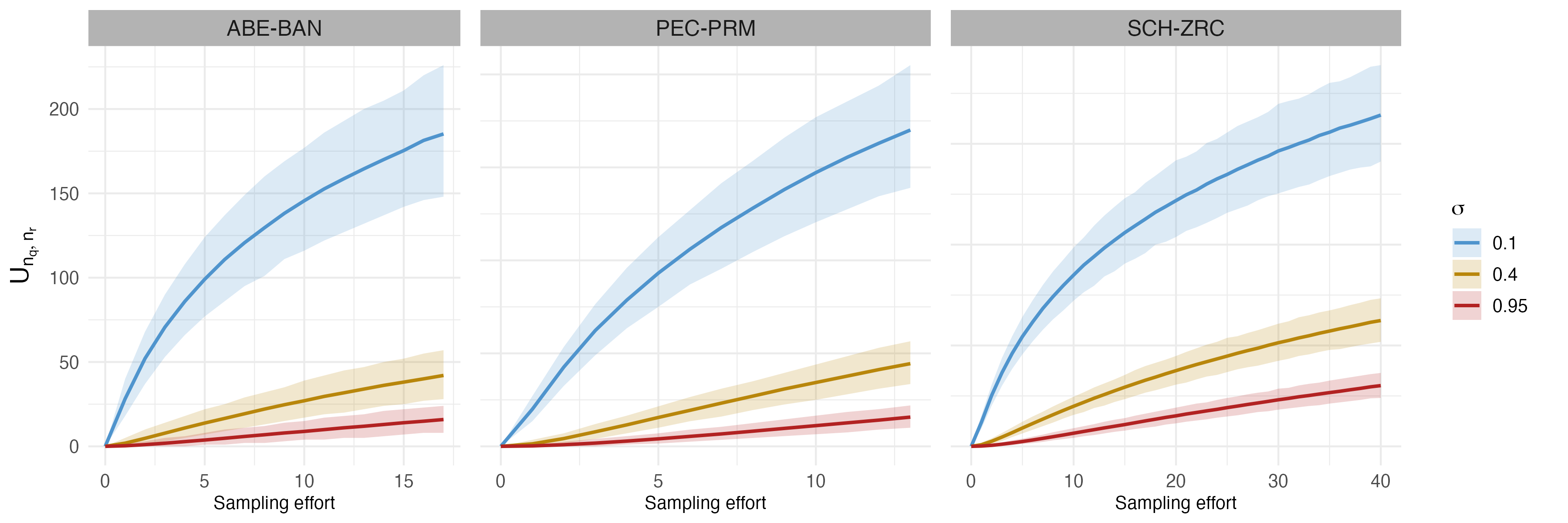}
    \caption{Posterior mean and 95\% credible intervals for simplified rarefaction curves for the latent total number of shared species  $U_{n_q,n_r}$ for different values of $\sigma$. }
    \label{fig:sm_rarSharedLat} 
\end{figure}

\subsection{Additional experiments for fungal biodiversity application} \label{sec:sm_fungioos}

We perform additional experiments to evaluate the out of sample predictive performance of the \texttt{MOSAIC} model for the number of shared species on the fungi data. We split the data into training and test sets, using $70\%$ of the observations in each site for training and the remaining for test, so that each training set size is proportional to $n_q$. We fit $\texttt{MOSAIC}$ using the same hyperparameters as in the main paper on the training data.
Based on the model estimates, we then predict the number of shared species for pairs of sites in the test set, that is, $C_{s_q,s_r}^{(n_q,n_r)}$. Figure \ref{fig:fungi_oos} displays the rarefaction surfaces for shared species on the test set along with the corresponding model-based predictions for three illustrative pairs of sites: a pair of temperate locations Hainich (HAI) - Santa Cruz (SAN); a pair of polar-continental locations Viikki (VII) - Terney (YAK); and a cross-zone pair Waldhäuser (BAV) - Fetsund (FTS). The test set sizes, summarized in Table S3, vary across sites in proportion to their sample sizes, allowing evaluation of predictive performance across a range of sampling efforts.
Viikki, a district in northeastern Helsinki, and Terney, a remote coastal site in the Sikhote-Alin region of the Russian Far East, are the pair with the fewest shared species.
While both sites lie at similar latitudes, they differ markedly in vegetation and degree of human influence: Terney is characterized by mixed forests dominated by Korean pine and Mongolian oak under a climate influenced by the Sea of Japan, whereas Viikki is located within the urban research campus of the University of Helsinki.
In contrast, the two European locations - Waldhäuser and Fetsund - share the highest number of species in common.
The posterior mean closely follows the empirical surfaces for all pairs, indicating strong agreement between observed and predicted values.

\begin{table}[t]
\centering
\begin{tabular}{c|cccccc}
\hline
 & HAI & SAN & BAV & FTS & VII & YAK \\
\hline
$n_q$ & 65 & 23 & 72 & 43& 23& 16\\
$s_q$ & 28 & 11 &  31& 19& 11& 8\\
\hline
\end{tabular}
\label{tab:ns_oosfungi1}
\caption{Training ($n_q$) and test ($s_q$) sample sizes for each site.}

\end{table}

\begin{figure}[H] 
    \centering
    \includegraphics[width=0.99\textwidth]{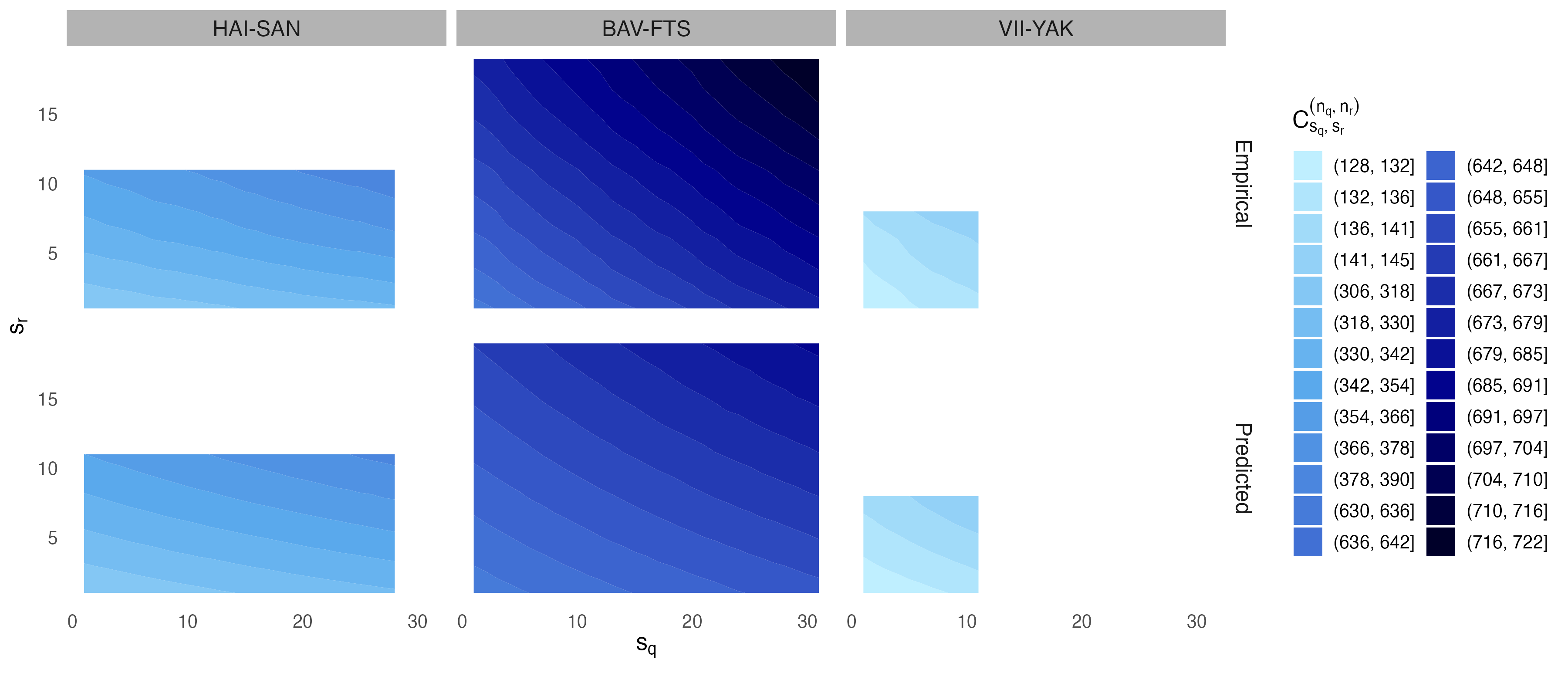}
    \caption{Empirical rarefaction surfaces for shared species in the test set, together with the corresponding posterior mean estimates of $C_{s_q,s_r}^{(n_q,n_r)}$. }
    \label{fig:fungi_oos} 
\end{figure}

\end{document}